\documentclass[11pt]{article}
\usepackage[left=1in,top=1in,right=1in,bottom=1in,head=.1in,nofoot]{geometry}

\setlength{\footskip}{24pt} % Page number/footer spacing
\usepackage{setspace,url,bm,amsmath} % For double-spacing, URL font, math symbols

\usepackage{titlesec} % Section header formatting
\titlelabel{\thetitle.\quad} % Section header formatting
%\titleformat*{\section}{\bf\large\center\uppercase} % Section header formatting

\usepackage[numbers]{natbib}
\usepackage{graphicx} % Graphics scaling
\usepackage{bbm}
\usepackage{latexsym}
\usepackage{caption}
\usepackage[margin=20pt]{subcaption}
\usepackage{hyperref}
\usepackage[]{algorithm2e}
\usepackage{booktabs}

\usepackage{mathtools}

\usepackage[table]{xcolor}

\newcommand{\GG}[1]{}

\usepackage{amsthm}
\usepackage{amssymb}
\usepackage{amsmath}
\usepackage{color}

\usepackage{booktabs}

\usepackage{comment}
\theoremstyle{definition}
\newtheorem{assumption}{Assumption}
\newtheorem*{theorem*}{Theorem}
\newtheorem*{rmk*}{Remark}

\newtheorem{theorem}{Theorem}
\newtheorem{proposition}{Proposition}
\newtheorem{lemma}{Lemma}

\newtheorem{definition}{Definition}
\newtheorem{corollary}{Corollary}
\newtheorem*{corollary*}{Corollary}
\newtheorem{condition}{Condition}

\usepackage{mathrsfs}

\usepackage{natbib} % ASA citation style
\bibpunct{(}{)}{;}{a}{}{,} % ASA citation style

%\renewcommand{\refname}{REFERENCES} % Capitalize bibliography section header
%\usepackage{etoolbox} % Bibliography underfull/overfull box fix
%\apptocmd{\sloppy}{\hbadness 10000\relax}{}{} % Bibliography underfull/overfull box fix

\usepackage{color}
\usepackage{listings}

\usepackage[sc]{mathpazo}

\DeclareMathOperator*{\argmax}{arg\,max}

\newcommand{\lxr}{\color{red}}

\def\Var{\text{Var}}
\def\Cov{\text{Cov}}

\def\converged{\stackrel{d}{\longrightarrow}}

\def\convergep{\stackrel{p}{\longrightarrow}}
\def\converged{\stackrel{d}{\longrightarrow}}

\def\yscale{c}
\def\yshift{\xi_{[a]}}

\def\Yhhrrprime{\overline{Y_{[a]}(r)Y_{[a]}(r')}}
\def\pr{\text{pr}}

\DeclareFontFamily{U}{mathx}{\hyphenchar\font45}
\DeclareFontShape{U}{mathx}{m}{n}{
      <5> <6> <7> <8> <9> <10>
      <10.95> <12> <14.4> <17.28> <20.74> <24.88>
      mathx10
      }{}
\DeclareSymbolFont{mathx}{U}{mathx}{m}{n}
\DeclareFontSubstitution{U}{mathx}{m}{n}
\DeclareMathAccent{\widecheck}{0}{mathx}{"71}
\DeclareMathAccent{\wideparen}{0}{mathx}{"75}

\RequirePackage[normalem]{ulem}
 %DIF PREAMBLE

\allowdisplaybreaks

\begin{document}
%\doublespacing
\onehalfspacing 
\title{\bf Randomization Inference for Peer Effects}

\author{
Xinran Li, Peng Ding, Qian Lin, Dawei Yang and Jun S. Liu
\footnote{
Xinran Li is Doctoral Graduate, Department of Statistics, Harvard University,  Cambridge, MA 02138 (E-mail: \url{xinranli@fas.harvard.edu}). Peng Ding is Assistant Professor, Department of Statistics, University of California, Berkeley, CA 94720 (E-mail: \url{pengdingpku@berkeley.edu}).
Qian Lin is Assistant Professor, 
Center for Statistical Science, 
Center for Statistical Science, Department of Industrial Engineering, Tsinghua University, Beijing, 100084, P. R. China (E-mail: \url{qianlin@tsinghua.edu.cn}). 
Dawei Yang 
is Research Associate, Bureau of Personnel of Chinese Academy of Sciences \& School of Education of Peking University, Beijing, 100871, P. R. China (Email: \url{yangdw@cashq.ac.cn} or \url{yangdawei@pku.edu.cn}).
Jun S. Liu is Professor, Department of Statistics, Harvard University, Cambridge, MA 02138 (E-mail: \url{jliu@stat.harvard.edu}). 
}
}

\date{}
\maketitle

\begin{abstract}
Many previous causal inference studies require no interference, that is, the potential outcomes of a unit do not depend on the treatments of other units. However, this no-interference assumption becomes unreasonable when a unit interacts with other units in the same group or cluster. In a motivating application, a university in China admits students through two channels: the college entrance exam (also known as Gaokao) and recommendation (often based on Olympiads in various subjects). The university randomly assigns students to dorms, each of which hosts four students. Students within the same dorm live together and have extensive interactions. Therefore, it is likely that peer effects exist and the no-interference assumption does not hold. It is important to understand peer effects, because they give useful guidance for future roommate assignment to improve the performance of students. We define peer effects using potential outcomes. We then propose a randomization-based inference framework to study peer effects with arbitrary numbers of peers and peer types. Our inferential procedure does not assume any parametric model on the outcome distribution. Our analysis gives useful practical guidance for policy makers of the university in China.

\medskip
\noindent {\it Key Words}: Causal inference; Design-based inference; Grade point average (GPA); Interference; Optimal treatment assignment; Spillover effect
\end{abstract}

 \newpage 
\section{Introduction}

\subsection{Causal inference, interference, and peer effects}

The classical potential outcomes framework \citep{Neyman:1923} assumes no interference among experimental units \citep{cox1958planning}, i.e., the potential outcomes of a unit are functions of its own treatment but not others' treatments. This constitutes an important part of \citet{Rubin:1980}'s Stable Unit Treatment Value Assumption (SUTVA). In some experiments, interference is a nuisance, and researchers try to avoid it by isolating units. Interference, however, is unavoidable in many studies when units have interactions with each other. Examples include vaccine trials for infectious diseases in epidemiology \citep{halloran1991study, halloran1995causal, perez2014assessing}, group-randomized trials in education \citep{hong2006evaluating, vanderweele2013mediation}, and
interventions on networks in sociology \citep{an2011models, vanderweele2013social}, 
political science \citep{nickerson2008voting, ichino2012deterring, bowers2013reasoning} and 
economics \citep{manski1993identification, Sacerdote01052001, miguel2004worms, graham2010measuring, goldsmith2013social, arpino2016assessing}. 
\citet{ogburn2014causal} discussed different types of interference.
\citet{forastiere2016identification} showed that ignoring interference can lead to biased inferences. It is important to study the pattern of interference in some applications, because it is of scientific interest and useful for decision making. For example, \citet{Sacerdote01052001} found significant peer effects in student outcomes (e.g., GPA and fraternity membership) among students living in the same dorm of Dartmouth College. Based on this, \citet{bhattacharya2009inferring} discussed the optimal peer assignment.

 \subsection{Motivating application in education}
 
Our motivation comes from a data set of a university in China. It contains a rich set of variables of the students: family background, the ways they were admitted, roommates' information, GPAs, etc.

The university admits students through two primary channels: the college entrance exam (also known as Gaokao) and recommendation. Gaokao is an annual test in China to assess students' knowledge in various subjects. Every university has its own minimal test score threshold to admit students. Students from Gaokao study all subjects and often have broader knowledge. Students from recommendation do not need to take Gaokao. They win awards in national or international Olympiads in mathematics, physics, chemistry, biology, or informatics. They concentrate on a certain subject for the corresponding Olympiad. 
They may even take some college courses on that subject during their high school years. 
Most of them choose majors related to the subject they focused on in high schools. Students admitted through these two channels have different training and thus different attributes. Students from recommendation generally perform better in GPAs than students from Gaokao.

After entering the university, students usually live in four-person rooms for four years. They often study together and interact with each other. We know that two types of students, from Gaokao and recommendation, have different training in high schools. It is then natural to ask the following questions. Is it beneficial for students from Gaokao to live with students from recommendation, or vice versa? Is there an optimal combination of roommate types for the performance of a certain student? Is there an optimal roommate assignment to maximize the performance of all students? These questions are all about peer effects among students.

\subsection{Literature review and contribution}
 
With interference, the potential outcomes of a unit can depend on its own treatment and others' treatments in various ways. Therefore, causal inference with interference has different mathematical forms. Like many other causal inference problems, there are at least two inferential frameworks for causal inference with interference: the Fisherian and Neymanian perspectives. Under the Fisherian view, \citet{rosenbaum2007interference}, \citet{luo2012inference}, \citet{aronow2012general}, \citet{bowers2013reasoning}, \citet{rigdon2015exact}, \citet{athey2015exact} and \citet{basse2017exact} proposed exact randomization tests for detecting causal effects with interference, and constructed confidence intervals for certain causal parameters by inverting tests. \citet{choi2016estimation} discussed a related approach under the monotone treatment effect assumption. Under the Neymanian view, \citet{hudgens2008toward} discussed point and interval estimation for several causal estimands with interference under two-stage randomized experiments on both the group and individual levels. \citet{liu2014large} then established the large sample theory for these estimators. \citet{aronow2015estimating}, \citet{bassefeller2017}, and \citet{savje2017average} extended the discussion to other general contexts. The Fisherian and Neymanian views are both randomization-based in the sense that the uncertainty in testing or estimation comes solely from the treatment assignment mechanism, and all the potential outcomes are fixed constants. When two-stage randomization is infeasible, we need certain unconfoundedness assumptions. \citet{tchetgen2012causal} proposed an inverse probability weighting estimator. \citet{perez2014assessing} applied this methodology to assess effects of cholera vaccination.  \citet{liu2016inverse} studied the theoretical properties. Other studies \citep{Sacerdote01052001, toulis2013estimation} relied on parametric modeling assumptions.

Our framework for peer effects furthers the literature in several ways. First, we define peer effects using potential outcomes. Unlike in previous work \citep[e.g.,][]{sobel2006randomized, hudgens2008toward}, our estimands do not involve averages over the treatment assignment. 
We separate the causal estimands from the treatment assignment.
As \citet{rubin2005causal} argued, the former are functions of the potential outcomes, and the latter induces randomness and governs the statistical inference. 
Second, previous works discussed external interventions with known networks, clusters, or groups.
Our hypothetical intervention is the roommate assignment in the motivating application. It forms a ``network'' among units, which further causes interference and peer effects. 
We explain the distinction between the two types of interference in detail in Section \ref{sec:distinction}.
Our setting is similar to \citet{Sacerdote01052001}'s. However, we formalize the problem using potential outcomes instead of linear models and allow for causal interpretations without imposing model assumptions. 
Third, we propose randomization-based point estimators, prove their asymptotic Normalities, and construct confidence intervals.
We further derive the optimal roommate assignment to maximize the performance of students. The inferential framework is Neymanian, similar to those of \citet{hudgens2008toward} and \citet{aronow2015estimating}. Fourth, we apply the new method to the data set from a university in China and find important policy implications. We relegate all the technical details to the Supplementary Material.

\section{Notation and framework for peer effects}\label{sec:notation_assump}

\subsection{Potential outcomes with peers}\label{sec::potentialoutcomes}
We consider an experiment with $n=m(K+1)$ units, where $m$ is the number of groups and $K+1$ is the size of each group. Each unit has $K$ peers in the same group. The group and peers correspond to room and roommates in our motivating application, where $K=3$ is the number of roommates for each student. Let $Z_i$ be the treatment assignment for unit $i$, which is a set consisting of the identity numbers of his/her $K$ peers, i.e., $Z_i=\{j: \text{units }j \text{ and } i \text{ are in the same group}\}$. In the motivating application,
$Z_i$ is a set consisting of three roommates of unit $i$. Let $Z=(Z_1,Z_2,\ldots,Z_n)$ be the treatment assignment for all units, and $\mathcal{Z}$ be the set of all possible values of the assignment $Z$. Let $Y_i(z)$ be the potential outcome of unit $i$ under treatment assignment $z=(z_1,\ldots, z_n)$. This potential outcome depends on treatment assignments of all other units. Let $A_i\in \{1,2,\ldots, H\}$ be the attribute or type of unit $i$. In the motivating application, $H=2$, and $A_i=1$ if unit $i$ is from Gaokao, and $A_i=2$ if unit $i$ is from recommendation. Under treatment assignment $z$, let $R_i(z_i)=\{A_j: j \in z_i\}$ be the set consisting of the attributes of unit $i$'s $K$ peers, and $G_i(z_i)=R_i(z_i)\cup \{A_i\}$ be the set consisting of the attributes of all units in the group that unit $i$ belongs to. 
We call $R_i(z_i)$ and $G_i(z_i)$ the peer attribute set and group attribute set. Both of them contain unordered but replicable elements. Therefore, $|R_i(z_i)| = K$ and $|G_i(z_i)| = K+1,$ where $|\cdot|$ denotes the cardinality of a set. In the motivating application,
if unit $i$  is from recommendation and has $2$ roommates from Gaokao and $1$ from recommendation, then $R_i(z_i)=\{1,1,2\}\equiv 112$ and $G_i(z_i)=\{1,1,2,2\}\equiv 1122$, where we use 112 and 1122 for notational simplicity. In this case, $R_i$ or $G_i$ has a one-to-one mapping to the number of students from Gaokao within the room of unit $i$.

Let $I(\cdot)$ be the indicator function. For unit $i$, $Y_i = \sum_{z\in \mathcal{Z}} I(Z=z)Y_i(z)$ is the observed outcome, and $R_i = \sum_{z_i} I(Z_i=z_i) R_i(z_i)$ is the observed peer attribute set. These summations are over all possible values of the treatment assignments for all units.

\subsection{Group-level SUTVA and exclusion-restriction-type assumptions}
Without further assumptions, the potential outcome $Y_i(z)$ depends on the treatments of all units. This makes statistical inference intractable. We invoke the following two assumptions to reduce the number of potential outcomes.

\begin{assumption}\label{asp:sutva_pe}
If $z_i = z'_i$, then $Y_i(z) = Y_i(z')$, for any two treatment assignments $(z, z')$ and any unit $i$.
\end{assumption}

Assumption \ref{asp:sutva_pe} states that if a unit's peers do not change, then its potential outcome will not change. 
This assumption requires no interference between groups but allows for interference within groups. Under Assumption \ref{asp:sutva_pe}, each unit's potential outcomes depend only on its peers in the same group. Therefore, we can write $Y_i(z)$ as $Y_i(z_i)$, a function of the peers of unit $i$. Assumption \ref{asp:sutva_pe} is a group-level SUTVA, which is similar to the ``partial interference'' assumption \citep{sobel2006randomized, hudgens2008toward}.

\begin{assumption}\label{asp:exres_pe}
If $R_i(z_i) = R_i(z'_i)$, then $Y_i(z_i)=Y_i(z_i')$, for any two treatment assignments $(z, z' ) $ and any unit $i$.
\end{assumption}

Assumption \ref{asp:exres_pe} states that if the treatment assignment does not affect the attributes of the peers of unit $i$, then it does not affect the outcome of unit $i$. Therefore, the potential outcomes of each unit depend only on its peers' attributes instead of its peers' identities. Assumption \ref{asp:exres_pe} is similar to ``anonymous interaction'' \citep{manski2013identification}. Assumption \ref{asp:exres_pe} implies that the peer attribute set of a unit is the {\it ultimate treatment} of interest. We are inferring the treatment effects of the peer attribute set. Previous works often invoked Assumption \ref{asp:exres_pe}, or a slightly weaker form, for inferring peer effects among college roommates. For example, in \cite{Sacerdote01052001}'s study from Dartmouth College, the ultimate treatment was peers' academic indices created by the admission office, and in \cite{rubinpeersmoking}'s study from Harvard College, the ultimate treatment was peers' smoking behaviors.

Both Assumptions \ref{asp:sutva_pe} and \ref{asp:exres_pe} are untestable based on the observed data from a single experiment. 
They are strong identifying assumptions. We will relax them in Section \ref{sec:discussion}.

Under Assumptions \ref{asp:sutva_pe} and \ref{asp:exres_pe}, $Y_i(z)$ simplifies to $Y_i(R_i(z_i))$, a function of the peer attribute set of unit $i$. Recall that $R_i(z_i)$ contains $K$ unordered but replicable elements from $\{1,2,\ldots, H\}$.
Let $\mathcal{R}$ be the set consisting of all possible values of $R_i(z_i)$. Potential outcome of unit $i$, $Y_i(z)$, simplifies to $Y_i(r)$ for some $r\in \mathcal{R}$. 
Then 
the potential outcome is $Y_i(z) = Y_i(R_i(z_i))=\sum_{r \in \mathcal{R}}I\{R_i(z_i)=r  \}Y_i(r)$, and 
the observed outcome is $Y_i = \sum_{r \in \mathcal{R}}  I(R_i = r) Y_i(r) $.
Therefore, we can view the elements in $\mathcal{R}$ as hypothetical treatments, with
$
|\mathcal{R}|=  \binom{K+H-1}{H-1} = \frac{(K+H-1)! } { (H-1)!K! }
$
possible values. 
In our motivating application,
$\mathcal{R}=
\{r_1, r_2, r_3, r_4\}
=\{111, 112, 122, 222\}$ and 
$|\mathcal{R}|=
\frac{(3+2-1)!}{(2-1)!3!}
=4.$

As a side note, motivated by the example of the university in China,
%Chinese university example, 
we consider the case with equal group sizes $K+1$. When groups have different sizes, we need to modify Assumption \ref{asp:exres_pe}. For example, we can assume that the potential outcomes of a unit depend on the proportions of his/her peers' attributes. The plausibility of this assumption depends on the context of the application, and we leave it to future work.

\subsection{Causal estimands for peer effects}

For units with attribute $1\leq a\leq H$, 
let $n_{[a]}$ and $w_{[a]} = n_{[a]}/n$ be the number and proportion, and
$\bar{Y}_{[a]}(r) = n_{[a]}^{-1} \sum_{i:A_i=a}Y_i(r) $ be the subgroup average potential outcome under treatment $r$. Let $\bar{Y}(r) = n^{-1} \sum_{i=1}^n Y_i(r)  $ be the average potential outcome for all units under treatment $r$. Therefore, $\bar{Y}(r)=\sum_{a=1}^H w_{[a]} \bar{Y}_{[a]}(r)$ is a weighted average of $\bar{Y}_{[a]}(r)$'s. 
Comparing treatments $r,r'\in \mathcal{R}$,
we define $\tau_{i}(r,r') = Y_i(r)-Y_i(r')$  as the individual peer effect, 
\begin{align}\label{eq:sub_ave_peer_effect}
\tau_{[a]}(r,r') = n_{[a]}^{-1}\sum_{i:A_i=a}\tau_{i}(r,r')
=
\bar{Y}_{[a]}(r) - \bar{Y}_{[a]}(r')
\end{align}
as the subgroup average peer effect for units with attribute $a$, and
\begin{align}\label{eq:ave_peer_effect}
\tau(r,r') = {n^{-1}}\sum_{i=1}^n \tau_{i}(r,r')
=
\bar{Y}(r)-\bar{Y}(r')
= \sum_{a=1}^H w_{[a]} \tau_{[a]} (r,r')
\end{align}
as the average peer effect for all units. 
We are interested in estimating the average peer effects $\tau_{[a]}(r,r')$ and $\tau(r,r')$. They are functions of the fixed potential outcomes and do not depend on the treatment assignment mechanism.

For ease of reading, we summarize the key notation in Table \ref{tab:notation}.  

\begin{table}[t]
\centering
\caption{Notation and explanations}\label{tab:notation}
\begin{tabular}{lll}
\toprule
notation & definition & meaning, properties or possible values \\
\midrule
$z_i$ & peer assignment of unit $i$ & a set of the identity numbers of his/her $K$ peers\\
$A_i$ &  unit $i$'s attribute & $A_i\in \{1,2,\ldots, H\}$\\
$n_{[a]}$ & number of units with attribute $a$&  $\sum_{a=1}^H n_{[a]} = n$
\\
$w_{[a]}$ & proportion of units with attribute $a$&   $\sum_{a=1}^H w_{[a]} = 1$ and $   0<w_{[a]}<1\quad  (a=1,\ldots, H)$
\\
$R_i(z_i)$ & unit $i$'s peer attribute set & a set of the attributes of unit $i$'s $K$ peers \\
$\mathcal{R}$ &  a set of all possible values of $R_i(z_i)$ & $|  \mathcal{R}|  = \binom{K+H-1}{H-1}$
\\
$G_i(z_i)$ & unit $i$'s group attribute set & a set of attributes of all units in unit $i$'s group\\
$\mathcal{G}$ & a set of all possible values of $G_i(z_i)$ & $\mathcal{G} = \{g_1, \ldots, g_T\}$ with $T=  \binom{K+H}{H-1}$
\\
$Y_i(z)$ &  unit $i$'s potential outcome &under original treatment $z \in \mathcal{Z}$\\
$Y_i(r)$ & unit $i$'s potential outcome& under ultimate treatment $r\in \mathcal{R}$ \\ 
\bottomrule
\end{tabular}
\end{table}

\subsection{Treatment assignment mechanism}\label{sec:treat_assign_mecha}
The treatment assignment mechanism is important for identifying and estimating peer effects. 
We consider treatment assignment mechanisms satisfying some symmetry conditions. First, units with the same attribute must have the same probability to receive all treatments. Second, pairs of units with the same pair of attributes must have the same probability to receive all pairs of treatments. Formally, we require that the treatment assignment mechanism satisfies the following two conditions.

\begin{assumption}\label{asmp:indist}
For any $r,r'\in \mathcal{R}$,
\begin{itemize}
\item[(a)] $\pr(R_i = r) = \pr(R_{j} = r),$ if $A_i = A_{j}$;
\item[(b)] $\pr (
R_{i} = r,R_{j} = {r}'
)  = \pr (
R_{k} = r,R_{q} = {r}'
),$ if $A_i=A_{k}$ and $A_j = A_{q}$ for $i\neq j, k \neq q.$
\end{itemize}
\end{assumption}

We will give two examples of treatment assignment mechanisms satisfying Assumption \ref{asmp:indist}. 

\subsubsection{Random partitioning}\label{sec:rand_part}

Under random partitioning, we randomly assign units to $m$ groups of size $K+1$, and all possible partitions of units have equal probability.  
To be more specific, 
if a treatment assignment 
$z$ is compatible with a partition of units into $m$ groups of size $K+1$, then 
$
\pr(Z=z) =
 m!\{(K+1)!\}^m / \{m(K+1)\}! ;
$
otherwise, $\pr(Z=z)=0$. 
This formula follows from counting all possible random partitions. 
To generate a random partition, we can randomly permute $n = m(K+1)$ units and divide them into $m$ groups of equal size $K+1$ sequentially.

Random partitioning, however, can result in unlucky realizations of the randomization. We may have too few units with attributes and treatments of interest. For illustration, we consider the motivating education example
with $8$ students, $5$ from Gaokao and $3$ from recommendation. Assume that we are interested in $\tau_{[1]}(r_2, r_3)$, the treatment effect of $r_2=112$ versus $r_3=122$ for students from Gaokao.
Under random partitioning, it is possible that no students from Gaokao 
receives treatment $r_2$ or $r_3$. In that case, it is impossible to estimate $\tau_{[1]}(r_2, r_3)$ precisely. 
An example of such a realization is that 4 students from Gaokao live in one room and the remaining 1 student from Gaokao and 3 students from recommendation live in the other room.

\subsubsection{Complete randomization}\label{sec:comp_rand}

We propose another treatment assignment mechanism to avoid the drawback of random partitioning. It requires predetermined number of units for each attribute receiving each treatment. 
We achieve this goal by fixing the numbers of groups.
Recall that the group attribute set $G_i(z_i)$ contains $K+1$ unordered but replicable elements from $\{1,\ldots, H\}$.  Consider the same education example with 5 students from Gaokao and 3 students from recommendation. 
Under random partitioning, 
we may hope that one room has group attribute set $1112$ and thus the other room has group attribute set $1122$. This results in 3 and 2 students from Gaokao receiving treatments $r_2$ and $r_3$, respectively. Therefore, this avoids other assignments with no students from Gaokao receiving these treatments of interest.

We need additional symbols to describe complete randomization.  Let $\mathcal{G}=\{g_1, \ldots, g_T\}$ be the set consisting of all possible group attribute sets, with cardinality
$
T=|\mathcal{G}| = \binom{H+K}{H-1}  
= \frac{(H+K)!}{(H-1)!(K+1)!}. 
$
In our motivating application, 
$\mathcal{G}=\{g_1, \ldots, g_5\} = \{1111,$ $1112,$ $1122,$ $1222,$ $2222\}$ with 
$T=|\mathcal{G}|=
\frac{(2+3)!}{(2-1)!(3+1)!}
=5.$ 
Under treatment assignment $z$, the number of groups with attribute set $g_t \in \mathcal{G}$ is
$$
L_{t}(z) = (K+1)^{-1}\sum_{i=1}^n I\left\{ G_i(z_i)=g_t \right\},
$$
where the divisor $K+1$ appears because all $K+1$ units in the same group must have the same group attribute set.
Let $L(z)=(L_{1}(z), L_{2}(z), \ldots, L_{T}(z))$ be the 
vector of numbers of groups corresponding to group attribute sets 
$(g_1, \ldots, g_T)$ 
under assignment $z$.

Under complete randomization, the assignment $z$ must satisfy $L(z)= l =(l_{1}, \ldots, l_{T})$ for a predetermined constant vector $l$, and all such assignments must have equal probability. For any $g_t\in \mathcal{G}$, let $g_t(a)$ be the number of elements in set $g_t$  that are equal to $a$.
If $z$ is compatible with a partition of units into $m$ groups and $L(z)=l$, then  
\begin{align}\label{eq:dist_Z_com_rand}
\pr(Z=z) = 
\frac{
\prod_{t=1}^T l_{t}! 
\times
\prod_{a=1}^H \prod_{t=1}^T \{g_t(a)!\}^{l_{t}}
}{
\prod_{a=1}^H n_{[a]}!
};
\end{align}
otherwise, $\pr(Z=z)=0.$ 
The above formula \eqref{eq:dist_Z_com_rand} follows from counting all possible complete randomizations. 
To generate a complete randomization, 
we can first randomly partition the $n_{[a]}$ units with attribute $a$ into $m$ groups, where each of the first $l_1$ groups has $g_1(a)$ units, each of the next $l_2$ groups has $g_2(a)$ units, $\ldots$, each of the last $l_T$ groups has $g_T(a)$ units.  The partitions for units with different attributes are mutually independent.  
Finally, the first $l_1$ groups will have group attribute set $g_1$, $\ldots,$ and the last $l_T$ groups will have group attribute set $g_T$, satisfying the requirement $L(z)=l$.

We revisit the education example with 5 students from Gaokao and 3 students from recommendation. The treatment of 
complete randomization has predetermined vector $l=(l_1,l_2,l_3,l_4,l_5)=(0,1,1,0,0)$. Thus, one group has attribute set $g_2 = 1112$ and the other group has attribute set $g_3=1122.$ 
We need to randomly assign 3 students from Gaokao and 1 student from recommendation to group $g_2$, and assign the remaining students to group $g_3$.  
Equivalently, 
for the 5 students from Gaokao, we randomly assign $3$ of them to group $g_2$ and the remaining 2 to group $g_3$; for the 3 students from recommendation, we randomly assign $1$ of them to group $g_2$ and the remaining 2 to group $g_3$, independently of the group assignments for students from Gaokao.

For $1\leq a\leq H$ and $r\in \mathcal{R}$,
let $n_{[a]r}=|\{i:A_i=a, R_i=r\}|$ be the number of units with attribute $a$ receiving treatment $r$.  
First, the units with attribute $a$ receiving treatment $r$ must have group attribute set $\{a\}\cup r$, which equals $g_{t_0}$ for some $1\leq t_0\leq T$. Second,  
each group with attribute set $g_{t_0}$ 
contains $g_{t_0}(a)$ units with attribute $a$.
Third, all of these $g_{t_0}(a)$ units receive the same treatment $r$. 
These facts imply that
\begin{align}\label{eq:n_hr}
n_{[a]r}=L_{t_0}(z)g_{t_0}(a) = \sum_{t=1}^T I(g_t=\{a\}\cup r) \cdot L_t(z)g_t(a)
\end{align}
depends only on the vector $L(z)$. Thus, the $n_{[a]r}$'s are constants under complete randomization. 
In the previous education example with $8$ students, consider complete randomization with predetermined vector $l=(0,1,1,0,0)$. The numbers of units from Gaokao receiving treatments $r_2=112$ and $r_3=122$ are constants $n_{[1]r_2}=3$ and $n_{[1]r_3}=2$. 
Therefore, complete randomization can guarantee that at least some students from Gaokao receive the treatments of interest.

Moreover, under random partitioning, if we conduct inference conditional on $L(Z)$, then the treatment assignment mechanism becomes complete randomization with $L(z)$ fixed at the observed vector $L(Z).$ 
Therefore, even under random partitioning, we can still conduct inference under complete randomization if we condition on $L(Z)$.

\subsection{Connection and distinction between existing literature and our paper}\label{sec:distinction}

We comment on the difference between the majority of the existing literature and our paper. We compare two types of interference.

Figure \ref{fig:illstrate}(a) illustrates the first type. 
The grey or white color of each unit denotes the external treatment (e.g., receiving vaccine or not). Each unit's outcome depends not only on its own treatment but also on treatments of other units in its circle. Thus, units interfere with each other in the same dashed circle.  Importantly, the network structure is fixed.

Figure \ref{fig:illstrate}(b) illustrates the second type. The grey or white color denotes the units' attributes (e.g., from Gaokao or recommendation in the motivating application). The outcome of each unit depends on the attributes of other units
in its circle. Thus, units interfere with each other in the same dashed circle.  Unlike the first type, the units' attributes are fixed but the network structure is random.

\begin{figure}[htb]
\centering
\begin{subfigure}{1\textwidth}
  \centering
\begin{subfigure}{.35\textwidth}
\centering
\includegraphics[width=1\linewidth]{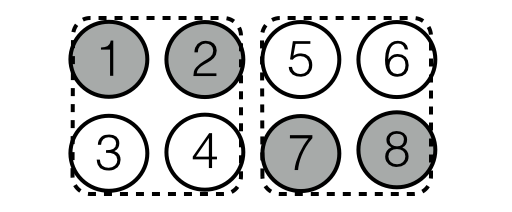}
\caption*{(a1)}
\end{subfigure}
\begin{subfigure}{.35\textwidth}
\centering
\includegraphics[width=1\linewidth]{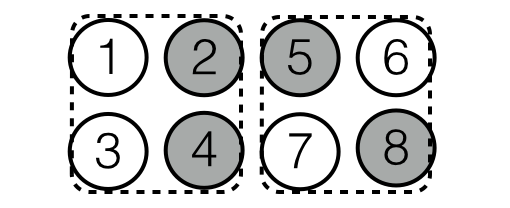}
\caption*{(a2)}
\end{subfigure}
   \caption{
The first type of interference with a fixed network and random external interventions. 
(a1) and (a2) are two possible realizations of random external interventions (colors of the units). 
}
\end{subfigure}
\begin{subfigure}{1\textwidth}
  \centering
\begin{subfigure}{.35\textwidth}
\centering
\includegraphics[width=1\linewidth]{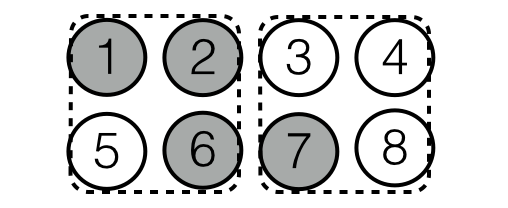}
\caption*{(b1)}
\end{subfigure}
\begin{subfigure}{.35\textwidth}
\centering
\includegraphics[width=1\linewidth]{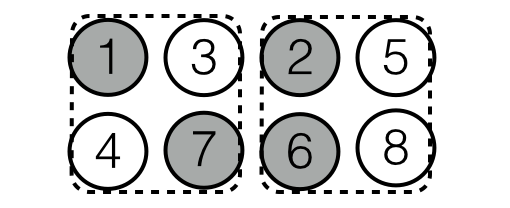}
\caption*{(b2)}
\end{subfigure}
   \caption{
The second type of interference with fixed attributes of all units and a random network. 
(b1) and (b2) are two possible realizations of random networks (dashed circles).
}
\end{subfigure}
\caption{Two types of interference with dashed circles indicating networks. 
%(a) and (b) are two possible realizations under the first type, with random external intervention (colors of the units) and a fixed network; (c) and (d) are two possible realizations under the second type, with fixed attributes (colors of the units) and a random network. 
}\label{fig:illstrate}
\end{figure}

A main difference between these two types comes from the source of randomness. For the first type, the colors are random and the dashed circles are fixed. For the second type, the colors are fixed and the dashed circles are randomly formed. The recent causal inference literature focused on the first type \citep{hudgens2008toward, aronow2012general, liu2014large, athey2015exact}. In this paper, we formalize the second type and propose inferential procedures based on the treatment assignment mechanism.

\section{Inference for peer effects under general treatment assignment}\label{sec:design_based}
\subsection{Point estimators for peer effects}
Throughout the paper, we invoke, unless otherwise stated, Assumptions \ref{asp:sutva_pe}--\ref{asmp:indist}.
For $1\leq a\leq H$ and $r,r'\in \mathcal{R}$, let $\pi_{[a]}(r)=\pr(R_i=r)$ be the probability that a unit $i$ with attribute $a$ receives treatment $r$. Define 
\begin{align}
\label{eq:sub_ave_pot_hat}
&\hat{Y}_{[a]}(r) = \{n_{[a]} \pi_{[a]}(r)\}^{-1}\sum_{i:A_i=a} I(R_i=r)Y_i, \\
\label{eq:tau_hat}
&\hat{\tau}_{[a]} (r,r') = \hat{Y}_{[a]}(r) - \hat{Y}_{[a]}(r'), \quad 
\hat{\tau}(r,r') = \sum_{a=1}^H w_{[a]} \hat{\tau}_{[a]}(r,r').
\end{align}

\begin{proposition}\label{thm:unbiased_gen}
For $1\leq a, a'\leq H$ and $r, r'\in \mathcal{R}$,   
the estimators
$\hat{Y}_{[a]}(r), \hat{\tau}_{[a]} (r,r')$ and $\hat{\tau}(r,r')$ are unbiased for $\bar{Y}_{[a]}(r), \tau_{[a]}(r,r')$ and $\tau(r,r')$, respectively. 
\end{proposition}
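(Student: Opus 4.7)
The plan is to recognize that the three estimators are Horvitz--Thompson type estimators, so that unbiasedness reduces to a direct expectation calculation. The two workhorses are (i) Assumptions~\ref{asp:sutva_pe}--\ref{asp:exres_pe}, which justify the identity $I(R_i=r)\,Y_i = I(R_i=r)\,Y_i(r)$, and (ii) Assumption~\ref{asmp:indist}(a), which guarantees that $\pr(R_i=r) = \pi_{[a]}(r)$ depends only on the attribute $a = A_i$, not on the identity of the unit. Once these two ingredients are in place the proof is essentially one line per estimator.

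First I would establish unbiasedness of $\hat{Y}_{[a]}(r)$. Since the events $\{R_i = r\}$ over $r\in\mathcal{R}$ are disjoint and partition the sample space (for each unit $i$ there is exactly one realized peer attribute set), the observed outcome satisfies $Y_i = \sum_{r\in\mathcal{R}} I(R_i=r)\,Y_i(r)$ under Assumptions~\ref{asp:sutva_pe}--\ref{asp:exres_pe}, and multiplying by $I(R_i=r)$ isolates a single term: $I(R_i=r)\,Y_i = I(R_i=r)\,Y_i(r)$. Since the potential outcomes are fixed constants, taking expectations and invoking Assumption~\ref{asmp:indist}(a) gives $E\{I(R_i=r)\,Y_i\} = \pi_{[a]}(r)\,Y_i(r)$ for every $i$ with $A_i=a$. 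Plugging into the definition \eqref{eq:sub_ave_pot_hat} and cancelling the common factor $\pi_{[a]}(r)$ yields
\[
E\{\hat{Y}_{[a]}(r)\} = \{n_{[a]}\pi_{[a]}(r)\}^{-1}\sum_{i:A_i=a}\pi_{[a]}(r)\,Y_i(r) = n_{[a]}^{-1}\sum_{i:A_i=a}Y_i(r) = \bar{Y}_{[a]}(r).
\]

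Second, the unbiasedness of $\hat{\tau}_{[a]}(r,r')$ and $\hat{\tau}(r,r')$ follows immediately by linearity of expectation applied to \eqref{eq:tau_hat} and the definitions \eqref{eq:sub_ave_peer_effect}--\eqref{eq:ave_peer_effect}, since both $w_{[a]}$ and the attribute-specific estimands are deterministic. There is really no serious obstacle here; the only minor subtleties are to state explicitly that the estimator is only defined when $\pi_{[a]}(r)>0$ (a tacit assumption built into the setup), and to point out that Assumption~\ref{asmp:indist}(a) is what makes it legitimate to pull $\pi_{[a]}(r)$ outside the sum over $i$ with $A_i=a$. Assumption~\ref{asmp:indist}(b), which concerns pairwise probabilities, plays no role in this proposition; it will become relevant only for variance computations later.
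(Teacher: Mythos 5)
Your proof is correct and follows essentially the same route as the paper: the paper justifies unbiasedness of $\hat{Y}_{[a]}(r)$ by the Horvitz--Thompson-type inverse probability weighting calculation (carried out in detail in the Supplementary Material in a more general form, where $E\{I(R_i=r)Y_i\}$ is computed and $\pi_{[a]}(r)$ cancels exactly as in your display), and then obtains $\hat{\tau}_{[a]}(r,r')$ and $\hat{\tau}(r,r')$ by linearity of expectation. Your added remarks—that Assumption \ref{asmp:indist}(a) is what lets $\pi_{[a]}(r)$ be pulled out of the sum, that $\pi_{[a]}(r)>0$ is tacitly required, and that Assumption \ref{asmp:indist}(b) is not used here—are accurate but do not change the argument.
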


The unbiasedness of $\hat{Y}_{[a]}(r)$ follows from the Horvitz--Thompson-type inverse probability weighting,  
and the unbiasedness of $\hat{\tau}_{[a]}(r,r')$ and $\hat{\tau}(r,r')$ then follows directly from the linearity of expectation.

\subsection{Sampling variances of the peer effect estimators}

For units with attribute 
$1\leq a\leq H$ and $r\neq r'\in \mathcal{R}$, define   
\begin{align*}
S_{[a]}^2({r}) & =  (n_{[a]}-1)^{-1}\sum_{i:A_i =a }\left\{ Y_i({r})
- \bar{Y}_{[a]}({r})
\right\}^2,\\
S_{[a]}^2({r}\text{-}{r}') & = (n_{[a]}-1)^{-1}\sum_{i:A_i=a}\left\{
\tau_i(r,r') - \tau_{[a]}(r,r') 
\right\}^2\nonumber
\end{align*}
as the finite population variances of the potential outcomes and individual peer effects, 
and 
\begin{align*}
\Yhhrrprime
 & = \{n_{[a]}(n_{[a]}-1)\}^{-1} \mathop{\sum\sum}_{i\neq j:A_i=A_j=a} Y_i(r)Y_j(r') 
\end{align*}
as the average of the  
products of the potential outcomes 
for pairs of units with attribute $a$.

For $1\leq a,a'\leq H$ and $r,r'\in \mathcal{R},$ 
if $i\neq j$ are two units with attributes $a$ and $a'$, then $\pi_{[a][a']}(r,r')=\pr(R_i=r, R_j=r')$ is the joint treatment assignment probability, and
\begin{align}\label{eq:d_def}
d_{[a][a']}(r,r') = 
%(n_{[a]} n_{[a']})^{1/2}
\sqrt{n_{[a]} n_{[a']}}
\left\{
\frac{\pr(R_i=r, R_j=r')}{\pr(R_i=r)\pr(R_j=r')}-1\right\}
=
%\sqrt{n_hn_{h'}}\left\{\frac{\pr(R_j=r'\mid R_i=r)}{\pr(R_j=r')}-1\right\}\\& =
%(n_{[a]} n_{[a']})^{1/2}
\sqrt{n_{[a]} n_{[a']}}
\left\{\frac{\pi_{[a][a']}(r,r')}{\pi_{[a]}(r) \pi_{[a']}(r')}-1\right\}
\end{align}
measures the dependence between the two events $\{ R_i=r \}$ and $\{ R_j = r' \}$.
We further need a few known constants depending only on the treatment assignment mechanism. For $1\leq a,a'\leq H$ and $r,r'\in \mathcal{R},$ define
\begin{align}\label{eq:c_def}
c_{[a][a']}(r,r') & = 
\begin{cases}
d_{[a][a']}({r}, {r}'), & \text{if } a\neq a',\\
(1-n_{[a]}^{-1})d_{[a][a]}(r,r') -1, & \text{if } a=a', {r} \neq {r}',\\
(1-n_{[a]}^{-1})d_{[a][a]}(r,r)+\pi_{[a]}^{-1}(r)-1,
 & \text{if } a=a', r=r',
\end{cases}
\end{align}
and 
\begin{align}\label{eq:b_def}
b_{[a]}({r}) & =
(1-n_{[a]}^{-1})\left\{
c_{[a][a]}(r,r) - d_{[a][a]}(r,r)
\right\}+1 .
\end{align}
These constants are useful for expressing the sampling variances of the estimators.

\begin{theorem}\label{thm:var_gen}
Under Assumptions \ref{asp:sutva_pe}--\ref{asmp:indist}, 
for treatments $r\neq r'\in \mathcal{R},$ the sampling variance of the subgroup average peer effect estimator is
\begin{align}\label{eq:var_gen_h}
\Var\left\{
\hat{\tau}_{[a]}({r}, {r}')
\right\} 
& =  n_{[a]}^{-1}\left\{ b_{[a]}({r})
S_{[a]}^2({r})
+ 
b_{[a]}({r}')
S_{[a]}^2({r}')
-
S_{[a]}^2({r}\text{-}{r}')
 \right\} \nonumber\\
&  \  + n_{[a]}^{-1}\left\{ 
c_{[a][a]}({r},{r}) 
\bar{Y}_{[a]}^2({r})
 + c_{[a][a]}(r',r')
\bar{Y}_{[a]}^2({r}') - 2c_{[a][a]}({r},{r}')
\Yhhrrprime
 \right\}, 
\end{align}
and the sampling variance of the average peer effect estimator is
\begin{align}
\label{eq:var_gen}
\Var\left\{
\hat{\tau}({r}, {r}')
\right\} 
& =  n^{-1}\sum_{a=1}^H w_{[a]} \left\{ 
b_{[a]}({r})
S_{[a]}^2({r})
+ 
b_{[a]}({r}')
S_{[a]}^2({r}') - S_{[a]}^2({r}\text{-}{r}')
 \right\} \nonumber\\
& \quad \  + n^{-1}\sum_{a=1}^H w_{[a]} \left\{ 
c_{[a][a]}({r},{r}) 
\bar{Y}_{[a]}^2({r})
 + c_{[a][a]}(r',r')
\bar{Y}_{[a]}^2({r}') - 2c_{[a][a]}({r},{r}')
\Yhhrrprime
 \right\} \nonumber \\
& \quad \   + n^{-1}\sum_{a=1}^H \sum_{a'\neq a}  (w_{[a]}w_{[a']})^{1/2}\left\{ 
c_{[a][a']}({r},{r}) 
\bar{Y}_{[a]}({r})  \bar{Y}_{[a']}({r})
 + c_{[a][a']}(r',r')
\bar{Y}_{[a]}({r}') \bar{Y}_{[a']}({r}')
 \right. \nonumber\\
& \quad \ \quad  \quad \quad \quad \quad  \quad \quad   \quad  \left. - c_{[a][a']}({r},{r}')
\bar{Y}_{[a]}({r}) \bar{Y}_{[a']}({r}') - c_{[a][a']}({r}',{r})
\bar{Y}_{[a]}({r}') \bar{Y}_{[a']}({r})
 \right\}.
\end{align}
\end{theorem}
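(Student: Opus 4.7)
The plan is to reduce the variance of $\hat{\tau}_{[a]}(r,r')$ and $\hat{\tau}(r,r')$ to variances and covariances of the building blocks $\hat{Y}_{[a]}(r)$, and then compute those in terms of the covariance structure of the indicators $I(R_i=r)$. By Assumption \ref{asp:exres_pe}, $I(R_i=r)Y_i = I(R_i=r)Y_i(r)$, so I can rewrite $\hat{Y}_{[a]}(r) = \{n_{[a]}\pi_{[a]}(r)\}^{-1}\sum_{i:A_i=a} I(R_i=r) Y_i(r)$, turning $\hat{Y}_{[a]}(r)$ into a linear combination of the fixed $Y_i(r)$ with random weights. Assumption \ref{asmp:indist} ensures that first and joint second moments of these weights depend only on attributes. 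Using the definition of $d_{[a][a']}(r,r')$, I get, for any $i\neq j$ with $A_i=a$, $A_j=a'$, $\Cov\{I(R_i=r),I(R_j=r')\}=\pi_{[a]}(r)\pi_{[a']}(r')d_{[a][a']}(r,r')/\sqrt{n_{[a]}n_{[a']}}$, while for $i=j$, $\Var\{I(R_i=r)\}=\pi_{[a]}(r)\{1-\pi_{[a]}(r)\}$ and $\Cov\{I(R_i=r),I(R_i=r')\}=-\pi_{[a]}(r)\pi_{[a]}(r')$ when $r\neq r'$.

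Next I compute the within-attribute pieces. Splitting the double sum into $i=j$ and $i\neq j$ parts, I obtain
\begin{align*}
\Var\{\hat{Y}_{[a]}(r)\}=n_{[a]}^{-2}\bigl[(\pi_{[a]}(r)^{-1}-1)\textstyle\sum_{i:A_i=a}Y_i(r)^2+n_{[a]}^{-1}d_{[a][a]}(r,r)\mathop{\sum\sum}_{i\neq j:A_i=A_j=a}Y_i(r)Y_j(r)\bigr],
\end{align*}
and similarly $\Cov\{\hat{Y}_{[a]}(r),\hat{Y}_{[a]}(r')\}$ for $r\neq r'$ with a leading $-1$ in place of the $i=j$ coefficient (because $I(R_i=r)I(R_i=r')=0$) and $d_{[a][a]}(r,r')$ in the off-diagonal coefficient. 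I then translate the double sums using the finite-population identities $\sum_i Y_i(r)^2=(n_{[a]}-1)S_{[a]}^2(r)+n_{[a]}\bar Y_{[a]}^2(r)$, $\sum_{i\neq j}Y_i(r)Y_j(r)=n_{[a]}(n_{[a]}-1)\{\bar Y_{[a]}^2(r)-S_{[a]}^2(r)/n_{[a]}\}$, and $\sum_{i\neq j}Y_i(r)Y_j(r')=n_{[a]}(n_{[a]}-1)\Yhhrrprime$. Collecting coefficients of $S_{[a]}^2(r)$ and $\bar Y_{[a]}^2(r)$ and comparing to \eqref{eq:c_def}--\eqref{eq:b_def} gives $\Var\{\hat Y_{[a]}(r)\}=n_{[a]}^{-1}\{b_{[a]}(r)-1\}S_{[a]}^2(r)+n_{[a]}^{-1}c_{[a][a]}(r,r)\bar Y_{[a]}^2(r)$. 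The analogous computation for $\Cov\{\hat{Y}_{[a]}(r),\hat{Y}_{[a]}(r')\}$ produces a term $(n_{[a]}-1)C_{[a]}(r,r')/n_{[a]}^2$ where $C_{[a]}(r,r')$ is the finite-population covariance, and I eliminate $C_{[a]}(r,r')$ via the identity $S_{[a]}^2(r\text{-}r')=S_{[a]}^2(r)+S_{[a]}^2(r')-2C_{[a]}(r,r')$. Plugging into $\Var\{\hat\tau_{[a]}(r,r')\}=\Var\{\hat Y_{[a]}(r)\}+\Var\{\hat Y_{[a]}(r')\}-2\Cov\{\hat Y_{[a]}(r),\hat Y_{[a]}(r')\}$, the two stray $-1\cdot S_{[a]}^2$ pieces combine with $S_{[a]}^2(r)+S_{[a]}^2(r')-S_{[a]}^2(r\text{-}r')$ from the covariance to upgrade the coefficients $b_{[a]}(r)-1$ to $b_{[a]}(r)$ and to leave a residual $-S_{[a]}^2(r\text{-}r')$, yielding \eqref{eq:var_gen_h}.

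For the cross-attribute pieces required for \eqref{eq:var_gen}, note that when $A_i=a\neq a'=A_j$ we automatically have $i\neq j$, so only the off-diagonal covariance contributes and
$\Cov\{\hat Y_{[a]}(r),\hat Y_{[a']}(r')\}=d_{[a][a']}(r,r')\bar Y_{[a]}(r)\bar Y_{[a']}(r')/\sqrt{n_{[a]}n_{[a']}}=c_{[a][a']}(r,r')\bar Y_{[a]}(r)\bar Y_{[a']}(r')/\sqrt{n_{[a]}n_{[a']}}$, using \eqref{eq:c_def}. Expanding $\Var\{\hat\tau(r,r')\}=\sum_{a,a'}w_{[a]}w_{[a']}\Cov\{\hat\tau_{[a]}(r,r'),\hat\tau_{[a']}(r,r')\}$, the diagonal $a=a'$ contribution reproduces the subgroup variance divided by $w_{[a]}$-weights, while the off-diagonal $a\neq a'$ contribution expands into four cross-covariances with signs $(+,-,-,+)$ matching the last bracket of \eqref{eq:var_gen}. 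The normalizing factor $w_{[a]}w_{[a']}/\sqrt{n_{[a]}n_{[a']}}=n^{-1}(w_{[a]}w_{[a']})^{1/2}$ delivers the exact prefactor in the theorem.

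The main obstacle is the algebraic bookkeeping in the within-attribute step: I must track four distinct sources of $S_{[a]}^2$ and $\bar Y_{[a]}^2$ terms (two each from the variances of $\hat Y_{[a]}(r)$ and $\hat Y_{[a]}(r')$, and cross products from the covariance) and verify that they reassemble into the compact coefficients $b_{[a]}(r)$, $b_{[a]}(r')$, $c_{[a][a]}(r,r)$, $c_{[a][a]}(r',r')$, $c_{[a][a]}(r,r')$, plus the residual $-S_{[a]}^2(r\text{-}r')$. Once that verification is done, the cross-attribute step and the conversion to \eqref{eq:var_gen} follow by direct linearity of covariance with no new phenomena.
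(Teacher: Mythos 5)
Your proposal is correct, and the route differs from the paper's mainly in organization rather than substance. The paper's proof centers the potential outcomes and decomposes $\hat Y_{[a]}(r)=B_{[a]}(r)+C_{[a]}(r)$, where $B_{[a]}(r)$ carries the centered outcomes $\widetilde Y_i(r)=Y_i(r)-\bar Y_{[a]}(r)$ and $C_{[a]}(r)$ carries the subgroup means; three lemmas then show $\Cov\{B_{[a]}(r),C_{[a']}(r')\}=0$ and $\Cov\{B_{[a]}(r),B_{[a']}(r')\}=0$ for $a\neq a'$, compute $\Cov\{B_{[a]}(r),B_{[a]}(r')\}$ in terms of $S_{[a]}(r,r')$ and $\Cov\{C_{[a]}(r),C_{[a']}(r')\}$ in terms of the $c$-constants, and a fourth lemma (Lemma \ref{lemma:represent_S_product}) supplies exactly the finite-population identities you use to convert to $b_{[a]}(\cdot)$, $-S_{[a]}^2(r\text{-}r')$ and $\Yhhrrprime$. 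You instead expand the uncentered Horvitz--Thompson sums directly, split the double sums into $i=j$ and $i\neq j$ parts using the indicator covariances expressed through $d_{[a][a']}(r,r')$, and collect coefficients at the end; the inputs are the same, so both proofs are really the same second-moment calculation. The paper's decomposition buys cleaner bookkeeping (cross terms vanish by construction, and the cross-attribute covariance is immediately a pure ``mean'' term), at the cost of the lemma infrastructure; your direct expansion is more elementary but pushes all cancellation into the final coefficient-collection, which you describe accurately: I verified that $\Var\{\hat Y_{[a]}(r)\}=n_{[a]}^{-1}\{b_{[a]}(r)-1\}S_{[a]}^2(r)+n_{[a]}^{-1}c_{[a][a]}(r,r)\bar Y_{[a]}^2(r)$, that the cross-attribute covariance equals $(n_{[a]}n_{[a']})^{-1/2}c_{[a][a']}(r,r')\bar Y_{[a]}(r)\bar Y_{[a']}(r')$, and that the assembly into \eqref{eq:var_gen_h} and \eqref{eq:var_gen} (including the prefactor $w_{[a]}w_{[a']}/\sqrt{n_{[a]}n_{[a']}}=n^{-1}(w_{[a]}w_{[a']})^{1/2}$) goes through. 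One point to make explicit: for the same-attribute covariance you invoke only $S_{[a]}^2(r\text{-}r')=S_{[a]}^2(r)+S_{[a]}^2(r')-2C_{[a]}(r,r')$, but the $i=j$ part also leaves a residual $\bar Y_{[a]}(r)\bar Y_{[a]}(r')$ term, and eliminating it requires the companion identity $\bar Y_{[a]}(r)\bar Y_{[a]}(r')=\Yhhrrprime+n_{[a]}^{-1}C_{[a]}(r,r')$ (part (b) of Lemma \ref{lemma:represent_S_product}); since this follows immediately from the sum identities you already list, it is a presentational omission rather than a gap in the argument.
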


From Theorem \ref{thm:var_gen}, the sampling variances of the peer effect estimators depend on the finite population variances of potential outcomes and individual peer effects, the products of two subgroup average potential outcomes, and 
the  product averages  $\Yhhrrprime$'s. 
In contrast to $\bar{Y}_{[a]}(r)\bar{Y}_{[a]}(r')$, 
the average $\Yhhrrprime$ 
excludes the product of two potential outcomes of the same unit.  
Note that we cannot unbiasedly estimate quantities involving $Y_i(r)Y_i(r')$ in general because we cannot jointly observe the potential outcomes, $Y_i(r)$ and $Y_i(r')$, for any unit $i$ and any treatments $r\neq r'$. 

Moreover, the sampling variance of $\hat{\tau}(r,r')$ is a weighted summation of the sampling variances of the $\hat{\tau}_{[a]}(r,r')$'s, corresponding to the first two terms in \eqref{eq:var_gen}, and the sampling covariances between $\hat{\tau}_{[a]}(r,r')$ and $\hat{\tau}_{[a']}(r,r')$, corresponding to the last double summation in \eqref{eq:var_gen}.

\subsection{Estimating the sampling variances}\label{sec:est_var_gen}
From Theorem \ref{thm:var_gen}, to estimate the sampling variances, we need to estimate the 
population quantities in \eqref{eq:var_gen_h} and \eqref{eq:var_gen}. 
For $1\leq a\leq H$, 
define
\begin{align}\label{eq:s_hr^2}
%Q_h(r) & =\frac{1}{n_h \pi_h(r)}\sum_{i:A_i=h}I(R_i=r) Y_i^2,\\
s_{[a]}^2(r) & =
\frac{n_{[a]}\pi_{[a]}^2(r)}{(n_{[a]}-1) \pi_{[a][a]}(r,r)}
\left\{
 \frac{n_{[a]}+c_{[a][a]}(r,r)}{n_{[a]}^2\pi_{[a]}(r)} \sum_{i:A_i=a}I(R_i=r)  Y_i^2-
\hat{Y}^2_{[a]}(r)
\right\}.
\end{align}

\begin{theorem}\label{thm:est_var_gen}
Under Assumptions \ref{asp:sutva_pe}--\ref{asmp:indist}, 
for $1\leq a, a'\leq H$ and $r, r'\in \mathcal{R}$, 
\begin{align*}
S_{[a]}^2(r) & = E\left\{ s_{[a]}^2(r) \right\},\\
\bar{Y}_{[a]}^2(r) & = E\left[
\frac{n_{[a]}\hat{Y}_{[a]}^2(r)-\{b_{[a]}(r)-1\}s_{[a]}^2(r)
	}{n_{[a]}+c_{[a][a]}(r,r)}
\right],\\
%\overline{Y_h(r)Y_h(r')} 
\Yhhrrprime
& = 
E\left\{  
\frac{n_{[a]}}{n_{[a]}-1}
\frac{\pi_{[a]}(r)\pi_{[a]}(r')}{\pi_{[a][a]}(r,r')}\hat{Y}_{[a]}(r)\hat{Y}_{[a]}(r') \right\},  & \text{if } r\neq r',\\
\bar{Y}_{[a]}(r)\bar{Y}_{[a']}(r') & =
E\left\{  \frac{\pi_{[a]}(r) \pi_{[a']}(r')}{\pi_{[a][a']}(r,r')}\hat{Y}_{[a]}(r)\hat{Y}_{[a']}(r') \right\},  & \text{if } a\neq a'.
\end{align*}
\end{theorem}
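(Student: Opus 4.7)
The plan is to prove the four identities in the order they appear, using Horvitz--Thompson-style inverse probability weighting throughout; the two cross-product identities (lines 3 and 4) are direct applications, while the first two require additional algebra that exploits the precise forms of $b_{[a]}(r)$ and $c_{[a][a]}(r,r)$ in \eqref{eq:c_def}--\eqref{eq:b_def}.

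First I would handle the fourth identity ($a\neq a'$). For $i\neq j$ with $A_i=a$ and $A_j=a'$, Assumption~\ref{asmp:indist}(b) gives $E\{I(R_i=r)I(R_j=r')\}=\pi_{[a][a']}(r,r')$. Because $a\neq a'$ forces $i\neq j$, every term in the double sum defining $\hat{Y}_{[a]}(r)\hat{Y}_{[a']}(r')$ is of this form, so multiplying by $\pi_{[a]}(r)\pi_{[a']}(r')/\pi_{[a][a']}(r,r')$ cancels the inverse probability weights and recovers $\bar{Y}_{[a]}(r)\bar{Y}_{[a']}(r')$. The third identity ($a=a'$, $r\neq r'$) is analogous: the diagonal $i=j$ contributions to $\hat{Y}_{[a]}(r)\hat{Y}_{[a]}(r')$ vanish because $I(R_i=r)I(R_i=r')=0$, so after multiplying by $n_{[a]}\pi_{[a]}(r)\pi_{[a]}(r')/\{(n_{[a]}-1)\pi_{[a][a]}(r,r')\}$ the same weighting recovers the off-diagonal average $\overline{Y_{[a]}(r)Y_{[a]}(r')}$.

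For the first identity, let $T_1=\sum_{i:A_i=a}Y_i(r)^2$ and $T_2=\mathop{\sum\sum}_{i\neq j:A_i=A_j=a}Y_i(r)Y_j(r)$, and note $T_1=(n_{[a]}-1)S_{[a]}^2(r)+n_{[a]}\bar{Y}_{[a]}^2(r)$ and $T_1+T_2=n_{[a]}^2\bar{Y}_{[a]}^2(r)$. Expanding the square and using $I(R_i=r)Y_i^2=I(R_i=r)Y_i(r)^2$ yields
\begin{align*}
E\{\hat{Y}_{[a]}^2(r)\}=\frac{\pi_{[a]}(r)T_1+\pi_{[a][a]}(r,r)T_2}{n_{[a]}^2\pi_{[a]}^2(r)},\qquad E\Big\{\sum_{i:A_i=a}I(R_i=r)Y_i^2\Big\}=\pi_{[a]}(r)T_1.
\end{align*}
Substituting these into \eqref{eq:s_hr^2} leaves the coefficient of $T_1$ equal to $\pi_{[a]}(r)[\pi_{[a]}(r)(n_{[a]}+c_{[a][a]}(r,r))-1]/\{n_{[a]}(n_{[a]}-1)\pi_{[a][a]}(r,r)\}$. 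Unpacking $c_{[a][a]}(r,r)$ via \eqref{eq:d_def}--\eqref{eq:c_def} produces the simplification $\pi_{[a]}(r)[\pi_{[a]}(r)(n_{[a]}+c_{[a][a]}(r,r))-1]=(n_{[a]}-1)\pi_{[a][a]}(r,r)$, so that coefficient collapses to $n_{[a]}^{-1}$; combined with the coefficient $-\{n_{[a]}(n_{[a]}-1)\}^{-1}$ of $T_2$ and the substitution $T_2=n_{[a]}^2\bar{Y}_{[a]}^2(r)-T_1$, this gives $E\{s_{[a]}^2(r)\}=\{T_1-n_{[a]}\bar{Y}_{[a]}^2(r)\}/(n_{[a]}-1)=S_{[a]}^2(r)$.

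For the second identity, the same expression for $E\{\hat{Y}_{[a]}^2(r)\}$ can be regrouped as a linear combination of $S_{[a]}^2(r)$ and $\bar{Y}_{[a]}^2(r)$. Unpacking $b_{[a]}(r)$ via \eqref{eq:b_def} yields $b_{[a]}(r)-1=(1-n_{[a]}^{-1})\{\pi_{[a]}(r)-\pi_{[a][a]}(r,r)\}/\pi_{[a]}^2(r)$, and the identity of the previous paragraph rearranges to $n_{[a]}+c_{[a][a]}(r,r)=\{\pi_{[a]}(r)+(n_{[a]}-1)\pi_{[a][a]}(r,r)\}/\pi_{[a]}^2(r)$; these two identifications produce
\begin{align*}
E\{\hat{Y}_{[a]}^2(r)\}=\frac{b_{[a]}(r)-1}{n_{[a]}}S_{[a]}^2(r)+\frac{n_{[a]}+c_{[a][a]}(r,r)}{n_{[a]}}\bar{Y}_{[a]}^2(r).
\end{align*}
Solving for $\bar{Y}_{[a]}^2(r)$ and replacing $S_{[a]}^2(r)$ by $E\{s_{[a]}^2(r)\}$ gives the second identity. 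The main obstacle throughout is the algebraic bookkeeping in these first two lines: the opaque definitions \eqref{eq:c_def}--\eqref{eq:b_def} are reverse-engineered precisely so that the two critical simplifications above occur, and confirming that the unpacking from \eqref{eq:d_def} propagates through the second moment of $\hat{Y}_{[a]}(r)$ is the real content of the proof.
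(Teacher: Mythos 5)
Your proposal is correct and follows essentially the same route as the paper's proof: both compute $E\{\hat{Y}_{[a]}^2(r)\}$ by splitting diagonal ($i=j$) from off-diagonal terms, use the definitions \eqref{eq:d_def}--\eqref{eq:b_def} to verify the key identity $\pi_{[a]}^2(r)\{n_{[a]}+c_{[a][a]}(r,r)\}=(n_{[a]}-1)\pi_{[a][a]}(r,r)+\pi_{[a]}(r)$ (and the analogous expression for $b_{[a]}(r)-1$), and dispose of the two cross-product identities by noting that the $i=j$ contributions either vanish (since $\pr(R_i=r,R_i=r')=0$ for $r\neq r'$) or do not arise (since $a\neq a'$ forces $i\neq j$), so the inverse-probability weights cancel directly. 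Your $T_1$/$T_2$ bookkeeping is just a compact repackaging of the paper's algebra.
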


The estimators in Theorem \ref{thm:est_var_gen} correspond to the sample analogues of these finite population quantities, with carefully chosen coefficients to ensure unbiasedness.
Theorem \ref{thm:est_var_gen} guarantees that we have unbiased estimators for all terms in $\Var\{\hat{\tau}_{[a]}({r}, {r}')\}$ and $\Var\{\hat{\tau}({r}, {r}')\}$ except the variance of the individual peer effects  $S_{[a]}^2({r}\text{-}{r}')$. We cannot unbiasedly estimate $S_{[a]}^2({r}\text{-}{r}')$ from the observed data. This is analogous to other finite population causal inference \citep{Neyman:1923}. 
Because the coefficients of $S_{[a]}^2(r\text{-}r')$ in the variance formulas \eqref{eq:var_gen_h} and \eqref{eq:var_gen} are both negative, we can ignore the terms involving $S_{[a]}^2(r\text{-}r')$ and conservatively estimate the sampling variances by simply plugging in the estimators in Theorem \ref{thm:est_var_gen}. 
Note that $S_{[a]}^2(r\text{-}r')=0$ holds under \textit{additivity} defined below.

\begin{definition}\label{def:add_h}
The individual peer effects for units with attribute $a$ are additive if and only if  $\tau_i(r,r')=Y_i(r)-Y_i(r')$ is constant for each unit $i$ with attribute $a$, or, equivalently, $S_{[a]}^2(r\text{-}r')=0$. 
\end{definition}

Therefore, the final estimator for $\Var\{\hat{\tau}_{[a]}(r,r')\}$ is unbiased under additivity for $a$, and the final estimator for $\Var\{\hat{\tau}(r,r')\}$ is unbiased under additivity for all $1\leq a\leq H$.

\section{Inference for peer effects under complete randomization}\label{sec:design_based_complete_rand}

Under random partitioning, the formulas of $\pi_{[a]}(r)$, $\pi_{[a][a']}(r,r'), d_{[a][a']}(r,r'), b_{[a]}(r)$ and $c_{[a][a']}(r,r')$ are complicated, and so are the sampling variances of peer effect estimators. We relegate them to the Supplementary Material.
Fortunately, they have much simpler forms under complete randomization. In this section, we will focus on the inference under complete randomization.

\subsection{Treatment assignment under complete randomization}

The randomness in the peer effect estimators comes solely from the treatment assignments for all units, $(R_1,\ldots,R_n)$. Therefore, we need to first characterize the distribution of the treatments under complete randomization. 
Intuitively, the symmetry of complete randomization suggests that $(R_1,\ldots,R_n)$ has the same distribution as the treatment of a stratified randomized experiment.  The following proposition states this equivalence formally.

\begin{proposition}\label{prop:CR_stratified}
Under Assumptions \ref{asp:sutva_pe} and \ref{asp:exres_pe}, the 
complete randomization defined in  
Section \ref{sec:comp_rand} 
induces a stratified randomized experiment, in the sense that (1) for each $1\leq a\leq H$, in the stratum consisting of $n_{[a]}$ units with attribute $a$, $n_{[a]r}$ units receive treatment $r$ for any $r\in \mathcal{R}$, and any realization of treatments for these $n_{[a]}$ units has the same probability; and (2) the treatments of units are independent across strata. 
\end{proposition}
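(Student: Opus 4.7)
The plan is to exploit the equivalent sequential generation procedure for complete randomization described in Section \ref{sec:comp_rand}: independently for each attribute $1 \leq a \leq H$, uniformly at random partition the $n_{[a]}$ units with attribute $a$ into $m$ ordered subsets, with each of the first $l_1$ subsets of size $g_1(a)$, each of the next $l_2$ of size $g_2(a)$, and so on. A unit with attribute $a$ placed in a subset corresponding to a group of type $g_t$ thereby receives treatment $R_i = g_t \setminus \{a\}$ (removing one copy of $a$), so $R_i$ is entirely determined by which subset-type the unit lands in.

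First I would verify that the generation procedure actually produces the complete randomization distribution \eqref{eq:dist_Z_com_rand}. This is a routine counting check: the number of ordered partitions produced is $\prod_a n_{[a]}! / \prod_t (g_t(a)!)^{l_t}$, and each $Z$ arises from exactly $\prod_t l_t!$ such ordered partitions (permuting the $l_t$ indistinguishable groups within each type), which reproduces \eqref{eq:dist_Z_com_rand}.

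Next I would compute the induced distribution of $(R_i : A_i = a)$ within each stratum. For any arrangement with attribute-conditional counts $\{n_{[a]r} : r \in \mathcal{R}\}$, the number of ordered partitions of the units with attribute $a$ yielding this arrangement equals $\prod_t n_{[a]r(t)}! / (g_t(a)!)^{l_t}$, where $r(t) = g_t \setminus \{a\}$, because the $n_{[a]r(t)} = l_t g_t(a)$ designated units must be distributed across the $l_t$ ordered subsets of size $g_t(a)$. Dividing by the total number of ordered partitions yields probability $\prod_{r \in \mathcal{R}} n_{[a]r}! / n_{[a]}!$, which is identical for every arrangement with the prescribed counts. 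This coincides with the uniform distribution over arrangements in which exactly $n_{[a]r}$ units with attribute $a$ receive treatment $r$ for each $r$, proving part (1).

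Part (2) then follows immediately since the generation procedure treats the attribute classes independently, so the tuples $(R_i : A_i = a)$ for $1 \leq a \leq H$ are mutually independent. The only subtle point is confirming that the within-stratum counts $n_{[a]r}$ are themselves deterministic functions of $l$, which is exactly \eqref{eq:n_hr}; this ensures the stratified structure is unambiguous. The main obstacle is the counting in the second step, but once the ordered-versus-unordered bookkeeping of groups within each type is handled cleanly, everything collapses to a clean multinomial ratio.
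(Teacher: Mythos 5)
Your proposal is correct and follows essentially the same route as the paper's proof: both rely on the sequential generation procedure (the ``numerical implementation'' of Section \ref{sec:comp_rand}), show it reproduces complete randomization because each assignment $z$ with $L(z)=l$ corresponds to exactly $\prod_{t=1}^T l_t!$ equally likely ordered realizations, and then read off the within-stratum completely randomized structure with counts $n_{[a]r}$ from \eqref{eq:n_hr} and the across-stratum independence from the independence of the per-attribute partitions. The only difference is that you make the within-stratum uniformity explicit via the multinomial ratio $\prod_{r}n_{[a]r}!/n_{[a]}!$, where the paper argues it more briefly by symmetry.
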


Proposition \ref{prop:CR_stratified} follows from the numerical implementation of the complete randomization described in Section \ref{sec:comp_rand}. It implies the formulas of $\pi_{[a]}(r)$, $\pi_{[a][a']}(r,r'),$  $d_{[a][a']}(r,r'),$ $b_{[a]}(r)$ and $c_{[a][a']}(r,r')$. We give a formal proof in the Supplementary Material. The group assignment for units with the same attribute $a$ induces a completely randomized experiment, with $n_{[a]r}$ units receiving treatment $r$. Moreover, the group assignments for units with different attributes are mutually independent.

\subsection{Point estimators for peer effects}
Proposition \ref{prop:CR_stratified} characterizes the treatment assignment of complete randomization, which allows us to express the peer effect estimators in simpler forms. 

\begin{corollary}\label{eg:tam_fix}
Under Assumptions \ref{asp:sutva_pe} and \ref{asp:exres_pe}, and under the complete randomization defined in 
Section \ref{sec:comp_rand}, 
for $1\leq a\leq H$ and $r\neq r'\in \mathcal{R}$, 
\begin{align}\label{eq:est_pe_fix}
\hat{Y}_{[a]}(r) & =  n_{[a]r}^{-1}\sum_{i:A_i=a,R_i=r}Y_i, \quad 
\hat{\tau}_{[a]}(r,r')  = \hat{Y}_{[a]}(r)
- \hat{Y}_{[a]}(r'), \quad 
\hat{\tau}(r,r')  = \sum_{a=1}^H w_{[a]} \hat{\tau}_{[a]}(r,r').
\end{align}
\end{corollary}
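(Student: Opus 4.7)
The plan is to derive the simplified forms by substituting into the general definitions \eqref{eq:sub_ave_pot_hat} and \eqref{eq:tau_hat}, after first pinning down the marginal assignment probability $\pi_{[a]}(r)$ under complete randomization. Everything reduces to showing that $n_{[a]}\pi_{[a]}(r) = n_{[a]r}$, after which the three displayed identities become a one-line substitution.

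First, I would invoke Proposition \ref{prop:CR_stratified}, which says that complete randomization induces a stratified randomized experiment: within the stratum of $n_{[a]}$ units having attribute $a$, exactly $n_{[a]r}$ units (a constant, as already noted via \eqref{eq:n_hr}) receive treatment $r$, and every assignment of treatments to these $n_{[a]}$ units that respects the counts $\{n_{[a]r'}: r' \in \mathcal{R}\}$ is equally likely. By the exchangeability of units within the stratum, any unit $i$ with $A_i = a$ is equally likely to be one of the $n_{[a]r}$ units receiving treatment $r$, so
\begin{equation*}
\pi_{[a]}(r) = \pr(R_i = r) = \frac{n_{[a]r}}{n_{[a]}}.
\end{equation*}
This is the only nontrivial ingredient; the rest is algebra.

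Plugging this into \eqref{eq:sub_ave_pot_hat} gives
\begin{equation*}
\hat{Y}_{[a]}(r) = \{n_{[a]}\pi_{[a]}(r)\}^{-1}\sum_{i:A_i=a} I(R_i = r) Y_i = n_{[a]r}^{-1}\sum_{i:A_i=a,\,R_i=r} Y_i,
\end{equation*}
where the last step collapses the indicator into the range of summation. The expressions for $\hat{\tau}_{[a]}(r,r')$ and $\hat{\tau}(r,r')$ then follow immediately by substituting this simplified $\hat{Y}_{[a]}(r)$ into the definitions in \eqref{eq:tau_hat}, with no further computation.

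The main (mild) subtlety is ensuring that $n_{[a]r}$ is truly a deterministic quantity under complete randomization rather than random, so that writing $\pi_{[a]}(r) = n_{[a]r}/n_{[a]}$ is legitimate. This has already been established in \eqref{eq:n_hr}, which expresses $n_{[a]r}$ as a function of the predetermined vector $l = L(z)$ through $n_{[a]r} = \sum_t I(g_t = \{a\}\cup r)\, l_t g_t(a)$. Apart from this bookkeeping, there is no genuine obstacle: the corollary is essentially a reformulation of Proposition \ref{prop:CR_stratified} in terms of the Horvitz--Thompson weights appearing in \eqref{eq:sub_ave_pot_hat}.
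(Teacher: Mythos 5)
Your proposal is correct and follows essentially the same route as the paper: invoke Proposition \ref{prop:CR_stratified} (the stratified-experiment equivalence) to obtain $\pi_{[a]}(r)=n_{[a]r}/n_{[a]}$ with $n_{[a]r}$ fixed by \eqref{eq:n_hr}, then substitute into \eqref{eq:sub_ave_pot_hat} and \eqref{eq:tau_hat}. This matches the paper's own treatment, where Proposition \ref{prop:a_com_res} records $\pi_{[a]}(r)=n_{[a]r}/n_{[a]}$ as a direct consequence of Proposition \ref{prop:CR_stratified} and the corollary reduces to the same one-line substitution.
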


Therefore, under complete randomization, the unbiased estimator of the subgroup average peer effect, $\hat{\tau}_{[a]}(r,r')$, is the observed difference in outcome means under treatments $r$ and $r'$ for units with attribute $a$.

\subsection{Sampling variances of the peer effect estimators}

The sampling variances also have simpler forms under complete randomization. 

\begin{corollary}\label{cor:var_com_res}
Under Assumptions \ref{asp:sutva_pe} and \ref{asp:exres_pe}, and under the complete randomization defined in 
Section \ref{sec:comp_rand}, 
for $1\leq a\leq H$ and $r\neq r'\in \mathcal{R}$, 
\begin{align*}
\Var\left\{
\hat{\tau}_{[a]}(r,r')
\right\} & = 
\frac{S_{[a]}^2({r})}{n_{[a]r}}
+ 
\frac{S_{[a]}^2({r}')}{n_{[a]r'}}
  -\frac{S_{[a]}^2({r}\text{-}{r}')}{n_{[a]}},\\
 \Var\left\{
\hat{\tau}({r}, {r}')
\right\}
& = \sum_{a=1}^H w_{[a]}^2
\left\{
\frac{S_{[a]}^2({r})}{n_{[a]r}}
+ 
\frac{S_{[a]}^2({r}')}{n_{[a]r'}}
  -\frac{S_{[a]}^2({r}\text{-}{r}')}{n_{[a]}}
\right\}.
\end{align*}
\end{corollary}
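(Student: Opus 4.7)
The plan is to invoke Proposition~\ref{prop:CR_stratified}, which recasts complete randomization as a stratified randomized experiment: within each attribute stratum $a$, the $n_{[a]}$ units are randomly allocated to treatments with predetermined counts $n_{[a]r}$, and assignments across strata are mutually independent. This reduces the task to (i) a classical Neyman multi-arm variance calculation within each stratum, and (ii) a variance-additivity argument across strata.

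For the first formula, I would fix $a$ and restrict attention to stratum $a$. By Proposition~\ref{prop:CR_stratified}(1), the estimator $\hat{\tau}_{[a]}(r,r')$ in \eqref{eq:est_pe_fix} is the standard difference-in-means estimator in a completely randomized multi-arm experiment on $n_{[a]}$ units. I would expand
\[
\Var\{\hat{\tau}_{[a]}(r,r')\} = \Var\{\hat{Y}_{[a]}(r)\} + \Var\{\hat{Y}_{[a]}(r')\} - 2\Cov\{\hat{Y}_{[a]}(r), \hat{Y}_{[a]}(r')\},
\]
then use the indicator moments $\pr(R_i = r) = n_{[a]r}/n_{[a]}$, $\pr(R_i = R_j = r) = n_{[a]r}(n_{[a]r}-1)/\{n_{[a]}(n_{[a]}-1)\}$ and $\pr(R_i = r, R_j = r') = n_{[a]r}n_{[a]r'}/\{n_{[a]}(n_{[a]}-1)\}$ for $i \neq j$ and $r \neq r'$. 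Standard bookkeeping collapses this to the Neyman identity via the finite-population decomposition $S_{[a]}^2(r\text{-}r') = S_{[a]}^2(r) + S_{[a]}^2(r') - 2 C_{[a]}(r, r')$, where $C_{[a]}(r,r') = (n_{[a]}-1)^{-1}\sum_{i:A_i=a}\{Y_i(r)-\bar{Y}_{[a]}(r)\}\{Y_i(r')-\bar{Y}_{[a]}(r')\}$.

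For the second formula, I use the representation $\hat{\tau}(r,r') = \sum_{a=1}^H w_{[a]} \hat{\tau}_{[a]}(r,r')$ from \eqref{eq:est_pe_fix}. By Proposition~\ref{prop:CR_stratified}(2), the assignments---and therefore the estimators $\hat{\tau}_{[a]}(r,r')$---are mutually independent across strata, so $\Var\{\hat{\tau}(r,r')\} = \sum_{a=1}^H w_{[a]}^2 \Var\{\hat{\tau}_{[a]}(r,r')\}$, and substituting the first formula yields the stated expression.

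As an alternative route, one can specialize Theorem~\ref{thm:var_gen} by checking that under complete randomization $d_{[a][a']}(r,r') = 0$ whenever $a \neq a'$ (by inter-stratum independence), $c_{[a][a]}(r,r') = 0$ for every $r, r'$ after a short algebraic simplification, and $b_{[a]}(r) = n_{[a]}/n_{[a]r}$; plugging these into \eqref{eq:var_gen_h} and \eqref{eq:var_gen} collapses the general formulas immediately, and the cross-stratum double summation in \eqref{eq:var_gen} vanishes. The main obstacle in either approach is purely notational: keeping track of the indicator covariances and of the within-stratum versus cross-stratum contributions. No new conceptual ingredient beyond Proposition~\ref{prop:CR_stratified} and the classical finite-population variance calculation is required.
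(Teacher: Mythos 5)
Your proposal is correct and matches the paper's own reasoning: the paper justifies this corollary exactly via the stratified-experiment equivalence of Proposition \ref{prop:CR_stratified} together with the classical Neyman multi-arm variance formula and cross-stratum independence, and its supplement (Proposition A1) records precisely your alternative route, namely $c_{[a][a']}(r,r')=0$ and $b_{[a]}(r)=n_{[a]}/n_{[a]r}$ plugged into Theorem \ref{thm:var_gen}. Both of your routes, including the stated indicator moments and the identity $S_{[a]}^2(r\text{-}r') = S_{[a]}^2(r)+S_{[a]}^2(r')-2C_{[a]}(r,r')$, are sound, so nothing further is needed.
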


From Corollary \ref{cor:var_com_res}, the variance formula of the subgroup average peer effect estimator 
under complete randomization is the same as that for classical completely randomized experiments with multiple treatments \citep{Neyman:1923}. This follows from the equivalence relationship in Proposition \ref{prop:CR_stratified}. 
Corollary \ref{cor:var_com_res} also implies that  $\Var\{\hat{\tau}(r,r')\} \equiv
\Var\{\sum_{a=1}^{H} w_{[a]} \hat{\tau}_{[a]}(r,r') \} = 
\sum_{a=1}^H w_{[a]}^2 \Var\{\hat{\tau}_{[a]}(r,r')\}$. This follows from the mutual independence of $\{ \hat{\tau}_{[a]}(r,r'): 1\leq a\leq H \}$ in an experiment stratified on attributes. 

From Corollary \ref{cor:var_com_res}, the $n_{[a]r}$'s are the effective sample sizes. One the one hand, this is intuitive because they are the sample sizes of the stratified experiment described in Proposition \ref{prop:CR_stratified}. One the other hand, this is counterintuitive because units in the same group have correlated observed outcomes. However, this correlation does not diminish the effective sample sizes in contrast to the correlation in standard group-randomized experiments. Units in the same group could potentially be in a different group under a different realization of the treatment assignment. The probability that two given units are in the same group decreases as $n$ increases, and so does the correlation between their observed outcomes.

\subsection{Estimating the sampling variances}\label{sec:var_est_com_rand}

From Proposition \ref{prop:CR_stratified}, $\hat{Y}_{[a]}(r) = \sum_{i:A_i=a,R_i=r}Y_i/n_{[a]r}$ reduces to the sample mean, and
\begin{align}\label{eq:observed_sample_variance}
s_{[a]}^2(r)
& =  \frac{n_{[a]r}}{n_{[a]r}-1}\left\{
\frac{1}{n_{[a]r}}\sum_{i:A_i=a,R_i=r} Y_i^2 -\hat{Y}^2_{[a]}(r)
\right\} = (n_{[a]r}-1)^{-1}\sum_{i:A_i=a, R_i=r}\left\{
Y_i - \hat{Y}_{[a]}(r)
\right\}^2
\end{align}
reduces to the sample variance of the observed outcomes for units with attribute $a$ receiving treatment $r$.
Formula \eqref{eq:observed_sample_variance}, coupled with Corollary \ref{cor:var_com_res}, simplifies the variance estimators under complete randomization, which coincide with \citet{Neyman:1923}'s conservative variance estimators under classical completely randomized experiments with multiple treatments.

\begin{corollary}\label{cor:var_est_com_res}
Under Assumptions \ref{asp:sutva_pe} and \ref{asp:exres_pe}, and under the complete randomization defined in 
Section \ref{sec:comp_rand}, 
for $1\leq a\leq H$ and $r\neq r'\in \mathcal{R}$,
 the variance estimators become
\begin{align}\label{eq:var_est_CRFG}
\hat{V}_{[a]}({r}, {r}')
& = 
\frac{s_{[a]}^2({r})}{n_{[a]r}}
+ 
\frac{s_{[a]}^2({r}')}{n_{[a]r'}}, \quad 
\hat{V}({r}, {r}')
 = \sum_{a=1}^H w_{[a]}^2
\left\{
\frac{s_{[a]}^2({r})}{n_{[a]r}}
+ 
\frac{s_{[a]}^2({r}')}{n_{[a]r'}}
\right\}.
\end{align}
Moreover, $E\{\hat{V}_{[a]}({r}, {r}')\} - \Var\{\hat{\tau}_{[a]}(r,r')\} = n_{[a]}^{-1}S_{[a]}^2(r\text{-}r')\geq 0$, which becomes zero under additivity for $a$, 
and $E\{\hat{V}({r}, {r}')\} - \Var\{\hat{\tau}(r,r')\} = n^{-1}\sum_{a=1}^{H}w_{[a]} S_{[a]}^2(r\text{-}r') \geq 0$, which becomes zero under additivity for all $1\leq a\leq H$. 
\end{corollary}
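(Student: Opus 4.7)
The plan is to reduce everything to the stratified randomized experiment given by Proposition \ref{prop:CR_stratified}, so that classical Neyman-style variance estimation applies stratum by stratum. First I would verify that the general-case sample variance $s_{[a]}^2(r)$ in \eqref{eq:s_hr^2} collapses to the familiar sample variance in \eqref{eq:observed_sample_variance}. Under complete randomization one reads off $\pi_{[a]}(r)=n_{[a]r}/n_{[a]}$ and $\pi_{[a][a]}(r,r)=n_{[a]r}(n_{[a]r}-1)/[n_{[a]}(n_{[a]}-1)]$ from Proposition \ref{prop:CR_stratified}. Substituting into \eqref{eq:d_def} gives $d_{[a][a]}(r,r) = n_{[a]}(n_{[a]r}-n_{[a]})/[n_{[a]r}(n_{[a]}-1)]$, and then \eqref{eq:c_def} yields the crucial simplification $c_{[a][a]}(r,r)=0$. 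Plugging these values into \eqref{eq:s_hr^2} and collapsing the prefactor produces \eqref{eq:observed_sample_variance}. The forms of $\hat{V}_{[a]}(r,r')$ and $\hat{V}(r,r')$ in \eqref{eq:var_est_CRFG} then appear simply as sample analogues of the first two (nonnegative) terms in Corollary \ref{cor:var_com_res}, with the inestimable $S_{[a]}^2(r\text{-}r')$ term dropped.

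Next I would handle the bias of $\hat{V}_{[a]}(r,r')$. By Theorem \ref{thm:est_var_gen}, $E\{s_{[a]}^2(r)\}=S_{[a]}^2(r)$, and linearity of expectation gives
\begin{equation*}
E\{\hat{V}_{[a]}(r,r')\}=\frac{S_{[a]}^2(r)}{n_{[a]r}}+\frac{S_{[a]}^2(r')}{n_{[a]r'}}.
\end{equation*}
Comparing with the exact variance formula of Corollary \ref{cor:var_com_res} yields directly
\begin{equation*}
E\{\hat{V}_{[a]}(r,r')\}-\Var\{\hat{\tau}_{[a]}(r,r')\} = n_{[a]}^{-1} S_{[a]}^2(r\text{-}r'),
\end{equation*}
which is nonnegative since $S_{[a]}^2(r\text{-}r')$ is a finite-population variance, and vanishes precisely under additivity for attribute $a$ by Definition \ref{def:add_h}.

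Finally, for the average peer effect estimator, the mutual independence across strata in Proposition \ref{prop:CR_stratified} ensures that each $\hat{\tau}_{[a]}(r,r')$ is a function only of within-stratum assignments, so the collection $\{\hat{\tau}_{[a]}(r,r'):1\leq a\leq H\}$ is mutually independent. Hence both the variance and its estimator decompose as weighted sums, $\Var\{\hat{\tau}(r,r')\} = \sum_a w_{[a]}^2 \Var\{\hat{\tau}_{[a]}(r,r')\}$ and $E\{\hat{V}(r,r')\} = \sum_a w_{[a]}^2 E\{\hat{V}_{[a]}(r,r')\}$. Subtracting and using the identity $w_{[a]}^2/n_{[a]} = w_{[a]}/n$, which follows from $w_{[a]}=n_{[a]}/n$, gives $E\{\hat{V}(r,r')\} - \Var\{\hat{\tau}(r,r')\} = n^{-1}\sum_{a=1}^H w_{[a]} S_{[a]}^2(r\text{-}r') \geq 0$, with equality precisely under additivity for every attribute $a$.

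The only step requiring any algebra is the verification that $c_{[a][a]}(r,r)=0$ under complete randomization; the remainder is bookkeeping that follows immediately from Proposition \ref{prop:CR_stratified}, Theorem \ref{thm:est_var_gen}, and Corollary \ref{cor:var_com_res}.
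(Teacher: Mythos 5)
Your proposal is correct and follows essentially the same route as the paper: reduce to the stratified experiment of Proposition \ref{prop:CR_stratified} (whose consequences $\pi_{[a]}(r)=n_{[a]r}/n_{[a]}$, $\pi_{[a][a]}(r,r)=n_{[a]r}(n_{[a]r}-1)/\{n_{[a]}(n_{[a]}-1)\}$ and $c_{[a][a]}(r,r)=0$ the paper records explicitly), collapse $s_{[a]}^2(r)$ in \eqref{eq:s_hr^2} to the sample variance \eqref{eq:observed_sample_variance}, and then combine $E\{s_{[a]}^2(r)\}=S_{[a]}^2(r)$ from Theorem \ref{thm:est_var_gen} with the exact variance formulas of Corollary \ref{cor:var_com_res}, using $w_{[a]}^2/n_{[a]}=w_{[a]}/n$ for the aggregated estimator. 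Your algebraic verifications (of $d_{[a][a]}(r,r)$, $c_{[a][a]}(r,r)=0$, and the prefactor collapse) match the paper's computations.
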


\subsection{Asymptotic distributions and confidence intervals for peer effects}\label{sec:dist_ci_single_estimator}
The asymptotic analysis embeds the $n$ units into a sequence of finite populations with increasing sizes. See  \citet{fcltxlpd2016} for a review of finite population asymptotics in causal inference.
Under complete randomization, if some regularity conditions hold, then $\hat{\tau}_{[a]}(r,r')$ is asymptotically Normal. We can then construct 
a $1-\alpha$ Wald-type confidence interval for $\tau_{[a]}(r,r')$: $\hat{\tau}_{[a]}(r,r') \pm q_{1-\alpha/2}\hat{V}^{1/2}_{[a]}({r}, {r}')$, with $q_{1-\alpha/2}$ being the $(1-\alpha/2)$th quantile of $\mathcal{N}(0,1)$. 
Because the variance estimator $\hat{V}_{[a]}({r}, {r}')$ in \eqref{eq:var_est_CRFG} overestimates the true sampling variance on average, the  confidence interval is asymptotically conservative, with the limit of coverage probability larger than or equal to the nominal level. 
Analogously, 
we can construct asymptotically conservative confidence intervals for $\tau({r}, {r}')$. 
We formally state the regularity condition as follows.

\begin{condition}\label{cond:fp}
For any $1\leq a\leq H, r\neq r'\in \mathcal{R}$, as $n\rightarrow \infty$,
\begin{itemize}
\item[(i)] the proportions,  
$w_{[a]}$ and $n_{[a]r}/n_{[a]},$ have positive limits, 
\item[(ii)] the finite population variances of potential outcomes and individual peer effects, $S_{[a]}^2(r)$ and $S_{[a]}^2(r\text{-}r')$, have limits, and at least one of the limits of $\{  S_{[a]}^2({r}): {r}\in\mathcal{R} \} $ are non-zero, 
\item[(iii)] $\max_{i:A_i=a}|Y_i(r)-\bar{Y}_{[a]}(r)|^2/n_{[a]}\rightarrow 0$. 
\end{itemize}
\end{condition}

Conditions (i) and (ii) are natural in most applications. In our motivating application, 
GPA is bounded within $[0,4]$, and therefore condition (iii) holds automatically \citep{fcltxlpd2016}.
We summarize the asymptotic results below. 

\begin{theorem}\label{thm:clt_comp_rand}
Under Assumptions \ref{asp:sutva_pe} and \ref{asp:exres_pe}, and under the complete randomization defined in Section \ref{sec:comp_rand}, 
if Condition \ref{cond:fp} holds, 
then, 
for any $1\leq a\leq H$ and $r\neq r'\in \mathcal{R}$,  
\begin{itemize}
\item[(a)] $\hat{\tau}_{[a]}(r,r')$ and $\hat{\tau}(r,r')$ are asymptotically Normal, 
\item[(b)] the Wald-type confidence intervals for ${\tau}_{[a]}(r,r')$ and $\tau(r,r')$ are asymptotically conservative, unless the peer effects are additive for units with the same attribute.  
\end{itemize}
\end{theorem}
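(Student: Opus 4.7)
The plan is to apply Proposition~\ref{prop:CR_stratified}, which recasts complete randomization as a stratified randomized experiment in which, within stratum $a$, $n_{[a]r}$ of the $n_{[a]}$ units with attribute $a$ receive treatment $r$ by simple random assignment, and the assignments across different strata are mutually independent. This reduction lets me invoke the finite-population central limit theorem for classical completely randomized experiments with multiple treatments \citep{fcltxlpd2016} stratum by stratum.

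For part (a), I first treat each stratum separately. By Corollary~\ref{eg:tam_fix}, $\hat{\tau}_{[a]}(r,r')$ is the standard Neymanian difference-in-means estimator within stratum $a$. Condition~\ref{cond:fp}(i) gives positive limits for the arm proportions $n_{[a]r}/n_{[a]}$, Condition~\ref{cond:fp}(ii) gives a non-degenerate limit for the finite population variances of potential outcomes, and Condition~\ref{cond:fp}(iii) supplies the Lindeberg-type negligibility of extreme potential outcomes. These match exactly the hypotheses of the finite-population CLT, yielding asymptotic normality of $\sqrt{n_{[a]}}\{\hat{\tau}_{[a]}(r,r') - \tau_{[a]}(r,r')\}$. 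Independence across strata, together with the Cram\'er--Wold device, upgrades the marginal convergence to joint convergence. Because $\hat{\tau}(r,r') = \sum_{a=1}^H w_{[a]} \hat{\tau}_{[a]}(r,r')$ is a fixed linear combination with weights converging to positive limits by Condition~\ref{cond:fp}(i), its asymptotic normality follows immediately.

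For part (b), I combine the asymptotic normality from part (a) with the conservativeness quantified in Corollary~\ref{cor:var_est_com_res}. Theorem~\ref{thm:est_var_gen} gives $E\{s_{[a]}^2(r)\}=S_{[a]}^2(r)$, and a finite population weak law of large numbers under Condition~\ref{cond:fp} shows $s_{[a]}^2(r)$ is consistent for $\lim S_{[a]}^2(r)$. Hence $n_{[a]}\hat{V}_{[a]}(r,r')$ converges in probability to $\lim\{(n_{[a]}/n_{[a]r})S_{[a]}^2(r) + (n_{[a]}/n_{[a]r'})S_{[a]}^2(r')\}$, whereas $n_{[a]}\Var\{\hat{\tau}_{[a]}(r,r')\}$ converges to the same quantity minus $\lim S_{[a]}^2(r\text{-}r') \geq 0$ by Corollary~\ref{cor:var_com_res}. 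Slutsky's theorem then shows that the asymptotic coverage probability of the Wald interval for $\tau_{[a]}(r,r')$ is at least $1-\alpha$, with equality iff $S_{[a]}^2(r\text{-}r') \to 0$, i.e., additivity for attribute $a$. The identical argument applied to $\hat{V}(r,r')$, using mutual independence across strata to combine the consistency results, handles $\hat{\tau}(r,r')$ and gives exact nominal coverage iff additivity holds for every $a$.

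The main obstacle I anticipate is establishing the weak law for $s_{[a]}^2(r)$ in the finite population regime: its expectation is already $S_{[a]}^2(r)$ by Theorem~\ref{thm:est_var_gen}, but vanishing of its variance requires controlling fourth-moment-like sums of potential outcomes, which is where Condition~\ref{cond:fp}(iii) is used to bound individual deviations uniformly. Once this consistency is in hand the rest of the argument is a routine combination of Slutsky's theorem with Proposition~\ref{prop:CR_stratified}; the conservativeness gap is precisely $n_{[a]}^{-1}S_{[a]}^2(r\text{-}r')$, so the dichotomy in part (b) between strict conservativeness and exact nominal coverage is driven entirely by whether additivity fails or holds.
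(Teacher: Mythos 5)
Your proposal is correct and follows essentially the same route as the paper: reduce to a stratified completely randomized experiment via Proposition~\ref{prop:CR_stratified}, invoke the finite-population CLT of \citet{fcltxlpd2016} stratum by stratum under Condition~\ref{cond:fp}, combine strata by independence and linearity for $\hat{\tau}(r,r')$, and obtain conservativeness from the consistency of $s_{[a]}^2(r)$ together with the gap $n_{[a]}^{-1}S_{[a]}^2(r\text{-}r')$. The only cosmetic difference is that the paper simply cites the known finite-population consistency result (Proposition 3 of that reference) for $s_{[a]}^2(r)$ rather than re-establishing a weak law, so the ``obstacle'' you flag is already settled in the literature.
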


\subsection{Randomization-based and regression-based analyses}
Theorem \ref{thm:clt_comp_rand} is purely randomization-based without any modeling assumptions of the outcomes. Regression-based analysis is also popular in practice. Suppose we fit a linear model for the observed outcomes
\begin{align}\label{eq:linear_model}
Y_i = \mu + \alpha_{[A_i]} + \beta_{R_i} + \lambda_{[A_i]R_i} + \varepsilon_i,
\end{align}
where $\mu$ is the intercept, 
$\alpha_{[a]}$ represents the main ``effect" of attribute $a$, 
$\beta_{r}$ represents the main effect of treatment $r$, 
and $\lambda_{[a]r}$ represents the interaction between attribute $a$ and treatment $r$.
The traditional linear regression assumes that the error terms follow independent zero-mean (Normal) distributions and generate
the randomness of the observed outcomes. 

Under model \eqref{eq:linear_model}, we need some constraints to avoid over-parameterization:
$
\sum_{a=1}^H \alpha_{[a]} = 0,$
$
\sum_{r\in \mathcal{R}} \beta_r = 0,
$
$
\sum_{a=1}^H \lambda_{[a]r}=0,$
and
$
\sum_{r\in \mathcal{R}} \lambda_{[a]r}=0$,
for $ 1\leq a\leq H$ and $ r\in \mathcal{R}$.
Let $\mu_{[a]r}=\mu+\alpha_{[a]}+\beta_r+\lambda_{[a]r}$. Then we can interpret $\mu_{[a]r}-\mu_{[a]r'}$ as the subgroup average peer effect of treatment $r$ versus $r'$ for units with attribute $a$. 
The least squares estimators of the coefficients are 
\begin{align*}
\hat{\mu} & =   \frac{1}{H|\mathcal{R}|}\sum_{a=1}^H \sum_{r\in \mathcal{R}} \hat{Y}_{[a]}(r), & 
\hat{\alpha}_{[a]}  & =  \frac{1}{|\mathcal{R}|}\sum_{r\in \mathcal{R}} \hat{Y}_{[a]}(r) -\hat{\mu},\\
\hat{\beta}_r & =  \frac{1}{H}\sum_{a=1}^H  \hat{Y}_{[a]}(r) - \hat{\mu}, &   
\hat{\lambda}_{[a]r}  & =  \hat{Y}_{[a]}(r)-(\hat{\mu}+\hat{\alpha}_{[a]}+\hat{\beta}_r), \quad (1\leq a\leq H, r\in\mathcal{R}).
\end{align*}
Then we have the following proposition. 

\begin{proposition}\label{prop:equiva_linear}
Under the linear model (\ref{eq:linear_model}), for any $1\leq a\leq H$ and $r\neq r'\in \mathcal{R}$,  the least squares estimator for the subgroup average peer effect is
\begin{align*}
\hat{\mu}_{[a]r} - \hat{\mu}_{[a]r'} &  = (\hat{\mu}+\hat{\alpha}_{[a]}+\hat{\beta}_r+\hat{\lambda}_{[a]r})-(\hat{\mu}+\hat{\alpha}_{[a]}+\hat{\beta}_{r'}+\hat{\lambda}_{[a]r'})= \hat{Y}_{[a]}(r) -\hat{Y}_{[a]}(r') = \hat{\tau}_{[a]}(r,r'),
\end{align*}
with the Huber--White variance estimator  
\begin{align*}
\hat{V}_{[a],\text{HW}}( r,r') 
& = \frac{n_{[a]r}-1}{n_{[a]r}}\frac{s_{[a]}^2(r)}{n_{[a]r}}+
\frac{n_{[a]r'}-1}{n_{[a]r'}}\frac{s_{[a]}^2(r')}{n_{[a]r'}}
 \approx \frac{s_{[a]}^2(r)}{n_{[a]r}} + \frac{s_{[a]}^2(r')}{n_{[a]r'}} = \hat{V}_{[a]}(r,r') . 
\end{align*}
\end{proposition}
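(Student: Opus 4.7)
The key observation is that the right-hand side of \eqref{eq:linear_model}, together with the sum-to-zero constraints listed immediately after it, is a reparameterization of the \emph{cell-means} model: for each pair $(a,r)\in\{1,\ldots,H\}\times\mathcal{R}$ there is one free parameter $\mu_{[a]r}=\mu+\alpha_{[a]}+\beta_r+\lambda_{[a]r}$, and the $H|\mathcal{R}|$ constraints exactly match the $H|\mathcal{R}|$ effective degrees of freedom. My first step is therefore to note that \eqref{eq:linear_model} is a saturated two-way ANOVA with respect to $(A_i,R_i)$, so ordinary least squares fits each cell by its sample mean: $\hat{\mu}_{[a]r}=\hat Y_{[a]}(r)$. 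Plugging this into the explicit formulas for $\hat\mu,\hat\alpha_{[a]},\hat\beta_r,\hat\lambda_{[a]r}$ given in the excerpt and checking the sum-to-zero constraints verifies both that those are the correct OLS estimators and that $\hat\mu+\hat\alpha_{[a]}+\hat\beta_r+\hat\lambda_{[a]r}=\hat Y_{[a]}(r)$ for every $(a,r)$. Subtracting two cells then yields $\hat\mu_{[a]r}-\hat\mu_{[a]r'}=\hat Y_{[a]}(r)-\hat Y_{[a]}(r')=\hat\tau_{[a]}(r,r')$, which is the first claim.

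For the variance claim, I would reparameterize the design to the cell-means basis so that the design matrix $X$ has one indicator column per cell $(a,r)$. Then $X^\top X$ is diagonal with entries $n_{[a]r}$, the OLS fitted values in cell $(a,r)$ are $\hat Y_{[a]}(r)$, and the residual for observation $i$ with $(A_i,R_i)=(a,r)$ is $\hat e_i=Y_i-\hat Y_{[a]}(r)$. The Huber--White HC0 sandwich estimator $(X^\top X)^{-1}X^\top\mathrm{diag}(\hat e_i^2)X(X^\top X)^{-1}$ is therefore block-diagonal across cells, with the $(a,r)$-diagonal entry
\begin{equation*}
n_{[a]r}^{-2}\sum_{i:A_i=a,R_i=r}\hat e_i^2
=n_{[a]r}^{-2}\sum_{i:A_i=a,R_i=r}\{Y_i-\hat Y_{[a]}(r)\}^2
=\frac{n_{[a]r}-1}{n_{[a]r}}\cdot\frac{s_{[a]}^2(r)}{n_{[a]r}},
\end{equation*}
where the last equality uses the definition of $s_{[a]}^2(r)$ in \eqref{eq:observed_sample_variance}. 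Because $\hat\mu_{[a]r}-\hat\mu_{[a]r'}$ is a contrast across two distinct cells and the sandwich is block-diagonal, the HW variance of the contrast is the sum of the two cell-wise entries, giving exactly the displayed formula for $\hat V_{[a],\text{HW}}(r,r')$. The approximation $\hat V_{[a],\text{HW}}(r,r')\approx\hat V_{[a]}(r,r')$ is then immediate from $(n_{[a]r}-1)/n_{[a]r}\to 1$ and $(n_{[a]r'}-1)/n_{[a]r'}\to 1$ under Condition \ref{cond:fp}.

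The main obstacle is purely book-keeping: verifying that the sandwich formula localizes cleanly to each cell when the model is saturated. Once the cell-means reparameterization is adopted, the argument is transparent because $X^\top X$ is diagonal; the only thing to be careful about is that any transformation between the $(\mu,\alpha,\beta,\lambda)$ parameterization and the cell-means parameterization is an invertible linear map, so the contrast $\hat\mu_{[a]r}-\hat\mu_{[a]r'}$ and its HW variance estimate are invariant under that reparameterization and hence can be computed in whichever basis is most convenient. No asymptotics are needed for the equality part of the proposition and only the trivial limit $(n_{[a]r}-1)/n_{[a]r}\to 1$ is needed for the approximation.
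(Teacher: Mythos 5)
Your proposal is correct and follows essentially the same route the paper relies on: model \eqref{eq:linear_model} with the full interaction and the sum-to-zero constraints is a saturated (cell-means) regression, so OLS returns the cell sample means $\hat{Y}_{[a]}(r)$ and the HC0 sandwich in the cell-indicator basis is diagonal with entries $n_{[a]r}^{-2}\sum_{i:A_i=a,R_i=r}\{Y_i-\hat{Y}_{[a]}(r)\}^2=\{(n_{[a]r}-1)/n_{[a]r}\}\,s_{[a]}^2(r)/n_{[a]r}$, giving the displayed variance for the contrast. The only blemish is the remark that ``the $H|\mathcal{R}|$ constraints exactly match the $H|\mathcal{R}|$ effective degrees of freedom'': the number of independent constraints is $H+|\mathcal{R}|+1$, not $H|\mathcal{R}|$, though the conclusion that exactly $H|\mathcal{R}|$ free parameters remain (one per cell) is right and the argument is unaffected.
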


The linear outcome model (\ref{eq:linear_model}) includes the {\it interaction} between the unit's attribute $A_i$ and the treatment received $R_i$. Under (\ref{eq:linear_model}), both the point estimator and Huber--White variance estimator for the subgroup average peer effect are (nearly) identical to the randomization-based ones under complete randomization. Complete randomization justifies this regression-based analysis for peer effects. This result extends \citet{lin2013} for classical completely randomized experiments. However, such equivalence generally does not hold if the treatment assignment is not complete randomization, nor if we use the conventional variance estimator in linear models assuming homoscedasticity of the error terms.

Related to the discussion of effective sample sizes after Proposition \ref{cor:var_com_res}, we do {\it not} need to use cluster-robust standard errors even though some units are in the same group or cluster. Our inference depends solely on the random assignment of peers in contrast to model-based inferences \citep[e.g.,][]{carrell2013natural}. In our setting, randomization does {\it not} justify cluster-robust standard errors. Our view is similar to \citet{abadie2017should} in a different context. 

Many econometric analyses of peer effects did not include the interaction term \citep[e.g.,][]{Sacerdote01052001, carrell2013natural}. Complete randomization does {\it not} justify them in the presence of treatment effect heterogeneity. Sometimes, peers' outcomes also enter the right-hand side of the regression in \eqref{eq:linear_model}. It is then more difficult to interpret their least squares coefficients as causal effects estimators \citep{manski1993identification, angrist2014perils}.

\subsection{Asymptotic distributions and confidence sets for multiple peer effects}

Below we study the joint asymptotic sampling distribution of multiple average peer effect estimators. It is useful for constructing confidence sets and testing significance of multiple average peer effects simultaneously.
For mathematical convenience, we center the potential outcomes:
\begin{align*}
& \theta_i(r) = Y_i(r) - |\mathcal{R}|^{-1}\sum_{r'\in \mathcal{R}}Y_i(r'),\quad 
& & \theta_{[a]}(r) = n_{[a]}^{-1}\sum_{i:A_i=a}\theta_i(r) = \bar{Y}_{[a]}(r) -|\mathcal{R}|^{-1}\sum_{r'\in \mathcal{R}}\bar{Y}_{[a]}(r'), \\
& \theta_i(\mathcal{R}) =(\theta_i(r_1), \ldots, \theta_i(r_{|\mathcal{R}|}) )^\top, \quad 
& & \theta_{[a]}(\mathcal{R}) =(\theta_{[a]}(r_1), \ldots, \theta_{[a]}(r_{|\mathcal{R}|}) )^\top.
\end{align*}
Let
$\hat{\theta}_{[a]}(r) = \hat{Y}_{[a]}(r) -|\mathcal{R}|^{-1}\sum_{r'\in \mathcal{R}}\hat{Y}_{[a]}(r')$ 
be the centered subgroup average potential outcome estimator, vectorized as
$\hat{\theta}_{[a]}(\mathcal{R})=(\hat{\theta}_{[a]}(r_1), \ldots, \hat{\theta}_{[a]}(r_{|\mathcal{R}|}) )^\top$. For any $r,r'\in \mathcal{R}$, the individual peer effect $\tau_i(r,r')$, the subgroup average peer effect $\tau_{[a]}(r,r')$, and the subgroup average peer effect estimator $\hat{\tau}_{[a]}(r,r')$ are the same linear transformations of $\theta_i(\mathcal{R}), \theta_{[a]}(\mathcal{R})$ and $\hat{\theta}_{[a]}(\mathcal{R})$, respectively. 
Therefore, it suffices to study the joint asymptotic sampling distribution of the $\hat{\theta}_{[a]}(\mathcal{R})$'s for all $a$, and construct confidence sets for $\theta_{[a]}(\mathcal{R})$'s and their linear transformations.

Define ${\Gamma} = {I}_{|\mathcal{R}|} - |\mathcal{R}|^{-1}{1}_{|\mathcal{R}|}{1}_{|\mathcal{R}|}^\top$ as an  $|\mathcal{R}|\times |\mathcal{R}|$ projection matrix orthogonal to $1_{|\mathcal{R}|}$. The theorem below summarizes the results for the joint inference of multiple peer effects. 

\begin{theorem}\label{thm:asym_adjust_treat_effect}
Under Assumptions \ref{asp:sutva_pe} and \ref{asp:exres_pe}, the complete randomization defined in Section \ref{sec:comp_rand}, 
and Condition \ref{cond:fp}, (a) $\hat{\theta}_{[1]}(\mathcal{R}), \ldots,$  $\hat{\theta}_{[H]}(\mathcal{R})$ are mutually independent; (b) $\hat{\theta}_{[a]}(\mathcal{R})$ is unbiased for $\theta_{[a]}(\mathcal{R})$ with sampling covariance $\Cov\{\hat{{\theta}}_{[a]}(\mathcal{R})\}$ as follows: 
\begin{align*} %\label{eq:var_adj_treat_effect}
%\Cov\{\hat{{\theta}}_h(\mathcal{R})\} & = &  
\Gamma ~
\text{diag}\left\{ 
n_{[a]r_1}^{-1}S^2_{[a]}(r_1),\ldots,  
n_{[a]r_{|\mathcal{R}|}}^{-1}S_{[a]}^2(r_{|\mathcal{R}|})\right\} 
 \Gamma
- 
\frac{1}{n_{[a]}(n_{[a]}-1)} 
\sum_{i:A_i=a} 
\left\{\theta_i(\mathcal{R}) - \theta_{[a]}(\mathcal{R})\right\}\left\{\theta_i(\mathcal{R}) - \theta_{[a]}(\mathcal{R})\right\}^\top; 
\end{align*}
(c) 
$\hat{{\theta}}_{[a]}(\mathcal{R})-{\theta}_{[a]}(\mathcal{R})$ is asymptotically Normal with mean 0 and covariance $\Cov\{\hat{{\theta}}_{[a]}(\mathcal{R})\}$; 
(d) the covariance estimator 
\begin{align}\label{eq:est_var_adj_treat_effect}
\widehat{\Cov}\{\hat{{\theta}}_{[a]}(\mathcal{R})\} = 
\Gamma~
\text{diag}\left\{
n_{[a]r_1}^{-1} s^2_{[a]}(r_1), \ldots, n_{[a]r_{|\mathcal{R}|}}^{-1} s_{[a]}^2(r_{|\mathcal{R}|})
\right\}
%\begin{pmatrix}
%n_{hr_1}^{-1} s^2_h(r_1) & \ldots & 0\\
%\vdots & \ddots & \vdots\\
%0 & \ldots & n_{hr_{|\mathcal{R}|}}^{-1} s_h^2(r_{|\mathcal{R}|})
%\end{pmatrix}
\Gamma 
\end{align}
is conservative in expectation, unless the peer effects are additive for units with attribute $a$. 
\end{theorem}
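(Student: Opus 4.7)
The plan is to deduce Theorem \ref{thm:asym_adjust_treat_effect} by reducing each within-stratum randomization to a classical completely randomized experiment via Proposition \ref{prop:CR_stratified}, and then to exploit the linear representation $\hat{\theta}_{[a]}(\mathcal{R}) = \Gamma \hat{Y}_{[a]}(\mathcal{R})$, where $\hat{Y}_{[a]}(\mathcal{R}) = (\hat{Y}_{[a]}(r_1), \ldots, \hat{Y}_{[a]}(r_{|\mathcal{R}|}))^\top$. This representation routes every property through the joint sampling behavior of $\hat{Y}_{[a]}(\mathcal{R})$ in a completely randomized experiment with $|\mathcal{R}|$ treatments on the $n_{[a]}$ units of attribute $a$, conjugated by the projector $\Gamma$.

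Part (a) follows at once from Proposition \ref{prop:CR_stratified}, which asserts that the treatment vectors $(R_i:A_i=a)$ are mutually independent across $a$; because each $\hat{\theta}_{[a]}(\mathcal{R})$ depends only on data from stratum $a$, the independence transfers to the estimators. For part (b), unbiasedness follows from Proposition \ref{thm:unbiased_gen} and linearity of $\Gamma$. The covariance identity comes from first computing $\Cov\{\hat{Y}_{[a]}(\mathcal{R})\} = \text{diag}\{n_{[a]r}^{-1}S_{[a]}^2(r)\} - n_{[a]}^{-1}\mathbf{S}_{[a]}$ under the stratum-$a$ completely randomized experiment, where $\mathbf{S}_{[a]}$ is the finite-population covariance matrix of $(Y_i(r_1), \ldots, Y_i(r_{|\mathcal{R}|}))^\top$ over units with $A_i=a$, and then sandwiching by $\Gamma$. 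Using $\Gamma Y_i(\mathcal{R}) = \theta_i(\mathcal{R})$ and $\Gamma \bar{Y}_{[a]}(\mathcal{R}) = \theta_{[a]}(\mathcal{R})$, the conjugated second term collapses exactly to $\{n_{[a]}(n_{[a]}-1)\}^{-1}\sum_{i:A_i=a}\{\theta_i(\mathcal{R})-\theta_{[a]}(\mathcal{R})\}\{\theta_i(\mathcal{R})-\theta_{[a]}(\mathcal{R})\}^\top$, matching the stated formula.

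For part (c), I would apply the finite-population CLT for completely randomized experiments with multiple treatments (the Li--Ding reference cited as fcltxlpd2016) within each stratum: Condition \ref{cond:fp} supplies the limiting treatment-group proportions, the non-degenerate limiting variances, and the max-over-units negligibility required by that CLT, yielding asymptotic joint normality of $\hat{Y}_{[a]}(\mathcal{R})$; multiplication by $\Gamma$ preserves asymptotic normality, and cross-stratum independence from part (a) assembles the full multivariate limit, with a Cram\'er--Wold reduction available to skip multivariate technicalities. For part (d), the key observation is that, by Proposition \ref{prop:CR_stratified}, the $n_{[a]r}$ observed $Y_i$'s with $A_i=a$, $R_i=r$ form a simple random sample from $\{Y_i(r):A_i=a\}$, so $E\{s_{[a]}^2(r)\}=S_{[a]}^2(r)$. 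Taking expectation of \eqref{eq:est_var_adj_treat_effect} then yields $\Gamma\,\text{diag}\{n_{[a]r}^{-1}S_{[a]}^2(r)\}\,\Gamma$, which exceeds the true covariance in (b) by the positive-semidefinite bias matrix $\{n_{[a]}(n_{[a]}-1)\}^{-1}\sum_{i:A_i=a}\{\theta_i(\mathcal{R})-\theta_{[a]}(\mathcal{R})\}\{\theta_i(\mathcal{R})-\theta_{[a]}(\mathcal{R})\}^\top$. Under additivity for attribute $a$, the identity $\theta_i(r)=|\mathcal{R}|^{-1}\sum_{r'\in\mathcal{R}}\tau_i(r,r')$ shows that $\theta_i(\mathcal{R})$ is constant in $i$, so the bias matrix vanishes and conservativeness sharpens to exact unbiasedness.

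The main obstacle is the covariance computation in part (b): the Neymanian formula for $\Cov\{\hat{Y}_{[a]}(\mathcal{R})\}$ in a completely randomized experiment with more than two treatments must be derived carefully, tracking the degrees-of-freedom factors and the off-diagonal cross-treatment covariances $(n_{[a]}-1)^{-1}\sum_{i:A_i=a}\{Y_i(r)-\bar{Y}_{[a]}(r)\}\{Y_i(r')-\bar{Y}_{[a]}(r')\}$, and then one must verify that $\Gamma\mathbf{S}_{[a]}\Gamma$ collapses cleanly to the centered-potential-outcome form stated. Once this algebraic bridge is built, parts (a), (c), and (d) are straightforward consequences of the stratified-experiment reduction together with routine projection and PSD arguments.
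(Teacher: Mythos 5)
Your proposal is correct and follows essentially the same route as the paper: reduce to a within-stratum completely randomized experiment via Proposition \ref{prop:CR_stratified}, apply the Neymanian covariance formula and finite-population CLT of the Li--Ding reference to $\hat{Y}_{[a]}(\mathcal{R})$, conjugate by $\Gamma$, and handle (d) via the positive-semidefinite bias term that vanishes under additivity. The only (harmless) difference is that for (d) you invoke unbiasedness of $s_{[a]}^2(r)$ directly, while the paper's proof cites its consistency together with the PSD argument; both deliver the stated conclusion.
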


From Theorem \ref{thm:asym_adjust_treat_effect}, we can then obtain the Wald-type asymptotic conservative confidence sets for $(\theta_{[1]}(\mathcal{R})^\top, \ldots, \theta_{[H]}(\mathcal{R})^\top)^\top$ and their linear transformations, including multiple average or subgroup average peer effects as special cases.

\section{Optimal treatment assignment mechanism}\label{sec:opt_tre_asign}
\subsection{Point estimator for the optimal treatment assignment mechanism}\label{sec:opt_tre_asign_est}

The results in previous sections are useful for decision making. We can use them to find the optimal treatment assignment mechanism for a new population of size $n'=m'(K+1)$. 
We need to assume that the new population is similar to the one in our data in some way. Otherwise, we cannot draw any conclusions in general. For instance, we assume that the subgroup average potential outcomes in the new population are linear transformations of those in our data, i.e. $
\bar{Y}'_{[a]}(r)=\yscale \bar{Y}_{[a]}(r)+\yshift
$
for some $\yscale > 0$ and $\yshift$, for all $1\leq a\leq H$ and $r \in \mathcal{R}$. 
In our motivating application, the new population usually consists of the students coming next year. The scale parameter $\yscale$ and shift parameter $\yshift$ can explain the proportional change and the absolute change of average GPAs across different years. These changes are possibly due to the difference in qualities of students and difficulties of exams across years.

We use complete randomization with ${L'}(z)$ fixed at some vector ${l'}$ for the new population. 
Our goal is to find $l'=(l'_{1}, \ldots, l'_{T})$ to maximize the expected total outcome.
We are looking for the optimal $l'$ of complete randomization, and the final assignment $Z'$ is still random. 
For any $1\leq a\leq H$ and $r\in \mathcal{R}$, let $n'_{[a]r}$ be the number of units  with attribute $a$ receiving treatment $r$ in the new population under complete randomization with ${L'}(z)$ fixed at $l'$. Proposition \ref{prop:CR_stratified} and \eqref{eq:n_hr} have the following useful implications.
First, $n'_{[a]r}$ is a deterministic function of ${l'}$. Second, within each stratum consisting of $n_{[a]}'$ units with attribute $a$, we randomly assign $n'_{[a]r}$ units to treatment $r$ for any $r\in \mathcal{R}$. Third, the treatments for units in different strata are mutually independent. Based on these, the expected total outcome under complete randomization  with ${L'}(z)=l'$ is
\begin{align}\label{eq:ex_total_outcome_new}
E\left(\sum_{i=1}^{{n'}} Y'_i\right) 
& = 
\sum_{a=1}^H E\left(\sum_{i:A_i=a} Y'_i\right) 
=
\sum_{a=1}^H \sum_{r\in \mathcal{R}} n'_{[a]r} \bar{Y}'_{[a]}(r) 
= \yscale \sum_{a=1}^H \sum_{r\in \mathcal{R}} n'_{[a]r} \bar{Y}_{[a]}(r) +
\sum_{a=1}^H n'_{[a]} \yshift, 
\end{align}
where the last equality follows from  
$
\bar{Y}'_{[a]}(r)=\yscale \bar{Y}_{[a]}(r)+ \yshift
$ and $\sum_{r\in \mathcal{R}}n_{[a]r}'=n'_{[a]}$. Although the expected total outcome (\ref{eq:ex_total_outcome_new}) of the new population depends on the unknown constants $\yscale >  0 $ and $\yshift$'s, 
the maximizer $l'_{\text{opt}}$
for this expected total outcome is the same as that for $\sum_{a=1}^H \sum_{r\in \mathcal{R}} n'_{[a]r} \bar{Y}_{[a]}(r)$. Moreover, we can unbiasedly estimate $\sum_{a=1}^H \sum_{r\in \mathcal{R}} n'_{[a]r} \bar{Y}_{[a]}(r)$ by replacing $\bar{Y}_{[a]}(r)$ with the corresponding unbiased estimator $\hat{Y}_{[a]}(r)$, and then use $\hat{l}'_{\text{opt}}$ that maximizes the unbiased estimator $\sum_{a=1}^H \sum_{r\in \mathcal{R}} n'_{[a]r} \hat{Y}_{[a]}(r)$ as an estimator for $l'_{\text{opt}}$. 
From \eqref{eq:n_hr}, the objective function reduces to 
\begin{align}\label{eq:obj_linear_fun}
\sum_{a=1}^H \sum_{r\in \mathcal{R}} n'_{[a]r} \hat{Y}_{[a]}(r)
& = \sum_{a=1}^H \sum_{r\in \mathcal{R}} 
\left\{
\sum_{t=1}^T I(g_t=\{a\}\cup r)l_t'g_t(a)
\right\}
\hat{Y}_{[a]}(r) \nonumber \\
& =  \sum_{t=1}^T \left\{\sum_{a=1}^H \sum_{r\in \mathcal{R}}I(g_t=\{a\}\cup r)g_t(a)\hat{Y}_{[a]}(r) \right\} l_t',
\end{align}
which is a linear function of $l'=(l'_1, \ldots, l'_T)$. 
In \eqref{eq:obj_linear_fun}, 
the coefficient of $l_t'$ is the estimated total outcome of $K+1$ units in the group with group attribute $g_t.$ 
The constraints on $l'$ include that all the $l_t'$'s are non-negative integers, and the number of units with attribute $a$ implied by $l'$ is fixed:  
\begin{align}\label{eq:obj_constrain_linear}
\begin{cases}
\sum_{t=1}^T g_t(a) l_t' = n_{[a]}', & (a = 1, \ldots, H), \\
l_t' \geq 0, \qquad   l_t' \text{ is an integer, } & (t = 1, \ldots, T). 
\end{cases}
\end{align}
Both the objective function \eqref{eq:obj_linear_fun} and the constraints \eqref{eq:obj_constrain_linear} are linear in $l'$. Therefore, finding the maximizer $\hat{l}'$ is a linear integer programming problem. When the sample size is not too large, we can enumerate all possible values of $l'$ to obtain the maximizer.

\citet{bhattacharya2009inferring} discussed the optimal peer assignment in a super population scenario where each unit has only one peer $(K=1)$. In that case, the optimization problem becomes a linear programming problem without the integer constraint.

\subsection{Inference for the optimal assignment mechanism}

Section \ref{sec:opt_tre_asign_est} gives a point estimator of the optimal treatment assignment mechanism. 
The uncertainty of the point estimator comes from the uncertainty of the $\hat{Y}_{[a]}(r)$'s.  
Below we construct confidence sets for the optimal $l'_{\text{opt}}$ of complete randomization. 
Note that $\sum_{r\in \mathcal{R}}n_{[a]r}'=n_{[a]}'$ is fixed for all $a$. 
By the definitions of $\theta_{[a]}(r)$ and $\hat{\theta}_{[a]}(r)$, 
the true and estimated optimal treatment assignment mechanisms satisfy 
\begin{align}\label{eq:relation_l_theta}
l'_{\text{opt}} & = 
\argmax_{l'\in \mathcal{L}'}\sum_{a=1}^H \sum_{r\in \mathcal{R}} n'_{[a]r} \bar{Y}_{[a]}(r) = 
\argmax_{l'\in \mathcal{L}'}\sum_{a=1}^H \sum_{r\in \mathcal{R}} n'_{[a]r} \theta_{[a]}(r), \\ 
\hat{l}'_{\text{opt}} & = 
\argmax_{l'\in \mathcal{L}'}\sum_{a=1}^H \sum_{r\in \mathcal{R}} n'_{[a]r} \hat{Y}_{[a]}(r)
=
\argmax_{l'\in \mathcal{L}'}\sum_{a=1}^H \sum_{r\in \mathcal{R}} n'_{[a]r} \hat{\theta}_{[a]}(r),
\nonumber
\end{align}
where $\mathcal{L}'$ denotes the set of all possible $l'$ satisfying the constraint \eqref{eq:obj_constrain_linear}. 
Here we represent $l'_{\text{opt}}$ using the centered subgroup average potential outcomes, because the $\theta_{[a]}(r)$'s have simpler asymptotically conservative confidence sets, as shown in Theorem \ref{thm:asym_adjust_treat_effect}. 
For any $\alpha\in (0,1)$, 
let $\mathcal{C}_{[a]}(\alpha)$ be the $1-\alpha$ Wald-type asymptotic conservative confidence set for ${\theta}_{[a]}(\mathcal{R})$. 
Then a $1-\alpha$ asymptotic conservative confidence set for $l'_{\text{opt}}$ is 
\begin{align}\label{eq:conf_set_l}
\left\{
\argmax_{l'\in \mathcal{L}'}\sum_{a=1}^H \sum_{r\in \mathcal{R}} n'_{[a]r} \bar{\theta}_{[a]}(r):   \bar{{\theta}}_{[1]}(\mathcal{R}) \in \mathcal{C}_{[1]}(\bar{\alpha}), \ldots, \bar{{\theta}}_{[H]}(\mathcal{R})  \in \mathcal{C}_{[H]}(\bar{\alpha}), \bar{\alpha}= 1 - (1-\alpha)^{1/H}
\right\}. 
\end{align}

The confidence set in \eqref{eq:conf_set_l} involves solving infinite linear integer programming problems, which are computationally intensive. 
More importantly, the interpretation of the confidence set in \eqref{eq:conf_set_l} seems unnatural for making decisions in the future because the ``confidence'' statement is a property over repeated sampling of the previous experiment. Ideally, we need to make future decisions conditioning on the observed data rather than
averaging over them. 
Below we use the ``fiducial distribution" \citep{Fisher:1935, dasgupta2014causal} of $l_{\text{opt}}'$. 

We start with the asymptotic sampling distribution in Theorem \ref{thm:asym_adjust_treat_effect}, and then swap the roles of the estimators and estimands. 
Let $\tilde{\theta}_{[a]}(\mathcal{R}) \equiv 
(\tilde{\theta}_{[a]}(r_1), \ldots, \tilde{\theta}_{[a]}(r_{|\mathcal{R}|}))$
be 
a multivariate Normal distribution with mean $\hat{\theta}_{[a]}(\mathcal{R})$ and covariance  $\widehat{\Cov}\{\hat{{\theta}}_{[a]}(\mathcal{R})\}$ in \eqref{eq:est_var_adj_treat_effect}, independently for $1\leq a\leq H$. 
The fiducial distribution of $l_{\text{opt}}'$ is the distribution of $\tilde{l}_{\text{opt}}'\equiv \argmax_{l'\in \mathcal{L}'}\sum_{a=1}^H \sum_{r\in \mathcal{R}} n'_{[a]r} \tilde{\theta}_{[a]}(r)$, with $\theta_{[a]}(r)$ in \eqref{eq:relation_l_theta} replaced by the {\it random vector}
$\tilde{\theta}_{[a]}(\mathcal{R})$. 
When there exist multiple maximizers for $\tilde{l}_{\text{opt}}'$, we randomly choose one of them with equal probability. 
Computationally, to simulate from $\tilde{l}_{\text{opt}}'$, 
we can first simulate the $\tilde{\theta}_{[a]}(\mathcal{R})$'s independently from Normal distributions and then calculate $\tilde{l}_{\text{opt}}'$. 
Compared to confidence sets, the "fiducial distribution" not only acts as a computational compromise but also has a natural Bayesian interpretation. We can view $\tilde{\theta}_{[a]}(\mathcal{R})$ as the Bayesian posterior distribution of $\theta_{[a]}(\mathcal{R})$ based on the sampling distribution in Theorem \ref{thm:asym_adjust_treat_effect} under a flat prior. Consequently, $\tilde{l}_{\text{opt}}'$ is the Bayesian posterior distribution  of $l_{\text{opt}}'$.
Rigorously, this is not a full Bayesian procedure but only a limited information Bayesian procedure \citep{kwan1999asymptotic, sims2006example}. It uses ``limited information'' from the asymptotic randomization distribution, but it does not impose a full outcome model for all units. Therefore, this ``fiducial distribution'' enjoys not only the robustness of randomization inference without outcome modeling but also the interpretability of Bayesian inference for decision making.

\section{Application to roommate assignment in a university in China}
\label{sec:edu_example}

\subsection{Overview of the data}
The data set consists of college students graduating in 2013 and 2014 from 25
departments of a university in China. The university assigns dorms to 
departments, and the 
departments then assign students to dorms. 
The roommate assignment is close to random partitioning for students of the same 
department, gender and graduating year.

Among students graduating in 2013 and 2014, $73.9\%$ of them were from Gaokao, $21.7\%$ were from recommendation, and $4.4\%$ were from other ways (omitted in our analysis). 
For example, $43.2\%$ students are from recommendation in the mathematics department, $45.2\%$ in the physics department, 
$52.5\%$ in the chemistry department, 
$22.4\%$ in the biology department, and 
$28.4\%$ in the informatics department. 
The outcome is the freshman year GPA. 
The average freshman GPAs are $3.26$ and $3.37$ for students from Gaokao and recommendation, respectively. On average, students from recommendation do better  than students from Gaokao during the freshman year. 
We first want to understand the effect of roommate types on students' academic performance. We then need to design an optimal roommate assignment.

\subsection{Point and interval estimators for peer effects}\label{sec:edu_point_interval}

The student room assignment is conditional on department, gender and the graduating year.
We focus on the male students graduating in 2013 from the Departments of Informatics and Physics separately. These two departments have larger sample sizes. Moreover, there is a close connection between the training in high school Olympiads and the freshman introductory courses in these two departments. 
The informatics department has 104 students from Gaokao and 52 students from recommendation. The physics department has 49 students from Gaokao and 43 students from recommendation.
If we conduct inference conditioning on the numbers of groups with different group attribute sets, 
the treatment assignment mechanism is equivalent to 
complete randomization. 
Recall that students from Gaokao have attribute 1, and students from recommendation have attribute 2. 
Under Assumptions \ref{asp:sutva_pe} and \ref{asp:exres_pe}, we have four treatments, contained in
$\mathcal{R}=\{r_1,r_2,r_3,r_4\}=\{111, 112, 122, 222\}$.

\begin{table}[htb]
\centering
\caption{Estimated peer effects with $\mathcal{R}=\{r_1,r_2,r_3,r_4\}=\{111, 112, 122, 222\}$. $\hat{\tau}, \hat{\tau}_{[1]}$ and $\hat{\tau}_{[2]}$ are the point estimators, and the numbers in the parentheses are the estimated standard errors. The bold numbers correspond to those peer effects significantly different from zero at level 0.05.}\label{table:est_pe}
\begin{tabular}{cccccccc}
\toprule
Department & Estimator & $(r_1,r_2)$ & $(r_1,r_3)$ & $(r_1,r_4)$ & $(r_2,r_3)$ & $(r_2,r_4)$ & $(r_3,r_4)$\\[5pt]
\midrule
Informatics & $\hat{\tau}$ & $-0.117$ & $0.008$ & ${\bf -0.313}$ & $0.125$ & ${\bf -0.196}$ & ${\bf -0.321}$\\
&  & {\small (0.086)} & {\small (0.109)} & {\small (0.071)} & {\small (0.101)} & {\small (0.059)} & {\small (0.089)}\\
 & $\hat{\tau}_{[1]}$ & $-0.117$ &  $0.074$ &  ${\bf -0.285}$ & $0.191$ & ${\bf -0.168}$ & ${\bf -0.359}$\\
& & {\small (0.112)} & {\small (0.152)} & {\small (0.087)} & {\small (0.145)} & {\small (0.074)} & {\small (0.127)}\\
& $\hat{\tau}_{[2]}$ & $-0.119$ & $-0.125$ & ${\bf -0.369}$ & $-0.006$ & ${\bf -0.250}$ & ${\bf -0.244}$\\
& & {\small (0.126)} & {\small (0.120)} & {\small (0.123)} & {\small (0.092)} & {\small (0.096)} &  {\small (0.088)}\\[5pt]
\midrule
Physics & $\hat{\tau}$ & 0.172 & $-0.108$ & $-0.095$ & ${\bf -0.280}$ & $-0.267$ & 0.013\\
& & {\small (0.142)} & {\small (0.140)} & {\small (0.177)} & {\small (0.103)} & {\small (0.150)} & {\small (0.148)}\\
 & $\hat{\tau}_{[1]}$ &  $0.329$ & $-0.099$ & $0.017$ & ${\bf -0.427}$ & $-0.311$ & $0.116$\\
& & {\small (0.185)} & {\small (0.167)} & {\small (0.268)} & {\small (0.145)} & {\small (0.255)} & {\small (0.243)} \\
& $\hat{\tau}_{[2]}$ & $-0.007$ & $-0.119$ & $-0.222$ & $-0.112$ & $-0.215$ & $-0.103$\\
& & {\small (0.219)} & {\small (0.231)} & {\small (0.225)} & {\small (0.145)} & {\small (0.135)} & {\small (0.153)}\\
\bottomrule
\end{tabular}
\end{table}

Table \ref{table:est_pe} shows the estimated average peer effects for these two departments with estimated standard errors based on Corollary \ref{cor:var_est_com_res}. Treatment $r_4$ is significantly better than other treatments for students in the informatics department. Treatment $r_3$ is significantly better than treatment $r_2$ for students from Gaokao in the physics department.

\begin{figure}[t]
\centering
\begin{subfigure}{.35\textwidth}
  \centering
  \includegraphics[width=1\linewidth]{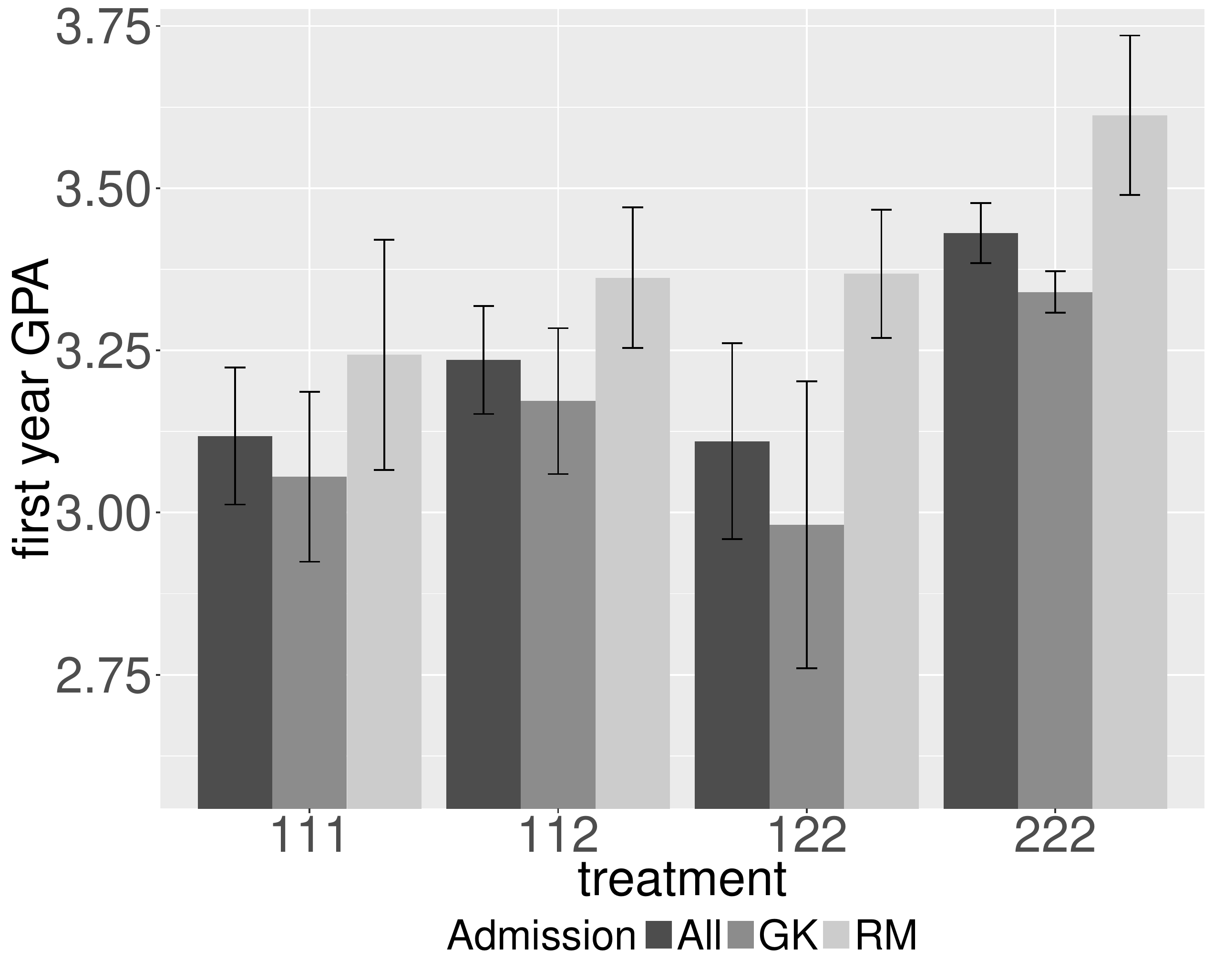}
   \caption{
    \centering Informatics}
  \label{fig:star_coverage_all}
\end{subfigure}%
\begin{subfigure}{.35\textwidth}
  \centering
  \includegraphics[width=1\linewidth]{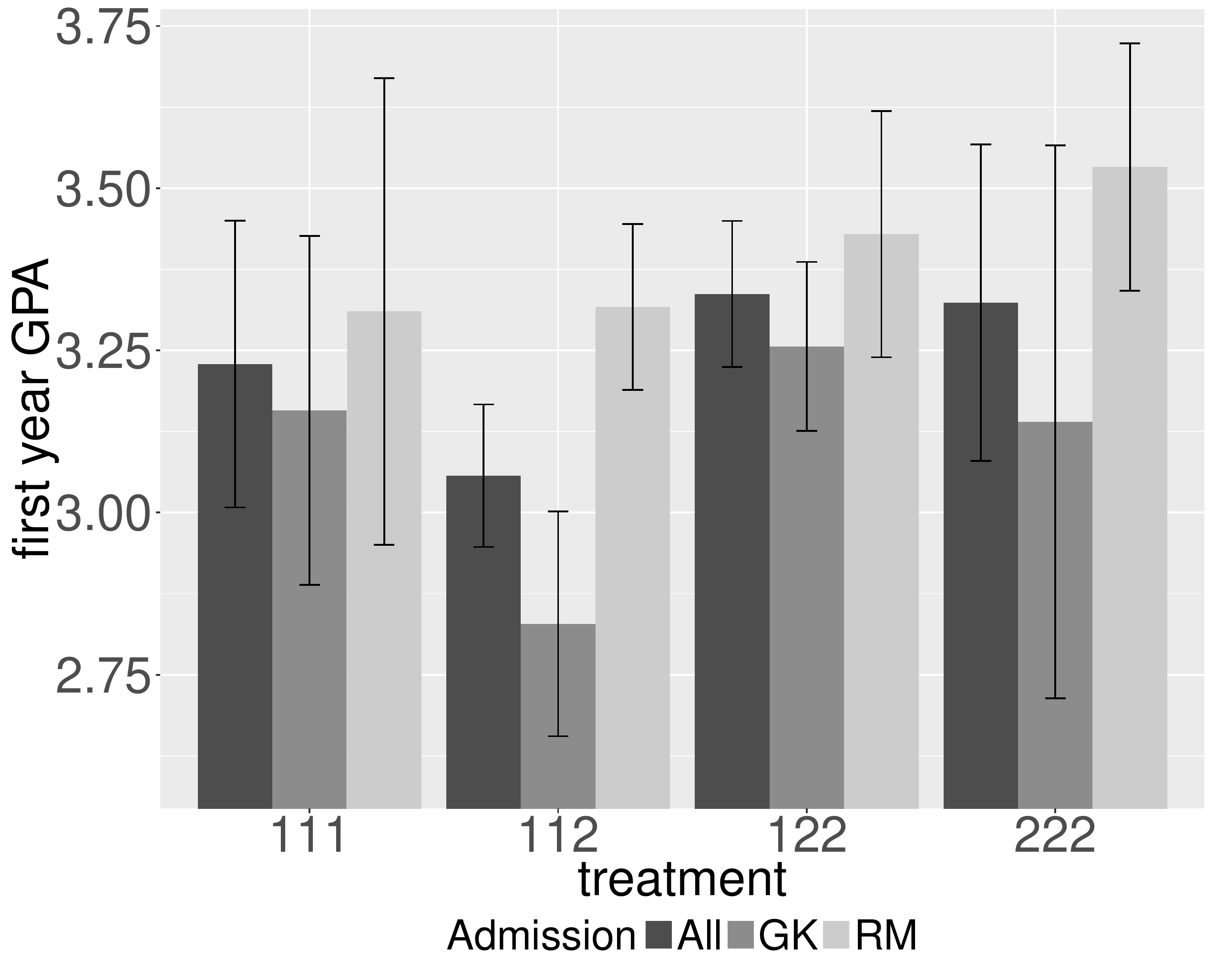}
   \caption{
    \centering Physics}
    \label{fig:star_ci_length}
\end{subfigure}
\caption{
Estimated average potential outcomes with $\mathcal{R}  = \{111, 112, 122, 222\}$. 
The black, grey and light-grey bars correspond to all students, students from Gaokao, and students from recommendation, respectively. The solid lines denote the $95\%$ confidence intervals. 
}\label{fig:est_ave_po}
\end{figure}

Figure \ref{fig:est_ave_po} shows the estimated  average potential outcomes $\hat{Y}(r)$ and $\hat{Y}_{[a]}(r)$, as well as their 95\% confidence intervals for all possible $a$ and $r$. 
It displays some interesting results. Students from recommendation in both departments have higher average GPAs if they have more roommates from recommendation. However, this monotonic pattern does not apply to students from Gaokao: in the informatics department, the average GPA drops when the number of their peers from recommendation increases from $1$ to $2$ (i.e., the treatment moves from $r_2=112$ to $r_3=122$); in the physics department, the average GPA drops when the number of their peers from recommendation increases from $0$ to $1$ (i.e., the treatment moves from $r_1=111$ to $r_2=112$) and drops again when the number of their peers from recommendation increases from $2$ to $3$ (i.e., the treatment moves from $r_3=122$ to $r_4=222$). We observe some treatment effect heterogeneity in different subgroups, although many results in Table \ref{table:est_pe} are insignificant due to small sample sizes.

\subsection{Optimal roommate assignment}\label{sec::optimal}
We derive the optimal roommate assignment mechanism for the same population as those in the informatics or physics departments, separately. We estimate the optimal complete randomization and obtain the fiducial distribution of $l_{\text{opt}}$. Table \ref{tab:opt_assign} shows the estimators and fiducial distributions for the optimal roommate assignments.

\begin{table}[t]
\centering
\caption{Fiducial distributions of optimal roommate assignments.
Estimated optimal roommate assignments are in bold with $\mathcal{G}=\{g_1,\ldots, g_5\}=\{1111, 1112, 1122, 1222, 2222\}$.
The ``Prob." columns denote the fiducial probability of the treatment assignment based on $10^4$ draws, and the ``Outcome" columns denote the unbiased estimator for the expected total GPA under any treatment assignment mechanism. 
}\label{tab:opt_assign}
\begin{tabular}{c|c|ccccc||c|c|ccccc}
\toprule
\multicolumn{6}{c}{Informatics} & \multicolumn{6}{c}{Physics}\\
\midrule
Prob. & Outcome & ${l}_{1}$ & ${l}_{2}$ & ${l}_{3}$ & ${l}_{4}$ & ${l}_{5}$ & Prob. & Outcome & ${l}_{1}$ & ${l}_{2}$ & ${l}_{3}$ & ${l}_{4}$ & ${l}_{5}$\\[5pt]
0.488 & 505.60 &  {\bf 26}  &  {\bf 0}  &  {\bf 0}  &  {\bf 0}  & {\bf 13}  &
 0.555 & 306.29  & {\bf 12} &   {\bf 0} &   {\bf 0}  &  {\bf 1} &  {\bf 10}\\
0.209 & 504.96  &  2 &  32  &  0  &  0  &  5  & 0.160 & 301.61 &   2 &   0 &  20  &  1 &   0
\\
0.159 & 504.27  &  0  & 34  &  1 &   0 &   4  &0.132  & 302.36 &  9  &  0  &  0  & 13  &  1
\\
 0.091 & 504.03 & 0 &  34 &   0  &  2  &  3 & 0.059 & 300.49 &  1  &  1  & 21  &  0  &  0
\\
0.015 & 502.70  &  0  & 33 &   0  &  5 &   1 & 0.049 & 301.89  &  8 &   0 &   2  & 13 &   0
\\
0.009 & 496.64 &  0  & 26 &  13 &   0 &   0& 0.022 & 305.17 &  11 &   1   & 1 &   0 &  10
\\
0.009 & 488.71 &  13  &  0  & 26  &  0  &  0  & 0.010 & 300.82 &   8  &  1  &  0  & 14  &  0
\\
0.008 & 498.42 & 22  &  0  &  0  & 16  &  1  & 0.004  & 288.48 & 1 &  15   & 0  &  0   & 7
\\
0.007 & 497.95 & 21  &  1  &  0  & 17  &  0 & 0.004 & 302.99  & 10   & 3  &  0 &   0 &  10
\\
0.003 & 501.62 & 0  & 32  &  1  &  6 &   0 &  0.003 & 287.62 &  0 &  13   & 0  & 10 &   0
\\
0.002 & 501.84 &  1 &  31  &  0 &   7 &   0 & 0.002 & 298.31 &  0  &  3 &  20 &   0  &  0\\
\bottomrule
\end{tabular}
\end{table}

As Table \ref{tab:opt_assign} suggests, assigning students with the same type together can maximize the academic performance of all students. This may encourage separating students admitted through different channels. However, the optimal treatment assignment has huge uncertainty. It is worth taking a look at the optimal treatment assignments with the top five fiducial probabilities. Most of them do suggest mixing students with different attributes. The decision may be misleading based on a point estimate of the optimal treatment assignment. Moreover, in practice, the average GPA is just a single measure of the students' performance, and other criteria may come into play in practice. For example, if we consider both the average GPA and the diversity of students in each room, it is better to mix different types of students.

\subsection{Future research directions based on this data set}\label{sec::future}

There are several interesting future research directions based on this data set. First, we used the largest two departments of the university for randomization-based inference, because large sample approximations for other small departments are unlikely to be reliable. Second, we analyzed the data from different departments separately. It would be interesting to analyze the data set of the whole university simultaneously, allowing the smaller departments to borrow information from other larger departments. Third, the data set also contains other background information. It is our future research to leverage these covariates to improve estimation efficiency.

\section{Discussion: inference without Assumptions \ref{asp:sutva_pe} or \ref{asp:exres_pe}}\label{sec:discussion}

Assumptions \ref{asp:sutva_pe} and \ref{asp:exres_pe} may be too strong and may not hold in some applications. 
Below we discuss alternative inferential strategies without them. We summarize the main results below and relegate the technical details to the Supplementary Material.

\subsection{Randomization test}

Without Assumptions \ref{asp:sutva_pe} or \ref{asp:exres_pe}, we can still use the randomization tests under the sharp null hypothesis that the treatment $Z$ does not affect any units. This preserves the type one error in finite samples. However, rejecting the sharp null hypothesis may not be informative for understanding peer effects. It is worth extending previous randomization test strategies \citep{rosenbaum2007interference, luo2012inference, aronow2012general, bowers2013reasoning, rigdon2015exact, athey2015exact, basse2017exact} to our setting.

\subsection{Other estimands of interest}\label{subsec::generalD}

We can unbiasedly estimate some other estimands without Assumptions \ref{asp:sutva_pe} or \ref{asp:exres_pe}. For example, let $Y_i^{\mathcal{D}}(r) = \sum_{z\in \mathcal{Z}} \text{pr}(Z=z\mid R_i=r)Y_i(z)$ be a weighted average of unit $i$'s potential outcomes, where the superscript $\mathcal{D}$ denotes the design. For example, $\mathcal{D} = $ RP for random partitioning and $\mathcal{D} = $ CR for complete randomization. 
Because 
the weight is nonzero only for assignment $z$ such that $R_i(z_i)=r$, 
we can view $Y_i^{\mathcal{D}}(r)$ as a summary of the potential outcomes $Y_i(z)$'s when unit $i$ has $K$ peers with attributes $r$. 
Moreover, if Assumption \ref{asp:sutva_pe} holds, $Y_i^{\text{RP}}(r)=Y_i^{\text{CR}}(r)$ reduces to the average of the $Y_i(z_i)$'s for $z_i$ such that $R_i(z_i)=r$. Thus, we can view $\tau_i^{\mathcal{D}}(r,r') = Y_i^{\mathcal{D}}(r)-Y_i^{\mathcal{D}}(r')$ as an individual peer effect comparing treatments $r$ and $r'$. 
Define $\bar{Y}_{[a]}^{\mathcal{D}}(r)$ and $\tau^{\mathcal{D}}_{[a]}(r,r')$ as the averages of $Y_i^{\mathcal{D}}(r)$'s and $\tau_i^{\mathcal{D}}(r,r')$'s for units with attribute $a$, and $\tau^{\mathcal{D}}(r,r')$ as the average of $\tau_i^{\mathcal{D}}(r,r')$'s for all units. 
These estimands depend on the design as emphasized by the superscript $\mathcal{D}$. 
In contrast, the estimands $\tau_{[a]}(r,r')$ and $\tau(r,r')$ in previous sections do not depend on the design.
Under treatment assignment mechanisms satisfying Assumption \ref{asmp:indist}, we can show that
the estimators $\hat{Y}_{[a]}(r), \hat{\tau}_{[a]}(r,r')$ and $\hat{\tau}(r,r')$
in \eqref{eq:sub_ave_pot_hat} and \eqref{eq:tau_hat} are still unbiased for $\bar{Y}_{[a]}^{\mathcal{D}}(r), \tau^{\mathcal{D}}_{[a]}(r,r')$ and $\tau^{\mathcal{D}}(r,r')$, respectively. 
However, we do not have replications for any treatment levels to evaluate their uncertainty. 

In sum, Assumptions \ref{asp:sutva_pe} and \ref{asp:exres_pe} can be strong in practice. Without them, it is challenging to conduct repeated sampling inference although it is possible to obtain meaningful point estimates.  

\subsection{Distributional assumptions on potential outcomes}

An alternative approach imposes some distributional assumptions on the potential outcomes. 
In particular, instead of assuming that the potential outcomes depend only on the peer attribute set as in Assumption \ref{asp:exres_pe}, we allow for  some deviations but need an additional distributional assumption on the error terms. 

\begin{assumption}\label{asp:dist}
The potential outcome can be decomposed as
$Y_i(z) = Y_i(R_i(z_i)) + \varepsilon_i(z)$ with the error terms satisfying 
\begin{itemize}
	\item[(i)] $(\varepsilon_1(z), \ldots, \varepsilon_n(z))$ are mutually independent with zero mean, for any peer assignment $z \in \mathcal{Z}$;
	\item[(ii)] all $\{\varepsilon_i(z): A_i = a, R_i(z_i) = r\}$ have the same variance $\sigma^2_{[a]r}$, for any attribute $1\leq a\leq H$ and any peer attribute set $r\in \mathcal{R}$.
 \end{itemize}
\end{assumption}

Assumption \ref{asp:dist} is weaker than Assumption \ref{asp:exres_pe}, and reduces to Assumption \ref{asp:exres_pe} when the $\varepsilon_i(z)$'s are all zero, i.e., 
$\sigma^2_{[a]r}=0$ for all $a$ and $r$. 
Under Assumption \ref{asp:dist}, $\tau_{[a]}(r,r')$ in \eqref{eq:sub_ave_peer_effect} is still a meaningful estimand, although it depends only on the main terms instead of the potential outcomes. 
Moreover, it equals the expectation of $\tau_{[a]}^\mathcal{D}(r,r')$ defined in Section \ref{subsec::generalD} by averaging over the random error terms. 
Under complete randomization, we show in the Supplementary Material that
$\hat{\tau}_{[a]}(r,r')$ is still an unbiased estimator for $\tau_{[a]}(r,r')$, and the variance estimator 
$\hat{V}_{[a]}({r}, {r}')$ is still conservative in expectation for the sampling variance of  $\hat{\tau}_{[a]}(r,r')$.

\subsection{Peer effects for a target subpopulation}

\subsubsection{Target subpopulation, potential outcomes, and peer effects}

We formulate the approach of \citet{rubinpeersmoking} using the potential outcomes introduced in this paper. We consider the following ideal setting.  First, we select a ``target" subpopulation from units with attribute $a$. We assume that the units in the target subpopulation have identity numbers $(1,2,\ldots,\underline{m})$ with $\underline{m} \leq  \min(m, n_{[a]})$. Second, we assign all units except the target subpopulation into $m$ groups, with $\underline{m}\leq m$ groups containing $K$ units and the remaining $m-\underline{m}$ groups containing $K+1$ units. Third, we assign the $\underline{m}$ units in the target subpopulation into these  $\underline{m}$ groups with $K$ units. 
Therefore, the peer assignments for units in the target subpopulation are 
from randomly permuting these $\underline{m}$ groups.

Let $(\zeta_{1}, \ldots, \zeta_{\underline{m}})$ denote the units initially assigned to the $\underline{m}$ groups, where $\zeta_{k}$ is the set consisting of the identity numbers of the $K$ units in group $k$ ($1\leq k\leq \underline{m}$). 
Therefore, the peer assignments for units in the target subpopulation, $(z_1, \ldots, z_{\underline{m}})$, is a permutation of 
$(\zeta_{1}, \ldots, \zeta_{\underline{m}})$, and the peer assignments for units in the remaining $m-\underline{m}$ groups are fixed. 
Therefore, unit $i$'s potential outcome simplifies to $Y_i(z_1, \ldots, z_{\underline{m}}, z_{\underline{m}+1}, \ldots, z_{n}) = Y_i(z_1, \ldots, z_{\underline{m}})$ for $1\leq i\leq \underline{m}$. 

Following \citet{rubinpeersmoking}, we introduce the following two assumptions.

\begin{assumption}\label{asp:ideal_sutva}
If $z_i = z'_i$, then 
$Y_i(z_1, \ldots, z_{\underline{m}}) = Y_i(z_1', \ldots, z_{\underline{m}}')$, for any two peer assignments $(z_1, \ldots, z_{\underline{m}})$ and $(z_1', \ldots, z_{\underline{m}}')$ and for any unit $1\leq i\leq \underline{m}$ in the target subpopulation. 
\end{assumption}

Assumption \ref{asp:ideal_sutva} requires that each unit's potential outcomes depend only on its own peers. Under Assumption \ref{asp:ideal_sutva}, unit $i$'s potential outcome simplifies to $Y_i(z_1, \ldots, z_{\underline{m}}) = Y_i(z_i)$ for $1\leq i \leq \underline{m}$.

\begin{assumption}\label{asp:ideal_er}
If $R_i(z_i) = R_i(z'_i)$, then 
$Y_i(z_i) = Y_i(z_i')$, for any two peer assignments $(z_1, \ldots, z_{\underline{m}})$ and $(z_1', \ldots, z_{\underline{m}}')$ and for any unit $1\leq i\leq \underline{m}$ in the target subpopulation, 
\end{assumption}

Assumption \ref{asp:ideal_er} requires that each unit's potential outcomes depend only on the attributes of its peers. Under Assumption \ref{asp:ideal_er}, unit $i$'s potential outcome simplifies to $Y_i(z_i) = Y_i(R_i(z_i))$ for $1\leq i \leq \underline{m}$.

\citet{rubinpeersmoking} chose the attribute to be the smoking behavior and the target subpopulation to be nonsmoking freshman. They further dichotomized each of $( \zeta_{1}, \ldots, \zeta_{\underline{m}} ) $ into two categories: smoking or nonsmoking suites.

Under Assumptions \ref{asp:ideal_sutva} and \ref{asp:ideal_er}, unit $i$'s potential outcome simplifies to $Y_i(r)$ for some $r\in \mathcal{R}$, for $1\leq i \leq \underline{m}$ in the target subpopulation. 
Comparing two treatments $r,r'\in \mathcal{R}$, 
the individual peer effect for unit $i$  
is $\tau_i(r,r') = Y_i(r) - Y_i(r')$, 
and the average peer effect for units in the target subpopulation is 
$\tau_{\text{tg}}(r,r') = \underline{m}^{-1} \sum_{i=1}^{\underline{m}} \tau_i(r,r')$. 
We want to infer $\tau_{\text{tg}}$.

\subsubsection{Construction of target subpopulation and statistical inference}\label{sec:const_target}

We first construct the target subpopulation and then infer the peer effects for it. 
In the following, we assume complete randomization. 
For the observed peer assignment $Z$, let $\underline{m}$ be the number of groups with units of attribute $a$. 
For each of the $\underline{m}$ groups, we randomly pick one unit with attribute $a$ to constitute the target subpopulation, and denote the remaining units in these groups as $(\zeta_1, \ldots, \zeta_{\underline{m}})$. 
To construct the ideal setting,  
we conduct inference conditional on the group assignments for all units excluding the target subpopulation.
The remaining randomness comes solely from the peer assignments of the target subpopulation,  
which is a random permutation of $(\zeta_{1}, \ldots, \zeta_{\underline{m}})$. Moreover, under Assumptions \ref{asp:ideal_sutva} and \ref{asp:ideal_er}, the treatment assignment is a completely randomized experiment with multiple treatments taking values in $\mathcal{R} = \{r_1, \ldots, r_{|\mathcal{R}|} \}$. Therefore, for the average peer effect $\tau_{\text{tg}}(r,r')$, an unbiased estimator is the standard difference-in-means for units receiving treatments $r$ and $r'$. We relegate the sampling variance and variance estimator to the Supplementary Material.

\subsubsection{Comparison and connection to our approach}

Compared to Assumptions \ref{asp:sutva_pe} and \ref{asp:exres_pe}, Assumptions \ref{asp:ideal_sutva} and \ref{asp:ideal_er}
are weaker because they make assumptions for a subset of units and a subset of peer assignments, i.e., the unique values in $(\zeta_{1}, \ldots, \zeta_{\underline{m}})$. 
However, under this ideal setting with Assumptions \ref{asp:ideal_sutva} and \ref{asp:ideal_er}, we can only infer peer effects for the target subpopulation instead of all units.

The construction in Section \ref{sec:const_target} generates a random target subpopulation. Interestingly, averaging over all possible constructions, the point estimator for $\tau_{\text{tg}}(r,r')$ is the same as 
$\hat{\tau}_{[a]}(r,r')$ in \eqref{eq:est_pe_fix}. We prove this result in the Supplementary Material.

 \section*{Supplementary Material}
 
Appendix A1 gives supporting materials for Section \ref{sec:discussion}.
Appendix A2 gives more technical details for general treatment assignment mechanisms.
Appendix A3 gives more technical details for complete randomization. 
Appendix A4 gives more technical details for random partitioning.

 \section*{Acknowledgments}
 
The authors thank Don Rubin, Luke Miratrix, Ms. Kristen Hunter and Mr. Zach Branson  at Harvard University,
the Associate Editor and two reviewers for insightful comments. Dr. Avi Feller at UC Berkeley kindly edited our final version.

\section*{Funding}
The authors gratefully acknowledge financial support from the National Science Foundation (Peng Ding: DMS grant \# 1713152; 
Jun Liu: DMS \# 1712714)

\bibliographystyle{plainnat}
\bibliography{causal}

\newpage
\setcounter{page}{1}
\begin{center}
{\bf \huge 
Supplementary Material for \\
``Randomization Inference for Peer Effects''
}
\\
\bigskip 
{
by Xinran Li, Peng Ding, Qian Lin, Dawei Yang, and Jun Liu
}
\end{center}

\bigskip

\setcounter{equation}{0}
\setcounter{section}{0}
\setcounter{figure}{0}
\setcounter{example}{0}
\setcounter{proposition}{0}
\setcounter{corollary}{0}
\setcounter{theorem}{0}
\setcounter{table}{0}

\renewcommand {\theproposition} {A\arabic{proposition}}
\renewcommand {\theexample} {A\arabic{example}}
\renewcommand {\thefigure} {A\arabic{figure}}
\renewcommand {\thetable} {A\arabic{table}}
\renewcommand {\theequation} {A\arabic{equation}}
\renewcommand {\thelemma} {A\arabic{lemma}}
\renewcommand {\thesection} {A\arabic{section}}
\renewcommand {\thetheorem} {A\arabic{theorem}}
\renewcommand {\thecorollary} {A\arabic{corollary}}
\renewcommand {\theassumption} {A\arabic{assumption}}

\allowdisplaybreaks

Appendix \ref{app:fewer_asmp} gives supporting materials for Section \ref{sec:discussion}.

Appendix \ref{app::general} gives more technical details for general treatment assignment mechanisms.

Appendix \ref{sec::cr-appendix} gives more technical details for complete randomization. 

Appendix \ref{app:more_on_rand_part} gives more technical details for random partitioning.

\section{Analyzing peer effects without Assumptions \ref{asp:sutva_pe} or \ref{asp:exres_pe}}\label{app:fewer_asmp}

\subsection{Randomization tests}\label{sec:rand_test}
We use randomization tests for the significance of peer effects for the following reasons. 
First, they provide additional evidence for the significance of peer effects. 
Second, randomization tests are exact and valid for finite samples.  Third, randomization tests do not require Assumptions \ref{asp:sutva_pe}--\ref{asmp:indist}, as long as the assignment mechanism is known. 
Fourth, as \citet{Fisher:1935} suggested, we can use randomization tests to check the Normal approximations. 

We can use randomization tests for the sharp null hypothesis for all units:
\begin{align*}  
H_{0}: Y_i(z) = Y_i(z'), \quad \text{for all } z,z' \text{ and for all unit } i,
\end{align*}
or the null hypothesis for units with attribute $a$:
\begin{align*}  
H_{0,[a]}: Y_i(z) = Y_i(z'), \quad \text{for all } z,z' \text{ and for all unit } i \text{ such that } A_i=a.
\end{align*}
$H_{0,[a]}$ is not sharp. But we can still conduct randomization test for $H_{0,[a]}$. We choose test statistics depending only on the outcomes of units with attribute $a$, and their randomization distributions are known under $H_{0,[a]}$. We can then obtain exact $p$-values under $H_{0,[a]}$.

We first discuss the choices of test statistics for the subgroup null $H_{0,[a]}$, and then the test statistics for $H_0$.
A choice of test statistic for the subgroup null $H_{0,[a]}$ is
\begin{align}\label{eq:est_sub_max_tau}
T_{[a]} & =  \max_{r,r'} \hat{\tau}_{[a]}(r,r') = \max_{r}\hat{Y}_{[a]}(r) - \min_{r}\hat{Y}_{[a]}(r).
\end{align}
Another choice of test statistic, $F_{[a]}$, is the F statistic from the analysis of variance of the linear regression of $Y_i$ on $R_i$ among units with attribute $a$. 
For the sharp null $H_0$, we can use $T = \max_{r,r'}\hat{\tau}(r,r')$, 
the F statistic $F$ from the linear model \eqref{eq:linear_model}, 
$\max_{a}T_{[a]}$, and $\max_{a}F_{[a]}$.

\begin{table}[htb]
%\small
\centering
\caption{$p$ values of randomization tests for the subgroup null $H_{0,[a]}$
and the sharp null $H_0$, with test statistics for $H_0$ shown in the parentheses. 
}\label{table:rt_sub_sn}
\begin{tabular}{ccccccc}
\toprule
Department & Null hypothesis & \multicolumn{3}{c}{Test statistic for $H_{0,[a]}$ ($H_0$)} \\
&  & $T_{[a]}$ ($\max_a T_{[a]}$) & $F_{[a]}$ ($\max_a F_{[a]}$)  & ($T$) & ($F$)
\\[5pt]
\midrule
Informatics & $H_0$ & 0.2678 & 0.3640 &   0.1092 & 0.1468 \\
& $H_{0,[1]}$ & 0.2390 & 0.3384  & \\
&$H_{0,[2]}$ & 0.0615 & 0.2050 &  \\[5pt]
%\midrule
physics & $H_0$ & 0.4553 & 0.0719  & 0.3431 & {\bf 0.0366} \\
& $H_{0,[1]}$ & 0.3545 & {\bf 0.0360} &  \\
& $H_{0,[2]}$ & 0.6994 & 0.7232 & 
\\
\bottomrule
\end{tabular}
\end{table}
For the  
motivating application, 
Table \ref{table:rt_sub_sn} shows the $p$-values from randomization tests for $H_0$ and $H_{0,[a]}$ with different test statistics. At significance level 0.05, the peer effects are not significant for students in the informatics department; the subgroup average peer effects are significant for students from Gaokao in the physics department if we use the F statistic as the test statistic, and the subgroup average peer effects are not significant for students from recommendation. 
We ignore the multiple testing issue. 
Compared to Table \ref{table:est_pe}, randomization tests reject only $H_{0,[1]}$ for students from Gaokao in the physics department, which may be due to the lack of power of randomization tests \citep{ding2014paradox}.

Moreover, we check the randomization distributions of the $\hat{\tau}_{[a]}(r,r')$'s under the sharp null hypothesis.
Under the sharp null hypothesis that the potential outcomes are not affected by the treatment assignment, all potential outcomes are known and identical to the observed outcomes. Therefore, the distributions of the subgroup peer effect estimators are known under complete randomization. 
For students in the informatics and physics departments graduating in 2013, Figures \ref{fig:check_norm}(a) and \ref{fig:check_norm}(b) show, respectively, the histograms of the subgroup peer effect estimators under the sharp null hypothesis based on $10^5$ treatment assignments from complete randomization. From Figures \ref{fig:check_norm}(a) and \ref{fig:check_norm}(b), the Normal approximations work fairly well. 
For our application, we do not know all potential outcomes and thus can not directly check the Normal approximations over repeated sampling of the treatment assignments. However, we view Figures \ref{fig:check_norm}(a) and \ref{fig:check_norm}(b) as intuitive justifications for the Normal approximation of $\hat{\tau}_{[a]}(r,r')$ in the Neymanian inference.

\begin{figure}
	\centering
	\begin{subfigure}{0.65\textwidth}  
		\includegraphics[width=\textwidth]{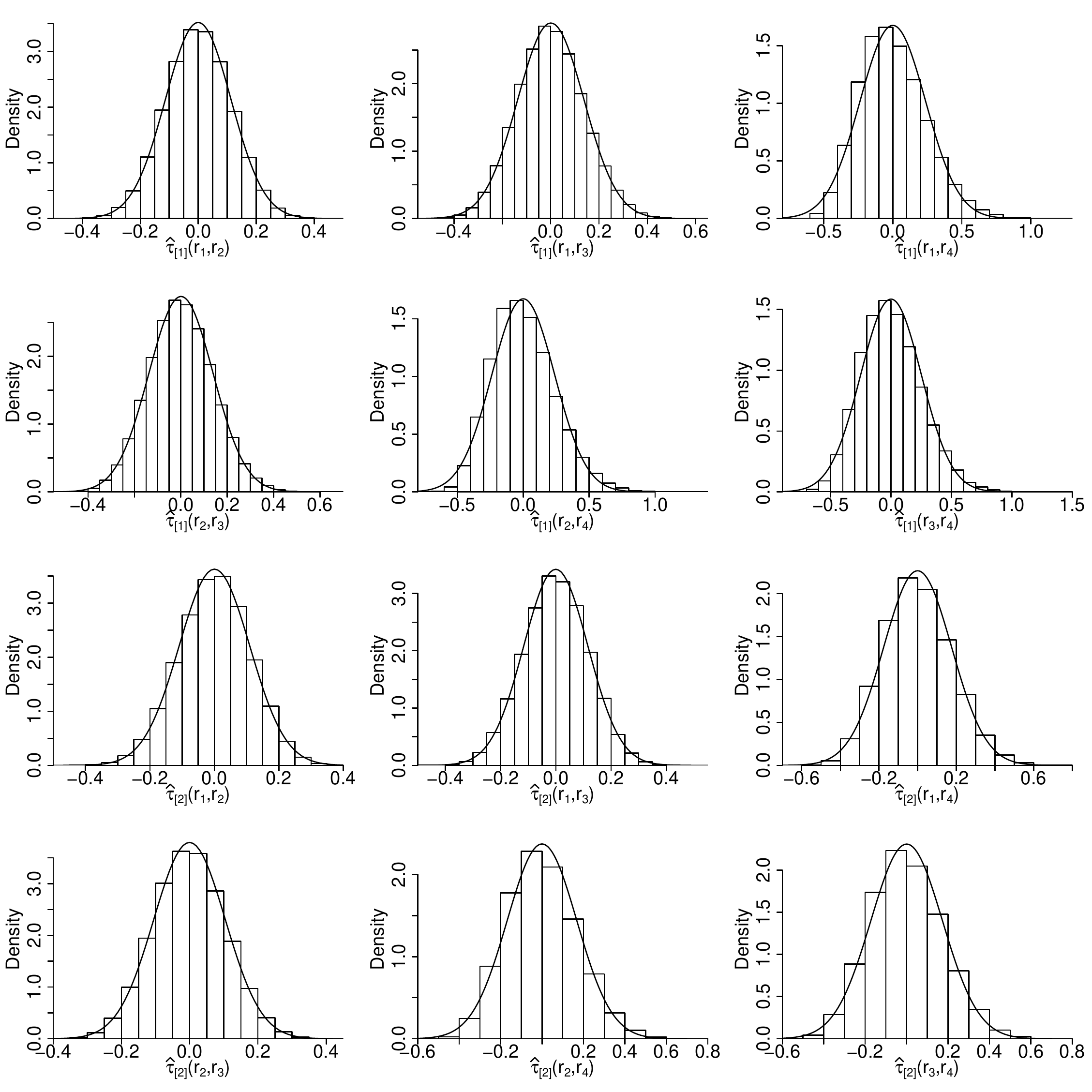}
		\caption{The informatics department}  \label{fig:check_norm_CS}
	\end{subfigure}
	\begin{subfigure}{0.65\textwidth}  
		\includegraphics[width=\textwidth]{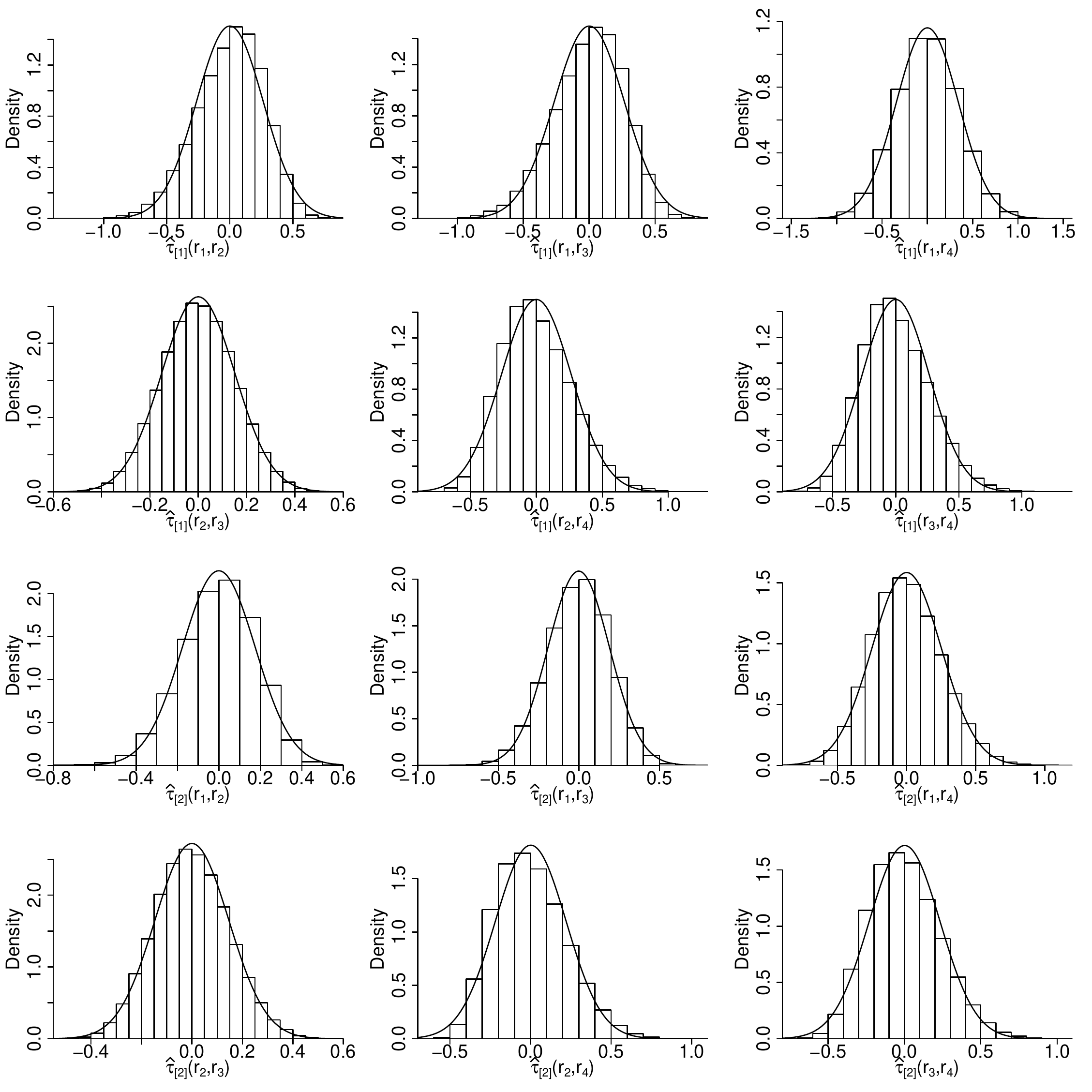}
		\caption{The physics department}  \label{fig:check_norm_PH}
	\end{subfigure}
	\caption{Histograms of subgroup peer effect estimators under the sharp null hypothesis, based on $10^5$ draws from complete randomization. The lines are densities of Normal approximations.}  \label{fig:check_norm}
\end{figure}

\subsection{Estimands, unbiased estimators, and optimal assignment mechanism}\label{sec:estimator_few_asmp}

Even if Assumptions \ref{asp:sutva_pe} or \ref{asp:exres_pe} fails, the estimators in \eqref{eq:sub_ave_pot_hat} and \eqref{eq:tau_hat} are still 
meaningful in the sense of unbiasedly estimating some average potential outcomes. Recall the definitions in Section \ref{sec:discussion}: 
\begin{eqnarray*}
& & Y_i^{\mathcal{D}}(r) = \sum_{z\in \mathcal{Z}} \text{pr}(Z=z\mid R_i=r)Y_i(z), \quad 
\tau_i^{\mathcal{D}}(r,r')= Y_i^{\mathcal{D}}(r)- Y_i^{\mathcal{D}}(r') ,\\
& & \bar{Y}_{[a]}^{\mathcal{D}}(r) = n_{[a]}^{-1} \sum_{i:A_i=a} Y_i^{\mathcal{D}}(r), \quad
\tau_{[a]}^{\mathcal{D}}(r,r') = n_{[a]}^{-1} \sum_{i:A_i=a}\tau_i^{\mathcal{D}}(r,r') ,\quad 
\tau^{\mathcal{D}}(r,r') = n^{-1} \sum_{i=1}^n\tau_i^{\mathcal{D}}(r,r') . 
\end{eqnarray*}

\begin{proposition}
\label{prop:avepotential}
Under Assumption \ref{asmp:indist}, $\hat{Y}_{[a]}(r)$ has mean $\bar{Y}_{[a]}^{\mathcal{D}}(r)$.
\end{proposition}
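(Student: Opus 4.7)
The plan is to unfold the definition of $\hat{Y}_{[a]}(r)$, push the expectation inside the sum, and then rewrite $E[I(R_i=r)Y_i]$ in a form that matches $Y_i^{\mathcal{D}}(r)$. By linearity,
\[
E\bigl\{\hat{Y}_{[a]}(r)\bigr\} = \{n_{[a]} \pi_{[a]}(r)\}^{-1}\sum_{i:A_i=a} E\bigl\{I(R_i=r)Y_i\bigr\},
\]
so the main task is to show that $E\{I(R_i=r)Y_i\} = \pi_{[a]}(r)\,Y_i^{\mathcal{D}}(r)$ for each $i$ with $A_i=a$. Once this identity is established, the factor $\pi_{[a]}(r)$ cancels and the right-hand side collapses to $n_{[a]}^{-1}\sum_{i:A_i=a} Y_i^{\mathcal{D}}(r) = \bar{Y}_{[a]}^{\mathcal{D}}(r)$, as desired.

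To prove the identity, I would first substitute $Y_i = \sum_{z\in\mathcal{Z}} I(Z=z) Y_i(z)$ and note the key observation that under the event $\{Z=z\}$ one has $R_i = R_i(z_i)$, so $I(R_i=r) I(Z=z) = I(R_i(z_i)=r) I(Z=z)$. Therefore
\[
E\bigl\{I(R_i=r)Y_i\bigr\} = \sum_{z\in\mathcal{Z}} \pr(Z=z)\, I(R_i(z_i)=r)\, Y_i(z).
\]
Next, I would recognize $\pr(Z=z)\,I(R_i(z_i)=r)$ as the joint probability $\pr(Z=z, R_i=r)$, which factors as $\pr(Z=z\mid R_i=r)\,\pr(R_i=r)$. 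Assumption \ref{asmp:indist}(a) is exactly what allows us to identify $\pr(R_i=r)$ with $\pi_{[a]}(r)$ for every unit $i$ with $A_i=a$. Pulling this factor out of the sum gives
\[
E\bigl\{I(R_i=r)Y_i\bigr\} = \pi_{[a]}(r)\sum_{z\in\mathcal{Z}} \pr(Z=z\mid R_i=r)\, Y_i(z) = \pi_{[a]}(r)\, Y_i^{\mathcal{D}}(r),
\]
which is the identity we need.

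There is no real obstacle here; the proof is essentially bookkeeping. The only subtlety worth flagging is that the argument does not require Assumption \ref{asp:sutva_pe} or \ref{asp:exres_pe}: the potential outcome $Y_i(z)$ is kept in its general form depending on the full assignment $z$, and the randomization-distribution averaging is absorbed into the definition of $Y_i^{\mathcal{D}}(r)$. Only the symmetry Assumption \ref{asmp:indist}(a) enters, and only to ensure that the common denominator $\pi_{[a]}(r)$ in the Horvitz--Thompson weight is the correct marginal probability for every term in the sum.
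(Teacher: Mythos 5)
Your proposal is correct and follows essentially the same route as the paper's proof: expand $Y_i = \sum_z I(Z=z)Y_i(z)$, write $E\{I(R_i=r)I(Z=z)\} = \pr(R_i=r)\pr(Z=z\mid R_i=r)$, and cancel $\pr(R_i=r)=\pi_{[a]}(r)$ (well-defined across units with attribute $a$ by Assumption \ref{asmp:indist}(a)) against the Horvitz--Thompson weight to recover $Y_i^{\mathcal{D}}(r)$. Your remark that Assumptions \ref{asp:sutva_pe} and \ref{asp:exres_pe} are not needed also matches the paper, which states the proposition under Assumption \ref{asmp:indist} alone.
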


The conclusion follows from
\begin{align*}
E\left\{ \hat{Y}_{[a]}(r) \right\} & = \{n_{[a]}\pi_{[a]}(r)\}^{-1}\sum_{i:A_i=a} E \left\{ I(R_i=r)Y_i \right\} \\
&=
\{n_{[a]}\pi_{[a]}(r)\}^{-1}\sum_{i:A_i=a}
E\left\{
 I(R_i=r)\sum_{z}I(Z=z)Y_i(z)
\right\}\\
& = \{n_{[a]}\pi_{[a]}(r)\}^{-1}\sum_{i:A_i=a}\sum_{z}
E\left\{
 I(R_i=r)I(Z=z)
\right\}Y_i(z)\\
& = \{n_{[a]}\pi_{[a]}(r)\}^{-1}\sum_{i:A_i=a}\sum_{z}
\pr(
R_i=r, Z=z
)
Y_i(z)\\
& = \{n_{[a]}\pi_{[a]}(r)\}^{-1}\sum_{i:A_i=a}\sum_{z}
\pr(R_i=r)
\pr(
Z=z \mid R_i=r
)
Y_i(z)\\
& = n_{[a]}^{-1}\sum_{i:A_i=a}\sum_{z}
\pr(
Z=z \mid R_i=r
)
Y_i(z) = n_{[a]}^{-1}\sum_{i:A_i=a} Y_i^{\mathcal{D}}(r)
 = \bar{Y}_{[a]}^{\mathcal{D}}(r). 
\end{align*}

By the linearity of the expectation, 
$\hat{\tau}_{[a]} (r,r')$ and $\hat{\tau}(r,r')$ are unbiased for 
$\tau_{[a]}^{\mathcal{D}}(r,r')$ and $\tau^{\mathcal{D}}(r,r')$, respectively. 
However, without Assumptions \ref{asp:sutva_pe} and \ref{asp:exres_pe}, for a given treatment we do not have replications of units, making it difficult to evaluate the uncertainty of these estimators. 
Similarly, for the first type of interference, \citet{hudgens2008toward} discussed the expectations of the point estimators under general settings, but invoked ``stratified interference'' (analogous to Assumption \ref{asp:exres_pe}) to evaluate the uncertainty.

Moreover, the estimands $\tau_{[a]}^{\mathcal{D}}(r,r')$ and $\tau^{\mathcal{D}}(r,r')$ are meaningful in many situations. 
In the expression of $Y_i^{\mathcal{D}}(r)$, 
the weight $\text{pr}(Z=z\mid R_i=r)$ is nonzero only if $R_i(z_i)=r$, i.e., the attributes of unit $i$'s peers constitute $r$. Therefore, $Y_i^{\mathcal{D}}(r)$ summarizes unit $i$'s potential outcomes when he/she has $K$ peers with attributes $r$. 
Consequently, $\tau_i^{\mathcal{D}}(r,r')$ measures the difference when unit $i$ has $K$ peers with attributes $r$ rather than $r'$.
Thus we can view $\tau_i^{\mathcal{D}}(r,r')$ as the individual peer effect comparing treatments $r$ and $r'$, and $\tau_{[a]}^{ \mathcal{D}}(r,r')$ and $\tau^{\mathcal{D}}(r,r')$ as the corresponding average peer effects. 
Below we further simplify $Y_i^{\mathcal{D}}(r)$ under some special cases, making its meaning more intuitive. 
When Assumption \ref{asp:sutva_pe} holds, 
$Y_i^{\mathcal{D}}(r)$ reduces to 
$\sum_{z_i} \text{pr}(Z_i=z_i\mid R_i=r)Y_i(z_i)$; 
if further the treatment assignment mechanism is random partitioning or complete randomization, then the weight $\text{pr}(Z_i=z_i\mid R_i=r)$ is a nonzero constant for $z_i$ such that $R_i(z_i)=r$, and $Y_i^{\text{RP}}(r)=Y_i^{\text{CR}}(r)$ further reduces to the average of $Y_i(z_i)$'s for $z_i$ such that $R_i(z_i)=r$. 
When both  Assumptions \ref{asp:sutva_pe} and \ref{asp:exres_pe} hold, $Y_i^{\mathcal{D}}(r)$ is the same as $Y_i(r)$ in the main paper, which does not depend on the assignment mechanism $\mathcal{D}$. In sum, $Y_i^{\mathcal{D}}(r)$ is an extension of $Y_i(r)$.

We now consider the optimal complete randomization mechanism for the assignment of a new population of size $n'$ without Assumptions \ref{asp:sutva_pe} or \ref{asp:exres_pe}. 
Define similarly $Y_i^{\mathcal{D}'}(r)$ and $\bar{Y}_{[a]}^{\mathcal{D}'}(r)$ for the new population. 
By the same logic as \eqref{eq:ex_total_outcome_new}, 
the expected total outcome under complete randomization with ${L'}(z)=l'$ 
for the new population 
is
\begin{align}\label{eq:ex_total_outcome_new_no_asump}
E\left(\sum_{i=1}^{{n'}} Y'_i\right) 
& = 
\sum_{a=1}^H E\left(\sum_{i:A_i=a} Y'_i\right) 
=
\sum_{a=1}^H \sum_{r\in \mathcal{R}} n'_{[a]r} \bar{Y}^{\text{CR}'}_{[a]}(r).  
%= a \sum_{a=1}^H \sum_{r\in \mathcal{R}} n'_{hr} \bar{Y}^{\text{CR}}_h(r) + bn'. 
\end{align}

To estimate the maximizer $l'_{\text{opt}}$ of \eqref{eq:ex_total_outcome_new_no_asump}, we need to assume the new population is similar to the one in our data. 
Let $\mathcal{D}_0$ denote the assignment mechanism for our observed data.
The following assumption is similar to the one in the main paper.

\begin{assumption}\label{asp:new_pop_few_asmp}
$\bar{Y}_{[a]}^{\text{CR}'}(r) = \yscale \bar{Y}_{[a]}^{\mathcal{D}_0}(r)+\yshift$
for some constants $\yscale>0$ and $\yshift$, for all $1\leq a\leq H$ and $r\in \mathcal{R}$. 
\end{assumption}
Under Assumption \ref{asp:new_pop_few_asmp}, \eqref{eq:ex_total_outcome_new_no_asump} reduces to
\begin{align*}
E\left(\sum_{i=1}^{{n'}} Y'_i\right) 
=
\sum_{a=1}^H \sum_{r\in \mathcal{R}} n'_{[a]r} \bar{Y}^{\text{CR}'}_{[a]}(r)
= 
 \yscale \sum_{a=1}^H \sum_{r\in \mathcal{R}} n'_{[a]r} \bar{Y}_{[a]}^{\mathcal{D}_0}(r) + \sum_{a=1}^{H} n_{[a]}' \yshift,  
\end{align*}
implying that 
the maximizer $l'_{\text{opt}}$ of \eqref{eq:ex_total_outcome_new_no_asump}
is the same as that of $\sum_{a=1}^H \sum_{r\in \mathcal{R}} n'_{[a]r} \bar{Y}_{[a]}^{\mathcal{D}_0}(r)$. 
We can unbiasedly estimate $\bar{Y}_{[a]}^{\mathcal{D}_0}(r)$ by $\hat{Y}_{[a]}(r)$, and then estimate $l'_{\text{opt}}$ by simply plugging in the estimators $\hat{Y}_{[a]}(r)$'s. 
Again, it is difficult to evaluate the uncertainty of the estimator for $l'_{\text{opt}}$ 
for 
{\lxr ?, by?}
the same reason as that for the average peer effect estimators.

Below we give some comments on Assumption \ref{asp:new_pop_few_asmp}, which is key for inferring the optimal complete randomization. 
Assumption \ref{asp:new_pop_few_asmp} is a strong requirement of the similarity between the new population and the one in our data, due to the dependence of $\bar{Y}_{[a]}^{\text{CR}'}(r)$ and $\bar{Y}_{[a]}^{\mathcal{D}_0}(r)$ on the designs. Even if we assume that the new population is the same as the one in our data, Assumption \ref{asp:new_pop_few_asmp}, or, equivalently, $\bar{Y}_{[a]}^{\text{CR}}(r) = \yscale \bar{Y}_{[a]}^{\mathcal{D}_0}(r)+\yshift$, may fail because the values of $\bar{Y}_{[a]}^{\text{CR}}(r)$ and $\bar{Y}_{[a]}^{\mathcal{D}_0}(r)$ depend on the assignment mechanisms, CR and $\mathcal{D}_0$, respectively. If the new population is different from the one in our data, then Assumption \ref{asp:new_pop_few_asmp} is even less plausible.

We summarize several concerns for  inferring the optimal complete randomization in the absence of Assumptions \ref{asp:sutva_pe} or \ref{asp:exres_pe}. First, it is unnatural to infer the optimal peer assignment for a new population, because the treatment is the set of the identity numbers of units varying across populations. Second, the similarity assumption becomes stronger due to the dependence of the estimands on the design. Third, it is difficult to evaluate the uncertainty of the point estimator.

\subsection{Distributional assumptions on potential outcomes}

Under Assumption \ref{asp:dist}, we decompose $\hat{Y}_{[a]}(r)$ into two parts: 
\begin{align*}
\hat{Y}_{[a]}(r) & = n_{[a]r}^{-1} \sum_{i:A_i=a, R_i = r} Y_i = 
n_{[a]r}^{-1} \sum_{i:A_i=a, R_i = r} \left\{Y_i(r)+\varepsilon_i(Z)\right\}\\
& = 
n_{[a]r}^{-1} \sum_{i:A_i=a, R_i = r} Y_i(r)+
n_{[a]r}^{-1} \sum_{i:A_i=a, R_i = r} \varepsilon_i(Z) 
\equiv \check{Y}_{[a]}(r) + \check{\varepsilon}_{[a]}(r),
\end{align*}
where $\check{Y}_{[a]}(r)\equiv n_{[a]r}^{-1} \sum_{i:A_i=a, R_i = r} Y_i(r)$ and 
$\check{\varepsilon}_{[a]}(r)\equiv n_{[a]r}^{-1} \sum_{i:A_i=a, R_i = r} \varepsilon_i(Z)$ are the main part and deviance part, respectively. 
Let 
$\check{\tau}_{[a]}(r,r') = \check{Y}_{[a]}(r) - \check{Y}_{[a]}(r')$ and $\check{\delta}_{[a]}(r,r') = \check{\varepsilon}_{[a]}(r)-\check{\varepsilon}_{[a]}(r')$. 
Correspondingly, we can decompose the subgroup peer effect estimator $\hat{\tau}_{[a]}(r,r')$ into two parts: 
$\hat{\tau}_{[a]}(r,r') = \check{\tau}_{[a]}(r,r') + \check{\delta}_{[a]}(r,r')$. 

First, we discuss the sampling mean and variance of $\hat{\tau}_{[a]}(r,r')$. 
Under Assumption \ref{asp:dist}, we have, for any $1\leq a\leq H$ and $r\neq r'\in \mathcal{R}$, 
\begin{align}\label{eq:var_epsilon_check}
E\left\{
\check{\varepsilon}_{[a]}(r) \mid Z
\right\}
& = 
E\left\{
n_{[a]r}^{-1} \sum_{i:A_i=a, R_i = r} \varepsilon_i(Z) 
\mid Z
\right\} = 
n_{[a]r}^{-1} \sum_{i:A_i=a, R_i = r}
E\left\{
 \varepsilon_i(Z) 
\mid Z
\right\} = 0,
\nonumber
\\
\Var\left\{
\check{\varepsilon}_{[a]}(r) \mid Z
\right\}
& = 
\Var\left\{
n_{[a]r}^{-1} \sum_{i:A_i=a, R_i = r} \varepsilon_i(Z) 
\mid Z
\right\} 
= n_{[a]r}^{-2} 
\sum_{i:A_i=a, R_i = r} \Var\left\{
\varepsilon_i(Z)  \mid Z
\right\} = 
n_{[a]r}^{-1} \sigma^2_{[a]r},
\nonumber\\
\Cov\left\{\check{\varepsilon}_{[a]}(r), \check{\varepsilon}_{[a]}(r')\mid Z\right\} & = 
%E\left\{\check{\varepsilon}_{[a]}(r)\check{\varepsilon}_{[a]}(r')\mid Z\right\}=
(n_{[a]r}n_{[a]r'})^{-1}\sum_{i:A_i=a, R_i = r}\sum_{j:A_j=a, R_j = r'} E\left\{
\varepsilon_i(Z) \varepsilon_j(Z) \mid Z 
\right\} = 0,
\end{align}
which immediately imply that 
$
E\{
\check{\delta}_{[a]}(r,r') \mid Z
\}=0$
and 
$
\Var\{
\check{\delta}_{[a]}(r,r') \mid Z
\} = n_{[a]r}^{-1} \sigma^2_{[a]r} + n_{[a]r'}^{-1} \sigma^2_{[a]r'}.
$
Thus, marginally, $\check{\delta}_{[a]}(r,r')$ has mean 0 and variance $n_{[a]r}^{-1} \sigma^2_{[a]r} + n_{[a]r'}^{-1} \sigma^2_{[a]r'}$.
Because $\check{\tau}_{[a]}(r,r')$ is constant given $Z$, we have
\begin{align*}
\Cov\left\{
\check{\tau}_{[a]}(r,r'), \check{\delta}_{[a]}(r,r')
\right\} & = 
E\left[ E\left\{
\check{\tau}_{[a]}(r,r')\check{\delta}_{[a]}(r,r') \mid Z
\right\} 
\right] = E\left[ \check{\tau}_{[a]}(r,r') E\left\{
\check{\delta}_{[a]}(r,r') \mid Z
\right\} 
\right] = 0,
\end{align*}
which implies that 
$\Var\{\hat{\tau}_{[a]}(r,r')\} = \Var\{\check{\tau}_{[a]}(r,r')\} + \Var\{\check{\delta}_{[a]}(r,r')\}$. 
Because the sampling variance of $\check{\tau}_{[a]}(r,r')$ is the same as that in Corollary \ref{cor:var_com_res} with Assumption \ref{asp:exres_pe}, we can derive that 
\begin{align*}
\Var\left\{\hat{\tau}_{[a]}(r,r')\right\} & =
\Var\left\{\check{\tau}_{[a]}(r,r')\right\} + \Var\left\{\check{\delta}_{[a]}(r,r')\right\}
= \frac{S_{[a]}^2({r})}{n_{[a]r}}
+ 
\frac{S_{[a]}^2({r}')}{n_{[a]r'}}
-\frac{S_{[a]}^2({r}\text{-}{r}')}{n_{[a]}} +
\frac{\sigma_{[a]}^2({r})}{n_{[a]r}}
+ 
\frac{\sigma_{[a]}^2({r}')}{n_{[a]r'}}\\
& = \frac{S_{[a]}^2({r})+\sigma^2_{[a]r}}{n_{[a]r}}
+ 
\frac{S_{[a]}^2({r}')+\sigma^2_{[a]r'}}{n_{[a]r'}}
-\frac{S_{[a]}^2({r}\text{-}{r}')}{n_{[a]}}.
\end{align*}

Second, we discuss the variance estimator for $\hat{\tau}_{[a]}(r,r')$. 
We decompose the sample variance of observed outcomes for units with attribute $a$ receiving treatment $r$ as
\begin{align}\label{eq:sample_var_weak_dist}
s^2_{[a]}(r) & = 
(n_{[a]r}-1)^{-1}
\sum_{i:A_i=a, R_i=r}\left\{
Y_i - \hat{Y}_{[a]}(r)
\right\}^2 = 
(n_{[a]r}-1)^{-1}\sum_{i:A_i=a, R_i=r}\left\{
Y_i (r) +\varepsilon_i(Z)- \check{Y}_{[a]}(r) - \check{\varepsilon}_{[a]}(r)
\right\}^2
\nonumber
\\
& = (n_{[a]r}-1)^{-1}\sum_{i:A_i=a, R_i=r}\left\{
Y_i (r) - \check{Y}_{[a]}(r) 
\right\}^2 + 
(n_{[a]r}-1)^{-1} \sum_{i:A_i=a, R_i=r}\left\{
\varepsilon_i(Z) - \check{\varepsilon}_{[a]}(r)
\right\}^2
\nonumber
\\
& \quad \ 
+ (n_{[a]r}-1)^{-1}\sum_{i:A_i=a, R_i=r}
\left\{
Y_i (r) - \check{Y}_{[a]}(r) 
\right\}\left\{
\varepsilon_i(Z) - \check{\varepsilon}_{[a]}(r)
\right\}.
\end{align}
Below we discuss the expectation of the three terms in \eqref{eq:sample_var_weak_dist} separately. 
The expectation of the first term is the same as that in Theorem \ref{thm:est_var_gen} with Assumption \ref{asp:exres_pe}. 
The expectation of the second term is $\sigma^2_{[a]r}$, because, conditional on $Z$, 
it is the sample variance of independent zero-mean random variables with variance $\sigma^2_{[a]r}$.
The expectation of the third term is zero,
because, conditioning on $Z$, $Y_i (r) - \check{Y}_{[a]}(r) $ is a constant and $\varepsilon_i(Z) - \check{\varepsilon}_{[a]}(r)$ has mean zero.
Above all,  
$
E\{
s^2_{[a]}(r)
\}
= S^2_{[a]}(r) + \sigma^2_{[a]r}.
$
The variance estimator is conservative because 
\begin{align*}
E\left\{ \hat{V}_{[a]}({r}, {r}') \right\}
& = 
\frac{E\{s_{[a]}^2({r}) \} }{n_{[a]r}}
+ 
\frac{ E\{ s_{[a]}^2({r}') \}}{n_{[a]r'}} = 
\frac{ S^2_{[a]}(r) + \sigma^2_{[a]r} }{n_{[a]r}}
+ 
\frac{ S^2_{[a]}(r') + \sigma^2_{[a]r'} }{n_{[a]r'}} \geq \Var\left\{\hat{\tau}_{[a]}(r,r')\right\} . 
\end{align*}

\subsection{Peer effects for a target subpopulation}
First, we study the sampling variance and its estimator for the difference-in-means estimator
$\hat{\tau}_{\text{tg}}(r,r')$. 
For any $r,r'\in \mathcal{R}$ and units in the target subpopulation, 
let 
$\bar{Y}_{\text{tg}}(r) = \underline{m}^{-1}\sum_{i=1}^{\underline{m}}Y_i(r)$ and
$\mathfrak{S}^2({r}) = (\underline{m}-1)^{-1}\sum_{i=1}^{\underline{m}}\{Y_i(r) - \bar{Y}_{\text{tg}}(r)\}^2$ 
be the finite population average and variance of individual potential outcome $Y_i(r)$'s, 
and 
$\mathfrak{S}^2({r}\text{-}{r}')$ be the finite population variance of individual peer effect $\tau_{i}(r,r')$'s. 
The number of units in the target subpopulation receiving treatment $r\in \mathcal{R}$ is $\underline{m}_{r} = \sum_{k=1}^{\underline{m}} I(\{A_j: j \in \zeta_k\}
=r)
$. 
Following \citet{Neyman:1923}, the sampling variance of $\hat{\tau}_{\text{tg}}(r,r')$ is 
\begin{align*}
\Var\{
\hat{\tau}_{\text{tg}}(r,r')
\}
& = 
\frac{\mathfrak{S}^2({r})}{\underline{m}_{r}}
+ 
\frac{\mathfrak{S}^2({r}')}{\underline{m}_{r'}}
-\frac{\mathfrak{S}^2({r}\text{-}{r}')}{\underline{m}}.
\end{align*}
For any $r\in \mathcal{R}$, let $\mathfrak{s}^2(r)$ be the sample variance of observed outcome $Y_i$'s for units receiving treatment $r$. We can show that $\mathfrak{s}^2(r)$ is an unbiased estimator for $\mathfrak{S}^2(r)$. Therefore, a conservative sampling variance estimator for $\hat{\tau}_{\text{tg}}(r,r')$ is 
\begin{align*}
\widehat{\Var}\{
\hat{\tau}_{\text{tg}}(r,r')
\}
& = 
\frac{\mathfrak{s}^2({r})}{\underline{m}_{r}}
+ 
\frac{\mathfrak{s}^2({r}')}{\underline{m}_{r'}}.
\end{align*}
Moreover, under some regularity conditions, the Wald-type confidence interval is asymptotically conservative.

Second, we show that, averaging over all possible constructions, the point estimator for $\tau_{\text{tg}}(r,r')$ is the same as 
$\hat{\tau}_{[a]}(r,r')$ in \eqref{eq:est_pe_fix}. 
Because the point estimator for $\tau_{\text{tg}}(r,r')$ is the standard difference-in-means for units receiving treatments $r$ and $r'$, 
it suffices to show that the average of 
$\underline{m}_{r}^{-1}\sum_{i=1}^{\underline{m}} I(R_i = r) Y_i$ over all configurations of the target subpopulation is the same as $n_{[a]r}^{-1}\sum_{i:A_i=a,R_i=r}Y_i$, the average observed outcome for units with attribute $a$ receiving treatment $r$. This is true because (i) a unit with attribute $a$ receiving treatment $r$ must be in a group with group attribute $\{a\}\cup r$,  
(ii) any group with group attribute $\{a\}\cup r$ has the same number of units with attribute $a$, and (iii) each configuration randomly picks one unit with attribute $a$ in these groups with group attribute $\{a\}\cup r$ and calculates their average observed outcome to get $\underline{m}_{r}^{-1}\sum_{i=1}^{\underline{m}} I(R_i = r) Y_i$.

\section{Technical details for general treatment assignments}\label{app::general}

\subsection{Lemmas}
Recall that $S_{[a]}^2(r)$ and $S_{[a]}^2(r\text{-}r')$ are the finite population variances of potential outcomes $Y_i(r)$'s and
individual peer effects $\tau_i(r,r')$'s among units with attribute $a$. 
We further define
$S_{[a]}(r,r')$ as the finite population covariance between the $Y_i(r)$'s and the $Y_i(r')$'s among units with attribute $a$, and
$
\widetilde{Y}_i(r) = Y_i(r) - \bar{Y}_{[A_i]}(r) 
$
as the centered potential outcome of unit $i$ by subtracting the average potential outcome among units with the same attribute as unit $i$.
We can then rewrite
\begin{align}\label{eq:sh2r_sh2rr_prim}
S_{[a]}^2(r) = (n_{[a]}-1)^{-1}\sum_{i:A_i=a}\widetilde{Y}_i^2(r), \quad 
S_{[a]}^2(r,r') = (n_{[a]}-1)^{-1}\sum_{i:A_i=a}\widetilde{Y}_i(r)\widetilde{Y}_i(r'),
\end{align} 
and decompose the subgroup average potential outcome estimator as
\begin{align*}
\hat{Y}_{[a]}(r) & = \{n_{[a]} \pi_{[a]}(r)\}^{-1}\sum_{i:A_i=a}I(R_i=r)\widetilde{Y}_i(r) + \{n_{[a]} \pi_{[a]}(r)\}^{-1}\sum_{i:A_i=a}I(R_i=r)\bar{Y}_{[a]}(r)
\equiv B_{[a]}(r) + C_{[a]}(r),
\end{align*}
and decompose the subgroup average peer effect estimator as 
\begin{eqnarray}\label{eq:rewrite_tau_h}
\hat{\tau}_{[a]}(r,r') 
 =  
%\{n_h \pi_h(r)\}^{-1}\sum_{i:A_i=h}I(R_i=r)\widetilde{Y}_i(r) + \{n_h \pi_h(r)\}^{-1}\sum_{i:A_i=h}I(R_i=r)\bar{Y}_h(r) \nonumber\\
%& &  - \{n_h \pi_h(r')\}^{-1}\sum_{i:A_i=h}I(R_i=r')\widetilde{Y}_i(r') - \{n_h \pi_h(r')\}^{-1}\sum_{i:A_i=h}I(R_i=r')\bar{Y}_h(r')
\hat{Y}_{[a]}(r) - \hat{Y}_{[a]}(r')
 \equiv  B_{[a]}(r)+C_{[a]}(r)-B_{[a]}(r')-C_{[a]}(r').
\end{eqnarray}
The following three lemmas characterize the  covariances of the terms in \eqref{eq:rewrite_tau_h}.  
% calculating the variances of estimated average and subgroup average peer effects.
\begin{lemma}\label{lemma:var_between_centered_outcome}
%Under Assumptions \ref{asp:sutva_pe}--\ref{asmp:indist}, 
For $1\leq a,a'\leq H$ and $r,r'\in \mathcal{R}$,
\begin{eqnarray*}
\Cov\{B_{[a]}(r), B_{[a']}(r')\}
%&  = & \ \Cov\left[\{n_h \pi_h(r)\}^{-1}\sum_{i:A_i=h}I(R_i=r) \widetilde{Y}_j(r), \{n_{h'} \pi_{h'}(r')\}^{-1}\sum_{j:A_j=h'}I(R_j=r') \widetilde{Y}_j(r')\right]\\
& = &
\begin{cases}
0, & \text{if } a\neq a';\\
-(n_{[a]}-1)\frac{\pi_{[a][a]}(r,r')}{n_{[a]}^2  \pi_{[a]}(r) \pi_{[a]}(r')} S_{[a]}(r,r'), & \text{if } a=a', r\neq r';\\
(n_{[a]}-1)\frac{\pi_{[a]}(r)-\pi_{[a][a]}(r,r)}{n_{[a]}^2  \pi_{[a]}^2(r) }S_{[a]}^2(r), & \text{if } a=a', r = r'.
\end{cases}
\end{eqnarray*}
\end{lemma}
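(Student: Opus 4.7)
The strategy is to expand the covariance using bilinearity of $\Cov$ into a double sum, then exploit two key facts: (i) Assumption~\ref{asmp:indist} makes the pairwise joint probabilities $\pr(R_i=r, R_j=r')$ depend only on the attributes $(A_i, A_j)$, and (ii) the centered potential outcomes satisfy $\sum_{i:A_i=a}\widetilde{Y}_i(r)=0$ for every $a$ and $r$. This centering identity collapses the off-diagonal sums into diagonal ones, which is exactly what produces $S_{[a]}(r,r')$ and $S_{[a]}^2(r)$ in the final formulas.

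Concretely, I would first write
\begin{align*}
\Cov\{B_{[a]}(r), B_{[a']}(r')\} = \frac{1}{n_{[a]}n_{[a']}\pi_{[a]}(r)\pi_{[a']}(r')}\sum_{\substack{i:A_i=a}}\sum_{\substack{j:A_{j}=a'}} \widetilde{Y}_i(r)\widetilde{Y}_j(r')\, \Cov\{I(R_i=r), I(R_j=r')\}.
\end{align*}
Then I would split the inner double sum into a diagonal part ($i=j$, which forces $a=a'$) and an off-diagonal part ($i\neq j$). For $i \neq j$, Assumption~\ref{asmp:indist}(b) gives $\Cov\{I(R_i=r), I(R_j=r')\} = \pi_{[a][a']}(r,r') - \pi_{[a]}(r)\pi_{[a']}(r')$, which is constant across the pair $(i,j)$, so it factors out. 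For $i=j$, the indicator product is $I(R_i=r)I(R_i=r') = I(r=r')I(R_i=r)$, giving covariance $\pi_{[a]}(r) - \pi_{[a]}^2(r)$ when $r=r'$ and $-\pi_{[a]}(r)\pi_{[a]}(r')$ when $r\neq r'$.

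For the case $a\neq a'$, the off-diagonal sum factors as $\bigl(\sum_{i:A_i=a}\widetilde{Y}_i(r)\bigr)\bigl(\sum_{j:A_j=a'}\widetilde{Y}_j(r')\bigr)=0$ by the centering identity, yielding $0$. For $a=a'$, I would apply the identity $\sum\sum_{i\neq j:A_i=A_j=a}\widetilde{Y}_i(r)\widetilde{Y}_j(r') = -\sum_{i:A_i=a}\widetilde{Y}_i(r)\widetilde{Y}_i(r')$, and recall from \eqref{eq:sh2r_sh2rr_prim} that the diagonal sum equals $(n_{[a]}-1)S_{[a]}(r,r')$, or $(n_{[a]}-1)S_{[a]}^2(r)$ when $r=r'$. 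After collecting terms, the $\pi_{[a]}(r)\pi_{[a']}(r')$ contributions cancel between diagonal and off-diagonal parts, leaving $\pi_{[a][a]}(r,r')$ (for $r\neq r'$) or $\pi_{[a]}(r)-\pi_{[a][a]}(r,r)$ (for $r=r'$) as the surviving factor.

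There is no serious obstacle here; the proof is essentially a bookkeeping exercise. The only spot that demands care is the cancellation in the $a=a'$, $r\neq r'$ case: the $-\pi_{[a]}(r)\pi_{[a]}(r')$ from the diagonal $i=j$ term must exactly cancel the $-\pi_{[a]}(r)\pi_{[a]}(r')$ piece in the off-diagonal covariance (both multiplied by $(n_{[a]}-1)S_{[a]}(r,r')$, via the centering identity), leaving only the $\pi_{[a][a]}(r,r')$ term with an overall negative sign. Verifying this sign and the corresponding cancellation in the $r=r'$ case is the one place where a sign error could easily slip in.
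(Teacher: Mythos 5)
Your proposal is correct and follows essentially the same route as the paper: expand the covariance as a double sum of indicator covariances, handle the $i=j$ and $i\neq j$ terms with the constants guaranteed by Assumption~\ref{asmp:indist}, and use the centering identity $\sum_{i:A_i=a}\widetilde{Y}_i(r)=0$ together with \eqref{eq:sh2r_sh2rr_prim} to reduce everything to $S_{[a]}(r,r')$ and $S^2_{[a]}(r)$. The only difference is cosmetic bookkeeping (you convert the off-diagonal sum to minus the diagonal sum, while the paper completes the full double sum and adjusts the diagonal coefficient), and your flagged cancellation in the $a=a'$, $r\neq r'$ case works out exactly as you describe.
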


\begin{lemma}\label{lemma:cov_between_center_and_mean}
For $1\leq a,a'\leq H$ and $r,r'\in \mathcal{R}$,
$
\Cov\{B_{[a]}(r), C_{[a']}(r')\} 
%& = \Cov\left[\{n_h \pi_h(r)\}^{-1}\sum_{i:A_i=h}I(R_i=r)\widetilde{Y}_i(r), \{n_{h'} \pi_{h'}(r')\}^{-1}\sum_{j:A_j=h'}I(R_j=r')\bar{Y}_{h'}(r') \right]\\
 =0.
$
\end{lemma}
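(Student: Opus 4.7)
The plan is to show the covariance is zero by a direct computation, exploiting two facts: (i) $C_{[a']}(r')$ is a scalar multiple of the random count $N_{[a']}(r')=\sum_{j:A_j=a'}I(R_j=r')$, since $\bar{Y}_{[a']}(r')$ is a finite-population constant; and (ii) the centered potential outcomes $\widetilde{Y}_i(r)$ sum to zero within each attribute group by construction \eqref{eq:sh2r_sh2rr_prim}. Together with Assumption \ref{asmp:indist}, which guarantees that the joint probabilities $\pr(R_i=r,R_j=r')$ depend on $(i,j)$ only through $(A_i,A_j)$, the cross-expectation will factor into a constant times $\sum_{i:A_i=a}\widetilde{Y}_i(r) = 0$.

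First I would observe that $E\{B_{[a]}(r)\} = n_{[a]}^{-1}\sum_{i:A_i=a}\widetilde{Y}_i(r) = 0$ by Assumption \ref{asmp:indist}(a), so that $\Cov\{B_{[a]}(r),C_{[a']}(r')\} = E\{B_{[a]}(r)\,C_{[a']}(r')\}$. Writing
\[
C_{[a']}(r') \;=\; \frac{\bar{Y}_{[a']}(r')}{n_{[a']}\pi_{[a']}(r')}\,N_{[a']}(r'),
\]
it is enough to show $E\{B_{[a]}(r)\,N_{[a']}(r')\}=0$. Expanding,
\begin{align*}
E\{B_{[a]}(r)\,N_{[a']}(r')\}
= \frac{1}{n_{[a]}\pi_{[a]}(r)}\sum_{i:A_i=a}\sum_{j:A_j=a'}\pr(R_i=r,R_j=r')\,\widetilde{Y}_i(r).
\end{align*}

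Next I would split into cases based on whether $a=a'$ and whether $i=j$, and use Assumption \ref{asmp:indist} to replace each relevant probability with a constant depending only on $(a,a',r,r')$. When $a\neq a'$, all pairs have $i\neq j$ and the common probability $\pi_{[a][a']}(r,r')$ pulls out, leaving $\sum_{i:A_i=a}\widetilde{Y}_i(r)=0$. When $a=a'$ and $r\neq r'$, the diagonal terms $i=j$ vanish because $I(R_i=r)I(R_i=r')=0$, and the off-diagonal terms yield $\pi_{[a][a]}(r,r')(n_{[a]}-1)\sum_{i:A_i=a}\widetilde{Y}_i(r)=0$. When $a=a'$ and $r=r'$, the diagonal and off-diagonal contributions each factor into a constant times $\sum_{i:A_i=a}\widetilde{Y}_i(r)$, which again vanishes.

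No step looks like a genuine obstacle; the result is essentially bookkeeping once one recognizes that $C_{[a']}(r')$ is a constant times a random count while $B_{[a]}(r)$ is a weighted sum of centered outcomes whose weights, after taking expectation, become uniform across $i$ thanks to Assumption \ref{asmp:indist}. The only mild care is in the case $a=a'$, $r=r'$, where the $i=j$ term must be handled separately using $E\{I(R_i=r)\}=\pi_{[a]}(r)$ rather than $\pi_{[a][a]}(r,r)$; however the conclusion is unchanged since the same sum $\sum_{i:A_i=a}\widetilde{Y}_i(r)$ multiplies both contributions.
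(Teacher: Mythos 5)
Your proposal is correct and follows essentially the same route as the paper's proof: a direct case-by-case computation over $(a,a',r,r')$ and $i=j$ versus $i\neq j$, using Assumption \ref{asmp:indist} to make the joint probabilities constant within attribute classes so that everything factors into a constant times $\sum_{i:A_i=a}\widetilde{Y}_i(r)=0$. Your observation that $E\{B_{[a]}(r)\}=0$ and that $C_{[a']}(r')$ is a constant multiple of the count $\sum_{j:A_j=a'}I(R_j=r')$ only streamlines the bookkeeping; the substance is identical to the paper's argument via the indicator covariances in \eqref{eq:cov_indicator}.
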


\begin{lemma}\label{lemma:var_between_mean}
For any $1\leq a,a'\leq H$ and $r,r'\in \mathcal{R}$, 
\begin{align*}
\Cov\{C_{[a]}(r), C_{[a']}(r')\} 
%& = \Cov\left[\{n_h \pi_h(r)\}^{-1}\sum_{i:A_i=h}I(R_i=r)\bar{Y}_h(r),\{n_{h'} \pi_{h'}(r')\}^{-1}\sum_{j:A_j=h'}I(R_j=r')\bar{Y}_{h'}(r')\right]\\
& =
(n_{[a]} n_{[a']})^{-1/2}
c_{[a][a']}(r,r')\bar{Y}_{[a]}(r)\bar{Y}_{[a']}(r'),
\end{align*}
where $c_{[a][a']}(r,r')$ 
is defined in \eqref{eq:c_def} in the main text. 
\end{lemma}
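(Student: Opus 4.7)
The plan is to factor out the deterministic quantity $\bar{Y}_{[a]}(r)$ from $C_{[a]}(r)$ and reduce the problem to a covariance computation among sums of treatment indicators. Writing $M_{[a]}(r) = \{n_{[a]}\pi_{[a]}(r)\}^{-1}\sum_{i:A_i=a}I(R_i=r)$, we have $C_{[a]}(r) = \bar{Y}_{[a]}(r)\,M_{[a]}(r)$, so
$$\Cov\{C_{[a]}(r),C_{[a']}(r')\} = \bar{Y}_{[a]}(r)\bar{Y}_{[a']}(r')\,\Cov\{M_{[a]}(r),M_{[a']}(r')\},$$
and the task becomes showing $\Cov\{M_{[a]}(r),M_{[a']}(r')\} = c_{[a][a']}(r,r')/\sqrt{n_{[a]}n_{[a']}}$.

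Next, I would expand the covariance bilinearly as a double sum over $i$ with $A_i=a$ and $j$ with $A_j=a'$, invoking Assumption \ref{asmp:indist} to assert that each summand $\Cov\{I(R_i=r),I(R_j=r')\}$ depends only on the attributes $(a,a')$ (and on whether $i=j$), not on the identities. Then I would split into three cases matching the three branches of \eqref{eq:c_def}: (i) $a\neq a'$, where $i\neq j$ always and all $n_{[a]}n_{[a']}$ pairs contribute $\pi_{[a][a']}(r,r')-\pi_{[a]}(r)\pi_{[a']}(r')$; (ii) $a=a'$ with $r\neq r'$, where the $n_{[a]}$ diagonal terms give $-\pi_{[a]}(r)\pi_{[a]}(r')$ each (since $I(R_i=r)I(R_i=r')=0$) and the $n_{[a]}(n_{[a]}-1)$ off-diagonal terms use $\pi_{[a][a]}(r,r')$; (iii) $a=a'$ with $r=r'$, where diagonal terms give $\pi_{[a]}(r)-\pi_{[a]}^2(r)$ each and off-diagonals use $\pi_{[a][a]}(r,r)$.

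In each case I would substitute the identity $\pi_{[a][a']}(r,r')-\pi_{[a]}(r)\pi_{[a']}(r') = \pi_{[a]}(r)\pi_{[a']}(r')\,d_{[a][a']}(r,r')/\sqrt{n_{[a]}n_{[a']}}$ coming from definition \eqref{eq:d_def}, collect the prefactor $\{n_{[a]}n_{[a']}\pi_{[a]}(r)\pi_{[a']}(r')\}^{-1}$ from $M_{[a]}M_{[a']}$, and simplify. The main bookkeeping obstacle is case (iii): the diagonal contribution produces an extra $\pi_{[a]}^{-1}(r)-1$ on top of the scaled $(1-n_{[a]}^{-1})d_{[a][a]}(r,r)$ coming from off-diagonals, and one must verify that these combine precisely into the third branch of \eqref{eq:c_def}. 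Case (ii) has a similar but milder subtlety, where the diagonal cross-term generates the $-1$ appearing in the second branch of \eqref{eq:c_def}. Matching the algebra branch-by-branch to $c_{[a][a']}(r,r')$ then finishes the proof.
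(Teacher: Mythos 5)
Your proposal is correct and follows essentially the same route as the paper's proof: expand the covariance as a double sum of indicator covariances, split into the three cases $a\neq a'$, $a=a'$ with $r\neq r'$, and $a=a'$ with $r=r'$, and match the resulting expressions to the branches of $c_{[a][a']}(r,r')$ via the definition of $d_{[a][a']}(r,r')$. Factoring out $\bar{Y}_{[a]}(r)\bar{Y}_{[a']}(r')$ at the start is only a cosmetic reorganization of the paper's identical computation, and your case-by-case algebra (including the diagonal contributions that produce the $-1$ and $\pi_{[a]}^{-1}(r)-1$ terms) checks out.
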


Recall that   
$
\Yhhrrprime =\{n_{[a]}(n_{[a]}-1)\}^{-1}\sum_{i\neq j:A_i=A_j=a}Y_i(r)Y_j(r')
$
is 
the average of the  
products of the potential outcomes 
for pairs of two different units with the same attribute $a$. 
The following lemma represents the finite population covariance $S_{[a]}^2(r,r')$ and the product of average potential outcomes $\bar{Y}_{[a]}(r)\bar{Y}_{[a]}(r')$ as functions of $S_{[a]}^2(r)$, $S_{[a]}^2(r')$, $S_{[a]}^2(r\text{-}r')$ and $\Yhhrrprime$.  
\begin{lemma}\label{lemma:represent_S_product}
For $1\leq a\leq H$, and $r\neq r'\in \mathcal{R}$, 
\begin{itemize}
\item[(a)] $2S_{[a]}(r,r') = S_{[a]}^2(r)+S_{[a]}^2(r')-S_{[a]}^2(r\text{-}r')$;
\item[(b)] $\bar{Y}_{[a]}(r)\bar{Y}_{[a]}(r') = \Yhhrrprime+
(2n_{[a]})^{-1}
\{
S_{[a]}^2(r)+S_{[a]}^2(r')-S_{[a]}^2(r\text{-}r')
\}.$
\end{itemize}
\end{lemma}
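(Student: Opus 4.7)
Both parts are essentially algebraic identities over the finite subpopulation $\{i:A_i=a\}$, so the plan is to unpack the definitions and manipulate the sums. No randomization arguments are needed; only the definitions of $S_{[a]}^2(r)$, $S_{[a]}^2(r\text{-}r')$, $S_{[a]}(r,r')$ (the finite-population covariance), and $\Yhhrrprime$ enter.

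For part (a), I would start from the identity $\tau_i(r,r')=Y_i(r)-Y_i(r')$ and hence $\tau_{[a]}(r,r')=\bar{Y}_{[a]}(r)-\bar{Y}_{[a]}(r')$. Writing out
\begin{align*}
S_{[a]}^2(r\text{-}r')
&=(n_{[a]}-1)^{-1}\!\!\sum_{i:A_i=a}\!\bigl\{(Y_i(r)-\bar{Y}_{[a]}(r))-(Y_i(r')-\bar{Y}_{[a]}(r'))\bigr\}^2
\end{align*}
and expanding the square gives $S_{[a]}^2(r)+S_{[a]}^2(r')-2S_{[a]}(r,r')$, where the cross term is precisely the definition of $S_{[a]}(r,r')$ in \eqref{eq:sh2r_sh2rr_prim}. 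Rearranging yields (a).

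For part (b), I would relate $\Yhhrrprime$ to $\bar{Y}_{[a]}(r)\bar{Y}_{[a]}(r')$ via the standard ``diagonal versus off-diagonal'' decomposition:
\begin{align*}
n_{[a]}^2\,\bar{Y}_{[a]}(r)\bar{Y}_{[a]}(r')
&=\sum_{i:A_i=a}Y_i(r)Y_i(r')+n_{[a]}(n_{[a]}-1)\,\Yhhrrprime.
\end{align*}
Next, I would use the definition of $S_{[a]}(r,r')$ to write
$\sum_{i:A_i=a}Y_i(r)Y_i(r')=(n_{[a]}-1)S_{[a]}(r,r')+n_{[a]}\bar{Y}_{[a]}(r)\bar{Y}_{[a]}(r')$.
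Substituting and simplifying gives $\bar{Y}_{[a]}(r)\bar{Y}_{[a]}(r')=\Yhhrrprime+n_{[a]}^{-1}S_{[a]}(r,r')$. Plugging in the expression for $S_{[a]}(r,r')$ obtained in (a) yields the claimed identity.

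There is no real obstacle here; it is a bookkeeping exercise whose only subtlety is to remember that $\Yhhrrprime$ excludes the ``diagonal'' pairs $i=j$, which is exactly what produces the $n_{[a]}(n_{[a]}-1)$ factor in the decomposition and hence the $(2n_{[a]})^{-1}$ in the final formula. The natural order of writing is therefore: prove (a) by expanding the square defining $S_{[a]}^2(r\text{-}r')$, then prove (b) by the diagonal/off-diagonal split and invoking (a).
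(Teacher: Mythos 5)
Your proposal is correct and follows essentially the same route as the paper: part (a) by expanding the square defining $S_{[a]}^2(r\text{-}r')$ in centered form so the cross term is $2S_{[a]}(r,r')$, and part (b) by the diagonal/off-diagonal split of $n_{[a]}^2\bar{Y}_{[a]}(r)\bar{Y}_{[a]}(r')$ combined with the identity $\sum_{i:A_i=a}Y_i(r)Y_i(r')=(n_{[a]}-1)S_{[a]}(r,r')+n_{[a]}\bar{Y}_{[a]}(r)\bar{Y}_{[a]}(r')$. The only difference is cosmetic: the paper verifies (b) by starting from the right-hand side and simplifying, whereas you derive it in the forward direction.
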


\subsection{Proofs of the lemmas}
\begin{proof}[Proof of Lemma \ref{lemma:var_between_centered_outcome}]
Based on the definitions of $\pi_{[a]}(r)$ and $\pi_{[a][a']}(r,r')$, for units $i$ and $j$ such that $A_i=a$ and $A_j=a'$, the covariance between their treatment indicators is
\begin{align}\label{eq:cov_indicator}
\Cov\{ I(R_i=r), I(R_j=r') \} & = \pr(R_i=r, R_j=r')-\pr(R_i=r) \pr(R_j=r') \nonumber\\
& = 
\begin{cases}
\pi_{[a][a']}(r,r')-\pi_{[a]}(r)\pi_{[a']}(r'), & \text{if } i\neq j,\\
-\pi_{[a]}(r)\pi_{[a]}(r'),  & \text{if } i= j, r\neq r',\\
\pi_{[a]}(r) - \pi_{[a]}^2(r) & \text{if } i = j, r= r'.
\end{cases}
\end{align}
Below we discuss three cases separately. 
(1)
When $a \neq a'$, any units $i$ and $j$ such that $A_i=a$ and $A_j=a'$ must satisfy $i\neq j$. Therefore, 
\begin{align*}
\Cov\{B_{[a]}(r), B_{[a']}(r')\}
& = \Cov\left[
\{n_{[a]} \pi_{[a]}(r)\}^{-1} \sum_{i:A_i=a}I(R_i=r) \widetilde{Y}_i(r), \{n_{[a']} \pi_{[a']}(r')\}^{-1}\sum_{j:A_j=a'}I(R_j=r')\widetilde{Y}_j(r')
\right]\\
& = \{n_{[a]} n_{[a']}  \pi_{[a]}(r) \pi_{[a']}(r')\}^{-1}\sum_{i:A_i=a}
\sum_{j:A_j=a'}\widetilde{Y}_i(r)\widetilde{Y}_j(r')\Cov\{
I(R_i=r), I(R_j=r')
\}\\
& = \frac{\pi_{[a][a']}(r,r')-\pi_{[a]}(r)\pi_{[a']}(r')}{n_{[a]} n_{[a']}  \pi_{[a]}(r) \pi_{[a']}(r')}\sum_{i:A_i=a}
\sum_{j:A_j=a'}
\widetilde{Y}_i(r)\widetilde{Y}_j(r'),
\end{align*}
where the last equality follows from \eqref{eq:cov_indicator}.
We can further simplify $\Cov\{B_{[a]}(r), B_{[a']}(r')\}$ as 
\begin{eqnarray*}
\Cov\{B_{[a]}(r), B_{[a']}(r')\} = 
\frac{\pi_{[a][a']}(r,r')-\pi_{[a]}(r)\pi_{[a']}(r')}{n_{[a]} n_{[a']}  \pi_{[a]}(r) \pi_{[a']}(r')}\sum_{i:A_i=a}\widetilde{Y}_i(r)\sum_{j:A_j=a'}\widetilde{Y}_j(r') = 0.
\end{eqnarray*}

(2)  
When $a=a'$ and $r\neq r'$, we need to consider the covariances between the treatment indicators of two different units and those of the same unit: 
\begin{align*}
\Cov\{B_{[a]}(r), B_{[a]}(r')\}
& = \Cov\left[
\{n_{[a]} \pi_{[a]}(r)\}^{-1}\sum_{i:A_i=a}I(R_i=r)\widetilde{Y}_i(r), \{n_{[a]} \pi_{[a]}(r')\}^{-1} \sum_{j:A_j=a}I(R_j=r')\widetilde{Y}_j(r')
\right]\\
& =  \{n_{[a]}^2  \pi_{[a]}(r) \pi_{[a]}(r')\}^{-1} \sum_{i\neq j:A_i=a,A_j=a}\widetilde{Y}_i(r)\widetilde{Y}_j(r')\Cov\{
I(R_i=r), I(R_j=r')
\}\\
& \quad \  +\{n_{[a]}^2  \pi_{[a]}(r) \pi_{[a]}(r')\}^{-1} \sum_{i:A_i=a}\widetilde{Y}_i(r)\widetilde{Y}_i(r')\Cov\{
I(R_i=r), I(R_i=r')
\}\\
& =  \frac{\pi_{[a][a]}(r,r')-\pi_{[a]}(r)\pi_{[a]}(r')}{n_{[a]}^2  \pi_{[a]}(r) \pi_{[a]}(r')}\sum_{i\neq j:A_i=a,A_j=a}\widetilde{Y}_i(r)\widetilde{Y}_j(r') -\frac{\pi_{[a]}(r)\pi_{[a]}(r')}{n_{[a]}^2  \pi_{[a]}(r) \pi_{[a]}(r')}\sum_{i:A_i=a}\widetilde{Y}_i(r)\widetilde{Y}_i(r'),
\end{align*}
where the last equality follows from \eqref{eq:cov_indicator}. 
We can further simplify $\Cov\{B_{[a]}(r), B_{[a]}(r')\}$ as 
\begin{align*}
\Cov\{B_{[a]}(r), B_{[a]}(r')\}
& =   \frac{\pi_{[a][a]}(r,r')-\pi_{[a]}(r)\pi_{[a]}(r')}{n_{[a]}^2  \pi_{[a]}(r) \pi_{[a]}(r')}\sum_{i:A_i=a}
\sum_{j:A_j=a}
\widetilde{Y}_i(r)\widetilde{Y}_j(r')\\
& \quad \  - 
\left(
\frac{\pi_{[a][a]}(r,r')-\pi_{[a]}(r)\pi_{[a]}(r')}{n_{[a]}^2  \pi_{[a]}(r) \pi_{[a]}(r')} + 
\frac{\pi_{[a]}(r)\pi_{[a]}(r')}{n_{[a]}^2  \pi_{[a]}(r) \pi_{[a]}(r')}
\right)
\sum_{i:A_i=a}\widetilde{Y}_i(r)\widetilde{Y}_i(r')\\
& =  0-\frac{(n_{[a]}-1) \pi_{[a][a]}(r,r')}{n_{[a]}^2  \pi_{[a]}(r) \pi_{[a]}(r')} (n_{[a]}-1)^{-1}\sum_{i:A_i=a}\widetilde{Y}_i(r)\widetilde{Y}_i(r')\\
& =  -(n_{[a]}-1)\frac{\pi_{[a][a]}(r,r')}{n_{[a]}^2  \pi_{[a]}(r) \pi_{[a]}(r')} S_{[a]}(r,r'),
\end{align*}
where the last equality follows from \eqref{eq:sh2r_sh2rr_prim}.
%\item[3)] 

(3)
When $a=a'$ and $r=r'$, we similarly consider two cases with $i=j$ and $i\neq j$: 
\begin{align*}
\Var\{B_{[a]}(r)\} & =  \Cov\left[
\{n_{[a]} \pi_{[a]}(r)\}^{-1} \sum_{i:A_i=a}I(R_i=r)\widetilde{Y}_i(r), \{n_{[a]} \pi_{[a]}(r)\}^{-1} \sum_{j:A_j=a}I(R_j=r)\widetilde{Y}_j(r)
\right]\\
& =  \{n_{[a]}^2  \pi_{[a]}^2(r)\}^{-1} \sum_{i\neq j:A_i=a,A_j=a}\widetilde{Y}_i(r)\widetilde{Y}_j(r)\Cov\{
I(R_i=r), I(R_j=r)
\}\\
& \quad \   + \{n_{[a]}^2  \pi_{[a]}^2(r) \}^{-1} \sum_{i:A_i=a}\widetilde{Y}_i(r)\widetilde{Y}_i(r)\Cov\{
I(R_i=r), I(R_i=r)
\}\\
& =  \frac{\pi_{[a][a]}(r,r)-\pi_{[a]}^2(r)}{n_{[a]}^2  \pi_{[a]}^2(r) }\sum_{i\neq j:A_i=a,A_j=a}\widetilde{Y}_i(r)\widetilde{Y}_j(r)+ \frac{\pi_{[a]}(r)-\pi_{[a]}^2(r)}{n_{[a]}^2  \pi_{[a]}^2(r) }\sum_{i:A_i=a}\widetilde{Y}_i^2(r)
\end{align*}
where the last equality follows from \eqref{eq:cov_indicator}. We can further simplify $\Var\{B_{[a]}(r)\}$ as 
\begin{align*}
\Var\{B_{[a]}(r)\} & =  \frac{\pi_{[a][a]}(r,r)-\pi_{[a]}^2(r)}{n_{[a]}^2  \pi_{[a]}^2(r) }\sum_{i:A_i=a}\sum_{j:A_j=a}\widetilde{Y}_i(r)\widetilde{Y}_j(r) \\
& \quad \ + 
\left(
\frac{\pi_{[a]}(r)-\pi_{[a]}^2(r)}{n_{[a]}^2  \pi_{[a]}^2(r) } - 
\frac{\pi_{[a][a]}(r,r)-\pi_{[a]}^2(r)}{n_{[a]}^2  \pi_{[a]}^2(r) } 
\right)\sum_{i:A_i=a}\widetilde{Y}_i^2(r)
\\
& =  0 + \frac{\pi_{[a]}(r)- \pi_{[a][a]}(r,r)}{n_{[a]}^2  \pi_{[a]}^2(r) }\sum_{i:A_i=a}\widetilde{Y}_i^2(r)\\
& =  (n_{[a]}-1)\frac{\pi_{[a]}(r)- \pi_{[a][a]}(r,r)}{n_{[a]}^2  \pi_{[a]}^2(r) }S_{[a]}^2(r),
\end{align*}
where the last equality follows from \eqref{eq:sh2r_sh2rr_prim}.
\end{proof}

\begin{proof}[Proof of Lemma \ref{lemma:cov_between_center_and_mean}]
Similarly to the proof of Lemma \ref{lemma:var_between_centered_outcome}, we discuss three cases, and use \eqref{eq:cov_indicator} to calculate the covariance between two treatment indicators.
(1) When $a\neq a'$, 
\begin{align*}
\Cov\{B_{[a]}(r), C_{[a']}(r')\}  & =  \Cov\left[ \{n_{[a]} \pi_{[a]}(r)\}^{-1} \sum_{i:A_i=a}I(R_i=r)\widetilde{Y}_i(r), \{n_{[a']} \pi_{[a']}(r')\}^{-1} \sum_{j:A_j=a'}I(R_j=r')\bar{Y}_{[a']}(r') \right]\\
& =  \{n_{[a]} n_{[a']} \pi_{[a]}(r) \pi_{[a']}(r')\}^{-1} \cdot\bar{Y}_{[a']}(r') \sum_{i:A_i=a}
\sum_{j: A_j=a' }
\widetilde{Y}_i(r)\Cov\{
I(R_i=r), I(R_j=r')
\}\\
& =  \frac{\pi_{[a][a']}(r,r')-\pi_{[a]}(r) \pi_{[a']}(r')}{n_{[a]} n_{[a']} \pi_{[a]}(r) \pi_{[a']}(r')}\bar{Y}_{[a']}(r') \sum_{i:A_i=a}
\sum_{j: A_j=a' }\widetilde{Y}_i(r) 
\end{align*}
where the last equality follows from \eqref{eq:cov_indicator}. We can further simplify $\Cov\{B_{[a]}(r), C_{[a']}(r')\}$ as 
\begin{align*}
\Cov\{B_{[a]}(r), C_{[a']}(r')\} & =  \frac{\pi_{[a][a']}(r,r')-\pi_{[a]}(r) \pi_{[a']}(r')}{n_{[a]} n_{[a']} \pi_{[a]}(r) \pi_{[a']}(r')}\bar{Y}_{[a']}(r')\cdot n_{[a']}\sum_{i:A_i=a}\widetilde{Y}_i(r) =  0.
\end{align*}
(2) When $a=a'$ and $r\neq r'$,
\begin{align*}
\Cov\{B_{[a]}(r), C_{[a]}(r')\} & = \Cov\left[ \{n_{[a]} \pi_{[a]}(r)\}^{-1} \sum_{i:A_i=a} I(R_i=r)\widetilde{Y}_i(r), \{n_{[a]} \pi_{[a]}(r')\}^{-1} \sum_{j:A_j=a}I(R_j=r')\bar{Y}_{[a]}(r') \right]\\
& = \{n_{[a]}^2 \pi_{[a]}(r) \pi_{[a]}(r')\}^{-1} \cdot \bar{Y}_{[a]}(r') \sum_{i\neq j:A_i=a,A_j=a}\widetilde{Y}_i(r)\Cov\{
I(R_i=r), I(R_j=r')
\}\\
& \quad \  +\{n_{[a]}^2 \pi_{[a]}(r) \pi_{[a]}(r')\}^{-1}\cdot \bar{Y}_{[a]}(r')\sum_{i:A_i=a}\widetilde{Y}_i(r)\Cov\{
I(R_i=r), I(R_i=r')
\}\\
& = \frac{\pi_{[a][a]}(r,r')-\pi_{[a]}(r)\pi_{[a]}(r')}{n_{[a]}^2 \pi_{[a]}(r) \pi_{[a]}(r')}\bar{Y}_{[a]}(r')\sum_{i\neq j:A_i=a,A_j=a}\widetilde{Y}_i(r)\\
& \quad \ 
- \frac{\pi_{[a]}(r)\pi_{[a]}(r')}{n_{[a]}^2 \pi_{[a]}(r) \pi_{[a]}(r')}\bar{Y}_{[a]}(r')\sum_{i:A_i=a}\widetilde{Y}_i(r),
\end{align*}
where the last equality follows from from \eqref{eq:cov_indicator}. We can further simplify $\Cov\{B_{[a]}(r), C_{[a]}(r')\}$ as 
\begin{align*}
\Cov\{B_{[a]}(r), C_{[a]}(r')\} & =  \frac{\pi_{[a][a]}(r,r')-\pi_{[a]}(r)\pi_{[a]}(r')}{n_{[a]}^2 \pi_{[a]}(r) \pi_{[a]}(r')}\bar{Y}_{[a]}(r')\cdot (n_{[a]}-1) \sum_{i:A_i=a}\widetilde{Y}_i(r) - 0  = 0.
\end{align*}
(3) When $a=a'$ and $r=r'$,
\begin{align*}
\Cov\{B_{[a]}(r), C_{[a]}(r)\} & = \Cov\left[ \{n_{[a]} \pi_{[a]}(r)\}^{-1} \sum_{i:A_i=a}I(R_i=r)\widetilde{Y}_i(r), \{n_{[a]} \pi_{[a]}(r)\}^{-1}\sum_{j:A_j=a}I(R_j=r)\bar{Y}_{[a]}(r) \right]\\
& = \{n_{[a]}^2 \pi_{[a]}(r) \pi_{[a]}(r')\}^{-1}\sum_{i\neq j:A_i=a,A_j=a}\widetilde{Y}_i(r)\bar{Y}_{[a]}(r')\Cov\{
I(R_i=r), I(R_j=r)
\}\\
& \quad \  + \{n_{[a]}^2 \pi_{[a]}(r) \pi_{[a]}(r')\}^{-1} \sum_{i:A_i=a}\widetilde{Y}_i(r) \bar{Y}_{[a]}(r')\Cov\{
I(R_i=r), I(R_i=r)
\}\\
& = \frac{\pi_{[a][a]}(r,r)-\pi_{[a]}(r)\pi_{[a]}(r)}{n_{[a]}^2 \pi_{[a]}(r) \pi_{[a]}(r)}\sum_{i\neq j:A_i=a,A_j=a}\widetilde{Y}_i(r)\bar{Y}_{[a]}(r)\\
& \quad \ +\frac{\pi_{[a]}(r) - \pi_{[a]}(r)\pi_{[a]}(r) }{n_{[a]}^2 \pi_{[a]}(r) \pi_{[a]}(r)}\sum_{i:A_i=a}\widetilde{Y}_i(r)\bar{Y}_{[a]}(r),
\end{align*}
where the last equality follows from \eqref{eq:cov_indicator}. We can further simplify $\Cov\{B_{[a]}(r), C_{[a]}(r)\}$ as 
\begin{align*}
\Cov\{B_{[a]}(r), C_{[a]}(r)\} & =  \frac{\pi_{[a][a]}(r,r)-\pi_{[a]}(r)\pi_{[a]}(r)}{n_{[a]}^2 \pi_{[a]}(r) \pi_{[a]}(r)}\sum_{i:A_i=a}\widetilde{Y}_i(r)\times (n_{[a]}-1)\bar{Y}_{[a]}(r)- 0  = 0.
\end{align*}
\end{proof}

\begin{proof}[Proof of Lemma \ref{lemma:var_between_mean}]%[Proof of Lemma \ref{lemma:var_between_mean}]
Similary to the proofs of Lemmas \ref{lemma:var_between_centered_outcome} and \ref{lemma:cov_between_center_and_mean}, we 
discuss three cases separately, and use \eqref{eq:cov_indicator} to calculate the covariance between two treatment indicators.
(1) 
When $a\neq a'$, 
\begin{align*}
\Cov\{C_{[a]}(r), C_{[a']}(r')\}  & = \Cov\left[
\{n_{[a]} \pi_{[a]}(r)\}^{-1} \sum_{i:A_i=a}I(R_i=r)\bar{Y}_{[a]}(r),
\{n_{[a']} \pi_{[a']}(r')\}^{-1}\sum_{j:A_j=a'}I(R_j=r')\bar{Y}_{[a']}(r')
\right]\\
& = \{n_{[a]} n_{[a']} \pi_{[a]}(r) \pi_{[a']}(r')\}^{-1}\bar{Y}_{[a]}(r)\bar{Y}_{[a']}(r')\sum_{i:A_i=a}
\sum_{j: A_j=a'}
\Cov
\{
I(R_i=r), I(R_j=r')
\}\\
& = \frac{\pi_{[a][a']}(r,r')-\pi_{[a]}(r) \pi_{[a']}(r')}{n_{[a]} n_{[a']} \pi_{[a]}(r) \pi_{[a']}(r')}\bar{Y}_{[a]}(r)\bar{Y}_{[a']}(r')  n_{[a]} n_{[a']}\\
& 
=  \frac{\pi_{[a][a']}(r,r')-\pi_{[a]}(r) \pi_{[a']}(r')}{ \pi_{[a]}(r) \pi_{[a']}(r')}\bar{Y}_{[a]}(r)\bar{Y}_{[a']}(r'),
\end{align*}
where the last equality follows from \eqref{eq:cov_indicator}. 
Based on the definitions of $d_{[a][a']}(r,r')$ and $c_{[a][a']}(r,r')$ in \eqref{eq:d_def} and \eqref{eq:c_def}, 
we can further simplify $\Cov\{C_{[a]}(r), C_{[a']}(r')\}$ as 
\begin{align*}
\Cov\{C_{[a]}(r), C_{[a']}(r')\} & = 
\left(
\frac{\pi_{[a][a']}(r,r')}{ \pi_{[a]}(r)\pi_{[a']}(r')} - 1
\right)
\bar{Y}_{[a]}(r)\bar{Y}_{[a']}(r') = 
\frac{d_{[a][a']}(r,r')}{(n_{[a]} n_{[a']})^{1/2}}\bar{Y}_{[a]}(r)\bar{Y}_{[a']}(r')\\
&
= \frac{c_{[a][a']}(r,r')}{(n_{[a]} n_{[a']})^{1/2}}\bar{Y}_{[a]}(r)\bar{Y}_{[a']}(r').
\end{align*}
(2) When $a=a'$ and $r\neq r'$,
\begin{align*}
\Cov\{C_{[a]}(r), C_{[a]}(r')\} & = \Cov\left[
\{n_{[a]} \pi_{[a]}(r)\}^{-1}\sum_{i:A_i=a}I(R_i=r)\bar{Y}_{[a]}(r),
\{n_{[a]} \pi_{[a]}(r')\}^{-1}\sum_{j:A_j=a}I(R_j=r')\bar{Y}_{[a]}(r')
\right]\\
& = \{n_{[a]}^2 \pi_{[a]}(r) \pi_{[a]}(r')\}^{-1} \bar{Y}_{[a]}(r)\bar{Y}_{[a]}(r')\sum_{i\neq j:A_i=a,A_j=a}\Cov
\{
I(R_i=r), I(R_j=r')
\} \\
& \quad \  + \{n_{[a]}^2 \pi_{[a]}(r) \pi_{[a]}(r')\}^{-1}\bar{Y}_{[a]}(r)\bar{Y}_{[a]}(r')\sum_{i:A_i=a}\Cov
\{
I(R_i=r), I(R_i=r')
\}\\
& =  \frac{\pi_{[a][a]}(r,r')- \pi_{[a]}(r)\pi_{[a]}(r')}{n_{[a]}^2 \pi_{[a]}(r) \pi_{[a]}(r')}\bar{Y}_{[a]}(r)\bar{Y}_{[a]}(r') n_{[a]}(n_{[a]}-1)\\
& \quad \ 
- \frac{\pi_{[a]}(r)\pi_{[a]}(r')}{n_{[a]}^2 \pi_{[a]}(r) \pi_{[a]}(r')}\bar{Y}_{[a]}(r)\bar{Y}_{[a]}(r') n_{[a]},
\end{align*}
where the last equality follows from \eqref{eq:cov_indicator}. 
Based on the definitions of $d_{[a][a]}(r,r')$ and $c_{[a][a]}(r,r')$ in \eqref{eq:d_def} and \eqref{eq:c_def}, 
we can further simplify $\Cov\{C_{[a]}(r), C_{[a]}(r')\}$ as 
\begin{align*}
\Cov\{C_{[a]}(r), C_{[a]}(r')\}
& = 
\left\{
\left(1-n_{[a]}^{-1}\right)
\left(
\frac{\pi_{[a][a]}(r,r')}{ \pi_{[a]}(r)\pi_{[a]}(r')}-1
\right)
 - n_{[a]}^{-1}
\right\}\bar{Y}_{[a]}(r)\bar{Y}_{[a]}(r')
\\
& = 
\left\{
\left(1-n_{[a]}^{-1}\right)n_{[a]}^{-1}d_{[a][a]}(r,r') - n_{[a]}^{-1}
\right\}\bar{Y}_{[a]}(r)\bar{Y}_{[a]}(r')\\
& = 
\frac{\left(1-n_{[a]}^{-1}\right)d_{[a][a]}(r,r') - 1}{n_{[a]}}\bar{Y}_{[a]}(r)\bar{Y}_{[a]}(r')
\\
& = \frac{c_{[a][a]}(r,r')}{n_{[a]}}\bar{Y}_{[a]}(r)\bar{Y}_{[a]}(r').
\end{align*}
(3) When $a=a'$ and $r=r'$,
\begin{align*}
\Var\{C_{[a]}(r)\} & = \Cov\left[
\{n_{[a]} \pi_{[a]}(r)\}^{-1}\sum_{i:A_i=a}I(R_i=r)\bar{Y}_{[a]}(r),
\{n_{[a']} \pi_{[a']}(r')\}^{-1}\sum_{j:A_j=a'}I(R_j=r')\bar{Y}_{[a']}(r')
\right]\\
& = \{n_{[a]}^2 \pi_{[a]}^2(r) \}^{-1}\bar{Y}_{[a]}^2(r)\sum_{i\neq j:A_i=a,A_j=a}\Cov
\{
I(R_i=r), I(R_j=r)
\} \\
& \quad \  + \{n_{[a]}^2 \pi_{[a]}^2(r)\}^{-1}\bar{Y}_{[a]}^2(r)\sum_{i:A_i=a}\Cov
\{
I(R_i=r), I(R_i=r)
\}\\
& = \frac{\pi_{[a][a]}(r,r)- \pi_{[a]}^2(r)}{n_{[a]}^2 \pi_{[a]}^2(r)}\bar{Y}_{[a]}^2(r) n_{[a]}(n_{[a]}-1) +
\frac{\pi_{[a]}(r)-\pi_{[a]}^2(r)}{n_{[a]}^2 \pi_{[a]}^2(r)}\bar{Y}_{[a]}^2(r) n_{[a]},
\end{align*}
where the last equality follows from \eqref{eq:cov_indicator}. 
Based on the definitions of $d_{[a][a]}(r,r)$ and $c_{[a][a]}(r,r)$ in \eqref{eq:d_def} and \eqref{eq:c_def}, 
we can further simplify $\Var\{C_{[a]}(r)\}$ as 
\begin{align*}
\Var\{C_{[a]}(r)\} & = 
\left\{
\left(1-n_{[a]}^{-1}\right)
\left(
\frac{\pi_{[a][a]}(r,r)}{\pi_{[a]}^2(r)} - 1
\right) + n_{[a]}^{-1} \pi_{[a]}^{-1}(r) - n_{[a]}^{-1}
\right\}\bar{Y}_{[a]}^2(r)\\
& = 
\left\{
\left(1-n_{[a]}^{-1}\right)
n_{[a]}^{-1}d_{[a][a]}(r,r)
 + n_{[a]}^{-1} \pi_{[a]}^{-1}(r) - n_{[a]}^{-1}
\right\}\bar{Y}_{[a]}^2(r)
\\
& = 
\frac{\left(1-n_{[a]}^{-1}\right)
d_{[a][a]}(r,r) + \pi_{[a]}^{-1}(r) - 1
}{n_{[a]}}
\bar{Y}_{[a]}^2(r)
 = \frac{c_{[a][a]}(r,r)}{n_{[a]}} \bar{Y}_{[a]}^2(r).
\end{align*}
\end{proof}

\begin{proof}[Proof of Lemma \ref{lemma:represent_S_product}]
For $1\leq a\leq H$ and $r\neq r'\in \mathcal{R},$ 
by definition, we have 
\begin{eqnarray*}
 S_{[a]}^2(r)+S_{[a]}^2(r')-S_{[a]}^2(r\text{-}r') 
& = & (n_{[a]}-1)^{-1}\left[
\sum_{i:A_i=a}\widetilde{Y}_i^2(r)+\sum_{i:A_i=a}\widetilde{Y}_i^2(r') - \sum_{i:A_i=a}\left\{
\widetilde{Y}_i(r) - \widetilde{Y}_i(r')
\right\}^2
\right]\\
& = & (n_{[a]}-1)^{-1}\times 2\sum_{i:A_i=a}\widetilde{Y}_i(r)\widetilde{Y}_i(r')=2S_{[a]}(r,r'),
\end{eqnarray*}
and
\begin{eqnarray*}
& & \Yhhrrprime
 +(2n_{[a]})^{-1}\left\{
S_{[a]}^2(r)+S_{[a]}^2(r')-S_{[a]}^2(r\text{-}r')
\right\}\\
& = & \{n_{[a]}(n_{[a]}-1)\}^{-1}
\sum_{i\neq j:A_i=A_j=a}Y_i(r)Y_j(r')
 +n_{[a]}^{-1}S_{[a]}(r,r')\\
& = & \{n_{[a]}(n_{[a]}-1)\}^{-1} \sum_{i\neq j:A_i=A_j=a}Y_i(r)Y_j(r')+ \{n_{[a]}(n_{[a]}-1)\}^{-1}\left\{
\sum_{i: A_i=a}Y_i(r)Y_i(r') - n_{[a]} \bar{Y}_{[a]}(r)\bar{Y}_{[a]}(r')
\right\}\\
& = & \{n_{[a]}(n_{[a]}-1)\}^{-1}\left\{
\sum_{i:A_i=a}
\sum_{j:A_j=a}
Y_i(r)Y_j(r')  - n_{[a]} \bar{Y}_{[a]}(r)\bar{Y}_{[a]}(r')
\right\}\\
& = & \{n_{[a]}(n_{[a]}-1)\}^{-1} \left\{
n_{[a]}^2 \bar{Y}_{[a]}(r)\bar{Y}_{[a]}(r')  - n_{[a]} \bar{Y}_{[a]}(r)\bar{Y}_{[a]}(r')
\right\} 
 =  \bar{Y}_{[a]}(r)\bar{Y}_{[a]}(r').
\end{eqnarray*}
\end{proof}

\subsection{Proofs of the theorems for general assignment mechanism}

\begin{proof}[{\bf Proof of Theorem \ref{thm:var_gen}}]
First, we calculate the sampling variance of estimated subgroup average peer effect. 
From \eqref{eq:rewrite_tau_h} and Lemmas \ref{lemma:var_between_centered_outcome}--\ref{lemma:var_between_mean}, the covariances,
$\Cov\{B_{[a]}(r), C_{[a]}(r)\}$, $\Cov\{B_{[a]}(r), C_{[a]}(r')\}$, $\Cov\{B_{[a]}(r'), C_{[a]}(r)\}$ and $\Cov\{B_{[a]}(r'), C_{[a]}(r'),$ are all zero for $r\neq r'$.  Therefore, 
the sampling variance of subgroup average peer effect estimator is
\begin{align*}
& \quad \  \Var\left\{
\hat{\tau}_{[a]}(r,r') 
\right\}\\
 & = \Var\left\{
B_{[a]}(r)+C_{[a]}(r)-B_{[a]}(r')-C_{[a]}(r')
\right\}\\
& =
\Var\left\{ B_{[a]}(r) \right\} + 
 \Var\left\{
B_{[a]}(r')
\right\} - 2\Cov\left\{
B_{[a]}(r), B_{[a]}(r')
\right\}  + \Var\left\{
C_{[a]}(r)
\right\} + 
\Var\left\{
C_{[a]}(r')
\right\}\\
& \quad \  - 2\Cov\left\{
C_{[a]}(r),
C_{[a]}(r')
\right\}\\
& =  (n_{[a]}-1)\frac{\pi_{[a]}(r)-\pi_{[a][a]}(r,r)}{n_{[a]}^2  \pi_{[a]}^2(r) }S_{[a]}^2(r) + (n_{[a]}-1)\frac{\pi_{[a]}(r')-\pi_{[a][a]}(r',r')}{n_{[a]}^2 \pi_{[a]}^2(r') }S_{[a]}^2(r') \\
& \quad \  + 2\frac{(n_{[a]}-1) \pi_{[a][a]}(r,r')}{n_{[a]}^2  \pi_{[a]}(r) \pi_{[a]}(r')} S_{[a]}(r,r')
+ n_{[a]}^{-1}\left\{
c_{[a][a]}(r,r)\bar{Y}_{[a]}^2(r)+c_{[a][a]}(r',r')\bar{Y}_{[a]}^2(r') - 
2c_{[a][a]}(r,r')\bar{Y}_{[a]}(r)\bar{Y}_{[a]}(r')
\right\}.
\end{align*}
Replacing $2S_{[a]}(r,r')$ and $\bar{Y}_{[a]}(r)\bar{Y}_{[a]}(r')$ by their expressions in Lemma \ref{lemma:represent_S_product}, we can rewrite the sampling variance of $\hat{\tau}_{[a]}(r,r')$ as
\begin{eqnarray}\label{eq:proof_var_sub}
\Var\left\{
\hat{\tau}_{[a]}(r,r') 
\right\}
& = & (n_{[a]}-1)\frac{\pi_{[a]}(r)- \pi_{[a][a]}(r,r)}{n_{[a]}^2  \pi_{[a]}^2(r)}S_{[a]}^2(r) + (n_{[a]}-1)\frac{\pi_{[a]}(r')- \pi_{[a][a]}(r',r')}{n_{[a]}^2 \pi_{[a]}^2(r') }S_{[a]}^2(r') 
\nonumber
\\
&  & 
+ \frac{(n_{[a]}-1) \pi_{[a][a]}(r,r')}{n_{[a]}^2  \pi_{[a]}(r) \pi_{[a]}(r')} \left\{
S_{[a]}^2(r) + S_{[a]}^2(r') - S_{[a]}^2(r\text{-}r')
\right\}
\nonumber
\\
& & + n_{[a]}^{-1}\left\{
c_{[a][a]}(r,r)\bar{Y}_{[a]}^2(r)+c_{[a][a]}(r',r')\bar{Y}_{[a]}^2(r') - 
2c_{[a][a]}(r,r')
\Yhhrrprime \right\}
\nonumber\\
& & - \frac{2c_{[a][a]}(r,r')}{n_{[a]}}\frac{
S_{[a]}^2(r)+S_{[a]}^2(r')-S_{[a]}^2(r\text{-}r')
}{2n_{[a]}}.
\end{eqnarray}
We then combine the terms and calculate the cofficients of $S_{[a]}^2(r), S_{[a]}^2(r')$ and $S_{[a]}^2(r\text{-}r')$ in \eqref{eq:proof_var_sub}, separately. From the definitions of $c_{[a][a]}(r,r')$ and $b_{[a]}(r)$, we can simplify the coefficient of $S_{[a]}^2(r)$ as 
\begin{eqnarray*}
n_{[a]}^{-1}
\left\{
(1-n_{[a]}^{-1})
\left(\pi_{[a]}^{-1}(r) - \frac{\pi_{[a][a]}(r,r)}{\pi_{[a]}^2(r)} + \frac{\pi_{[a][a]}(r,r')}{\pi_{[a]}(r)\pi_{[a]}(r')} 
\right)- n_{[a]}^{-1}c_{[a][a]}(r,r')
\right\} = n_{[a]}^{-1}b_{[a]}(r), 
\end{eqnarray*}
and similarly simplify the coefficient of  $S_{[a]}^2(r')$ as $n_{[a]}^{-1}b_{[a]}(r').$
We can also simplify the coefficient of $S_{[a]}^2(r\text{-}r')$ as 
\begin{eqnarray*}
- \frac{(n_{[a]}-1) \pi_{[a][a]}(r,r')}{n_{[a]}^2  \pi_{[a]}(r) \pi_{[a]}(r')} + \frac{c_{[a][a]}(r,r')}{n_{[a]}^2} = - n_{[a]}^{-1}.
\end{eqnarray*}
Therefore, 
\eqref{eq:proof_var_sub} reduces to 
\begin{eqnarray*}
\Var\left\{
\hat{\tau}_{[a]}(r,r') 
\right\}
%& = & n_h^{-1}\left\{(1-n_h^{-1})\frac{\pi_h(r)-\pi_{h,h}(r,r)}{\pi_h^2(r)}+1\right\}S_h^2(r) + n_h^{-1}\left\{(1-n_{h}^{-1})\frac{\pi_h(r')-\pi_{h,h}(r',r')}{\pi_h^2(r')}+1\right\}S_h^2(r') \\& & - n_h^{-1} S_h^2(r\text{-}r') + n_h^{-1}\left\{c_{h,h}(r,r)\bar{Y}_h^2(r)+c_{h,h}(r',r')\bar{Y}_h^2(r') - 2c_{h,h}(r,r')\Yhhrrprime \right\}\\
& = & n_{[a]}^{-1}\left\{
b_{[a]}(r)S_{[a]}^2(r) + b_{[a]}(r')S_{[a]}^2(r') - S_{[a]}^2(r\text{-}r')
\right\}\\
& & + n_{[a]}^{-1}\left\{
c_{[a][a]}(r,r)\bar{Y}_{[a]}^2(r)+c_{[a][a]}(r',r')
\bar{Y}_{[a]}^2(r') - 
2c_{[a][a]}(r,r')
\Yhhrrprime \right\}.
\end{eqnarray*}
%where the last equality follows from the definition of $b_h(r)$.

Second, we calculate the covariance between two estimated subgroup average peer effects. For $a\neq a'$ and $r\neq r'\in \mathcal{R},$ 
according to Lemmas \ref{lemma:var_between_centered_outcome}--\ref{lemma:var_between_mean}, 
the covariances, $\Cov\{B_{[a]}(r),$ $B_{[a']}(r)\}$, $\Cov\{B_{[a]}(r),$ $B_{[a']}(r')\}$, $\Cov\{B_{[a]}(r'),$  $B_{[a']}(r)\}$, $\Cov\{B_{[a]}(r'),$ $B_{[a']}(r')\}$, $\Cov\{B_{[a]}(r),$ $C_{[a']}(r)\}$, $\Cov\{B_{[a]}(r),$  $C_{[a']}(r')\}$, $\Cov\{B_{[a]}(r'), C_{[a']}(r)\}$ and $\Cov\{B_{[a]}(r'), C_{[a']}(r')\}$, 
are all zero.  Therefore
the sampling covariance between $\hat{\tau}_{[a]}(r,r')$ and $\hat{\tau}_{[a']}(r,r')$ is
\begin{eqnarray*}
& & \Cov\left\{
\hat{\tau}_{[a]}(r,r'), \hat{\tau}_{[a']}(r,r') 
\right\}\\
& = & \Cov\left\{ B_{[a]}(r)+C_{[a]}(r)-B_{[a]}(r')-C_{[a]}(r'), B_{[a']}(r)+C_{[a']}(r)-B_{[a']}(r')-C_{[a']}(r')
\right\}
\\
& = & \Cov\left\{
C_{[a]}(r), C_{[a']}(r)
\right\} + \Cov\left\{
C_{[a]}(r'), C_{[a']}(r')
\right\} - \Cov\left\{
C_{[a]}(r),C_{[a']}(r')
\right\} -  \Cov\left\{
C_{[a]}(r'),
C_{[a']}(r)
\right\}\\
& = & 
(n_{[a]} n_{[a']})^{-1/2}
\left\{
c_{[a][a']}(r,r)\bar{Y}_{[a]}(r)\bar{Y}_{[a']}(r) + 
c_{[a][a']}(r',r')\bar{Y}_{[a]}(r')\bar{Y}_{[a']}(r')
\right. \\
& & \quad \quad \quad \quad \quad \quad \ 
\left.
- c_{[a][a']}(r,r')\bar{Y}_{[a]}(r)\bar{Y}_{[a']}(r') -
c_{[a][a']}(r',r)\bar{Y}_{[a]}(r')\bar{Y}_{[a']}(r)\right\},
\end{eqnarray*}
where the last equality follows from Lemma \ref{lemma:var_between_mean}. 

Third, we calculate the variance of the average peer effect estimator: 
\begin{align*}
\Var\left\{ \hat{\tau}(r,r') \right\}
 =  \Var\left\{ \sum_{a=1}^H w_{[a]} \hat{\tau}_{[a]}(r,r') \right\}
 = \sum_{a=1}^H w_{[a]}^2 \Var\left\{
\hat{\tau}_{[a]}(r,r')
\right\} + \sum_{a=1}^H\sum_{a'\neq a}w_{[a]} w_{[a']}\Cov\left\{
\hat{\tau}_{[a]}(r,r'), \hat{\tau}_{[a']}(r,r')
\right\}. 
\end{align*}
Using the variances and covariances between the subgroup average peer effect estimators, we can simplify the variance of $\hat{\tau}(r,r')$ as
\begin{eqnarray*}
\Var\left\{ \hat{\tau}(r,r') \right\}
& = & n^{-1}\sum_{a=1}^H w_{[a]} \left\{
b_{[a]}(r)S_{[a]}^2(r) + b_{[a]}(r')S_{[a]}^2(r') - S_{[a]}^2(r\text{-}r')
\right\}\\
& & + n^{-1}\sum_{a=1}^H w_{[a]} \left\{
c_{[a][a]}(r,r)\bar{Y}_{[a]}^2(r)+c_{[a][a]}(r',r')\bar{Y}_{[a]}^2(r') - 
2c_{[a][a]}(r,r')
\Yhhrrprime \right\}\\
& & + n^{-1} \sum_{a=1}^H\sum_{a'\neq a} (w_{[a]} w_{[a']})^{1/2} \left\{
c_{[a][a']}(r,r)\bar{Y}_{[a]}(r)\bar{Y}_{[a']}(r) + 
c_{[a][a']}(r',r')\bar{Y}_{[a]}(r')\bar{Y}_{[a']}(r')
\right.\\
& & \left. \quad \quad  \quad \quad  \quad \quad  \quad  \quad \quad  \quad \quad - c_{[a][a']}(r,r')\bar{Y}_{[a]}(r)\bar{Y}_{[a']}(r')  -
c_{[a][a']}(r',r)\bar{Y}_{[a]}(r')\bar{Y}_{[a']}(r)
\right\}.
\end{eqnarray*}
\end{proof}

\begin{proof}[{\bf Proof of Theorem \ref{thm:est_var_gen}}]
First, we prove that $s_{[a]}^2(r)$ defined in Section \ref{sec:est_var_gen} is unbiased for the finite population variance $S_{[a]}^2(r).$ 
Note that 
\begin{eqnarray*}
E\left\{\hat{Y}_{[a]}^2(r)\right\} & = &
E\left\{
\{n_{[a]} \pi_{[a]}(r)\}^{-1}\sum_{i:A_i=a}I(R_i=r)Y_i(r)
\times \{n_{[a]} \pi_{[a]}(r)\}^{-1}\sum_{j:A_j=a}I(R_j=r)Y_j(r)
\right\}\nonumber\\
& = & \{n_{[a]}^2 \pi_{[a]}^2(r)\}^{-1}\sum_{i\neq j: A_i=A_j=a}Y_i(r)Y_j(r)\pr(R_i=r, R_j=r) \nonumber \\
& & + \{n_{[a]}^2 \pi_{[a]}^2(r)\}^{-1}\sum_{i: A_i=a}Y_i^2(r)\pr(R_i=r, R_i=r)\nonumber\\
& = & \frac{\pi_{[a][a]}(r,r)}{n_{[a]}^2 \pi_{[a]}^2(r)}\sum_{i\neq j: A_i=A_j=a}Y_i(r)Y_j(r) + \frac{ \pi_{[a]}(r)}{n_{[a]}^2 \pi_{[a]}^2(r)}\sum_{i: A_i=a}Y_i^2(r),
\end{eqnarray*}
where the last equality follows from the definitions of $\pi_{[a]}(r)$ and $\pi_{[a][a]}(r,r)$. 
We can then simplify $E\{\hat{Y}_{[a]}^2(r)\}$ as 
\begin{eqnarray}\label{eq:Y_hat_square}
E\left\{\hat{Y}_{[a]}^2(r)\right\} & = & \frac{\pi_{[a][a]}(r,r)}{n_{[a]}^2 \pi_{[a]}^2(r)}\sum_{i: A_i=a}\sum_{j:A_j=a}Y_i(r)Y_j(r)+ \frac{\pi_{[a]}(r)- \pi_{[a][a]}(r)}{n_{[a]}^2 \pi_{[a]}^2(r)}\sum_{i: A_i=a}Y_i^2(r)\nonumber \\
& = & \frac{\pi_{[a][a]}(r,r)}{\pi_{[a]}^2(r)}\bar{Y}_{[a]}^2(r)+ \frac{\pi_{[a]}(r)- \pi_{[a][a]}(r)}{n_{[a]}^2 \pi_{[a]}^2(r)}\sum_{i: A_i=a}Y_i^2(r). 
\end{eqnarray}		
The mean of $s_{[a]}^2(r)$ is
\begin{eqnarray*}
E\left\{s_{[a]}^2(r)\right\} & = &
\frac{n_{[a]} \pi_{[a]}^2(r)}{(n_{[a]}-1) \pi_{[a][a]}(r,r)}
\left[
\frac{n_{[a]}+c_{[a][a]}(r,r)}{n_{[a]}^2 \pi_{[a]}(r)} \sum_{i:A_i=a}E\left\{I(R_i=r) \right\} Y_i^2(r)-
E\left\{\hat{Y}^2_{[a]}(r)\right\}
\right]\\
& = & 
\frac{n_{[a]} \pi_{[a]}^2(r)}{(n_{[a]}-1) \pi_{[a][a]}(r,r)}
\left[
\frac{n_{[a]}+c_{[a][a]}(r,r)}{n_{[a]}^2} \sum_{i:A_i=a}Y_i^2(r)-
E\left\{\hat{Y}^2_{[a]}(r)\right\}
\right]
\\
& = & 
\frac{n_{[a]} \pi_{[a]}^2(r)}{(n_{[a]}-1) \pi_{[a][a]}(r,r)}
\left[
\frac{(n_{[a]}-1) \pi_{[a][a]}(r,r) + \pi_{[a]}(r)}{n_{[a]}^2 \pi_{[a]}^2(r)} \sum_{i:A_i=a}Y_i^2(r)-
E\left\{\hat{Y}^2_{[a]}(r)\right\}
\right],
\end{eqnarray*}
where the last equality follows from the definition of $c_{[a][a]}(r,r)$ in \eqref{eq:c_def}. 
Using \eqref{eq:Y_hat_square}, we can further simplify $E\{s_{[a]}^2(r)\}$ as 
\begin{eqnarray*}
E\left\{s_{[a]}^2(r)\right\} & = & 
\frac{n_{[a]} \pi_{[a]}^2(r)}{(n_{[a]}-1) \pi_{[a][a]}(r,r)}
\left\{
\frac{(n_{[a]}-1) \pi_{[a][a]}(r,r) + \pi_{[a]}(r)}{n_{[a]}^2 \pi_{[a]}^2(r)} \sum_{i:A_i=a}  Y_i^2(r)
\right.\\
&& \quad \quad \quad \quad \quad \quad \quad  \quad \quad 
\left. -
\frac{\pi_{[a][a]}(r,r)}{\pi_{[a]}^2(r)}\bar{Y}_{[a]}^2(r)- \frac{\pi_{[a]}(r)- \pi_{[a][a]}(r)}{n_{[a]}^2 \pi_{[a]}^2(r)}\sum_{i: A_i=a}Y_i^2(r)
\right\}\\
& = & \frac{n_{[a]} \pi_{[a]}^2(r)}{(n_{[a]}-1) \pi_{[a][a]}(r,r)}
\left\{
\frac{\pi_{[a][a]}(r,r)}{n_{[a]} \pi_{[a]}^2(r)}\sum_{i: A_i=a}Y_i^2(r) - 
\frac{\pi_{[a][a]}(r,r)}{\pi_{[a]}^2(r)}\bar{Y}_{[a]}^2(r)
\right\} \\
& = & 
(n_{[a]}-1)^{-1}
\left\{
\sum_{i: A_i=a}Y_i^2(r) - n_{[a]}
\bar{Y}_{[a]}^2(r)
\right\}
=  S_{[a]}^2(r).
\end{eqnarray*}

Second, we prove the unbiasedness of the estimator for $\bar{Y}_{[a]}^2(r)$. For $1\leq a\leq H$ and $r\in \mathcal{R}$, accoring to \eqref{eq:Y_hat_square} and the unbiasedness of $s_{[a]}^2(r)$ for $S_{[a]}^2(r)$,
\begin{eqnarray*}
& & E\left[
\frac{n_{[a]}\hat{Y}_{[a]}(r)^2-\{b_{[a]}(r)-1\}s_{[a]}^2(r)
}{n_{[a]}+c_{[a][a]}(r,r)}
\right]\\
& = & 
\frac{n_{[a]} E\{\hat{Y}_{[a]}(r)^2\}-\{b_{[a]}(r)-1\}E\{s_{[a]}^2(r)\}
}{n_{[a]}+c_{[a][a]}(r,r)}\\
& = & \frac{n_{[a]}}{n_{[a]}+c_{[a][a]}(r,r)}
\left\{
\frac{\pi_{[a][a]}(r,r)}{\pi_{[a]}^2(r)}\bar{Y}_{[a]}^2(r)+ \frac{\pi_{[a]}(r)- \pi_{[a][a]}(r)}{n_{[a]}^2 \pi_{[a]}^2(r)}\sum_{i: A_i=a}Y_i^2(r)
\right\} - \frac{b_{[a]}(r)-1}{n_{[a]}+c_{[a][a]}(r,r)}S_{[a]}^2(r),
\end{eqnarray*}
which, based on the definitions of $c_{[a][a]}(r,r)$ and $b_{[a]}(r)$, further reduces to 
\begin{eqnarray*}
& & 
\frac{n_{[a]}}{(n_{[a]}-1) \pi_{[a][a]}(r,r)+ \pi_{[a]}(r)}
\left\{
\pi_{[a][a]}(r,r)\bar{Y}_{[a]}^2(r)+ \frac{\pi_{[a]}(r)- \pi_{[a][a]}(r)}{n_{[a]}^2}\sum_{i: A_i=a}Y_i^2(r)
\right\} \\
& & - 
\frac{\{\pi_{[a]}(r)- \pi_{[a][a]}(r,r)\}}{n_{[a]}\{(n_{[a]}-1) \pi_{[a][a]}(r,r)+ \pi_{[a]}(r)\}}
\left\{
\sum_{i:A_i=a}Y_i^2(r) - n_{[a]} \bar{Y}_{[a]}^2(r)
\right\}\\
& = & 
\frac{
n_{[a]} \pi_{[a][a]}(r,r)\bar{Y}_{[a]}^2(r)  +  \{\pi_{[a]}(r)- \pi_{[a][a]}(r,r)\}\bar{Y}_{[a]}^2(r)
}
{(n_{[a]}-1) \pi_{[a][a]}(r,r)+ \pi_{[a]}(r)}
=  \bar{Y}_{[a]}^2(r). 
\end{eqnarray*}

Third, we prove the unbiasedness of the 
estimator for $\Yhhrrprime$. For $1\leq a\leq H$ and $r\neq r'\in \mathcal{R}$,
\begin{align*}  
& \quad \  E\left\{  
\frac{n_{[a]}}{n_{[a]}-1}
\frac{\pi_{[a]}(r) \pi_{[a]}(r')}{\pi_{[a][a]}(r,r')}\hat{Y}_{[a]}(r)\hat{Y}_{[a]}(r') \right\}\nonumber\\
& = \frac{n_{[a]}}{n_{[a]}-1}
\frac{\pi_{[a]}(r)\pi_{[a]}(r')}{\pi_{[a][a]}(r,r')}E\left\{
\{n_{[a]} \pi_{[a]}(r)\}^{-1}\sum_{i:A_i=a}I(R_i=r)Y_i(r)
\times \{n_{[a]} \pi_{[a]}(r')\}^{-1}\sum_{j:A_j=a}I(R_j=r')Y_j(r')
\right\}\nonumber\\
& = \frac{n_{[a]}}{n_{[a]}-1}
\frac{1}{n_{[a]}^2 \pi_{[a][a]}(r,r')}
\sum_{i,j:A_i=A_j =a} Y_i(r)Y_j(r')\pr (R_i=r, R_j=r')\nonumber\\
& = \frac{1}{n_{[a]}(n_{[a]}-1) \pi_{[a][a]}(r,r')}
\left\{
\sum_{i\neq j:A_i=A_j=a}Y_i(r)Y_j(r') \pr (R_i=r, R_j=r') + \sum_{i:A_i=a}Y_i(r)Y_i(r') \pr (R_i=r, R_i=r')
\right\}\nonumber \\
& = \frac{1}{n_{[a]}(n_{[a]}-1) \pi_{[a][a]}(r,r')} \left\{
\pi_{[a][a]}(r,r')\sum_{i\neq j:A_i=A_j=a}Y_i(r)Y_j(r') + 0 
\right\} =  \Yhhrrprime, 
\end{align*}
where the second last equality holds because $\pr (R_i=r, R_i=r')=0$ for $r\neq r'\in \mathcal{R}$.

Fourth, we prove that, for $a\neq a'$ and $r,r'\in \mathcal{R}$, $\pi_{[a]}(r) \pi_{[a']}(r')/\pi_{[a][a']}(r,r') \cdot \hat{Y}_{[a]}(r)\hat{Y}_{[a']}(r')$ is unbiased for $\bar{Y}_{[a]}(r)\bar{Y}_{[a']}(r').$  For $a\neq a'$ and $r,r'\in \mathcal{R}$, any units ($i,j$) such that $A_i=a$ and $A_j=a'$ must satisfy $i\neq j$, and therefore
\begin{eqnarray*}
& & E\left\{
\frac{\pi_{[a]}(r) \pi_{[a']}(r')}{\pi_{[a][a']}(r,r')}
\hat{Y}_{[a]}(r)\hat{Y}_{[a']}(r')
\right\}\\
& = & \frac{\pi_{[a]}(r) \pi_{[a']}(r')}{\pi_{[a][a']}(r,r')} E\left\{
n_{[a]}^{-1}\sum_{i:A_i=a} \pi_{[a]}^{-1}(r)I(R_i=r)Y_i(r)
\times n_{[a']}^{-1}\sum_{j:A_j=a'}\pi_{a'}^{-1}(r')I(R_j=r')Y_j(r')
\right\}\\
& = & \{n_{[a]} n_{[a']}\pi_{[a][a']}(r,r')\}^{-1} \sum_{i:A_i=a}
\sum_{j:A_j=a'}
Y_i(r)Y_j(r')\pr (
R_i=r, R_j=r'
)\\
& = & (n_{[a]} n_{[a']})^{-1} \sum_{i:A_i=a}
\sum_{j:A_j=a'}Y_i(r)Y_j(r') =   \bar{Y}_{[a]}(r)\bar{Y}_{[a']}(r').
\end{eqnarray*}
\end{proof}

\section{More technical details about complete randomization}\label{sec::cr-appendix}

\begin{proof}[{\bf Proof of Proposition \ref{prop:CR_stratified}}]
We first show that the numerical implementation in Section \ref{sec:comp_rand} generates treatment assignments under complete randomization. 
For any treatment assignment $z$ with $L(z)=z$, by definition, there are $l_t$ groups with group attribute set $g_t$ for $1\leq t\leq T$. 
Thus, there are $\prod_{t=1}^{T}l_t!$ ways to arrange these $m$ groups such that the first $l_1$ groups have group attribute set $g_1$, the next $l_2$ groups have group attribute $g_2$, ..., and the last $l_T$ groups have group attribute $g_T$. 
Each of the $\prod_{t=1}^{T}l_t!$ arrangements is a group assignment from the numerical implementation, and all group assignments from the numerical implementation have the same probability.   
Therefore, 
under the numerical implementation, 
any assignment $z$ with $L(z)=z$ corresponds to $\prod_{t=1}^{T}l_t!$ realizations and
will have the same probability.

We then prove Proposition \ref{prop:CR_stratified}. From the above discussion, it is equivalent to consider the distribution of $(R_1, \ldots, R_n)$ under the group assignment generated from the numerical implementation. 
Under the numerical implementation, for each $1\leq a\leq H$ and the $n_{[a]}$ units with attribute $a$, 
the $l_1\times g_1(a)$ units in the first $l_1$ groups must receive treatment $g_1\setminus \{a\}$, 
the next $l_2\times g_2(a)$ units in the next $l_2$ groups must receive treatment $g_2\setminus \{a\}$, ..., 
and the last $l_T\times g_T(a)$ units in the last $l_T$ groups must receive treatment  $g_T\setminus \{a\}$, 
and the assignments of the $n_{[a]}$ units into these $m$ groups have the same probability. 
From the relationship between $n_{[a]r}$ and $L_t(z)g_t(a)$ in \eqref{eq:n_hr}, the first conclusion (1) in Proposition  \ref{prop:CR_stratified} holds. 
The second conclusion (2) in Proposition  \ref{prop:CR_stratified} follows directly from the independence among the group assignments for units with different attributes under the numerical implementation. 
\end{proof}

As a direct consequence of Proposition \ref{prop:CR_stratified}, we have the following results characterizing the probability law of complete randomization,  and we will use them in later proofs.

\begin{proposition}\label{prop:a_com_res}
Under Assumptions \ref{asp:sutva_pe} and \ref{asp:exres_pe}, and under
complete randomization defined in Section \ref{sec:comp_rand},  
for $1\leq a,a'\leq H$ and $r, r'\in \mathcal{R},$ we have $\pi_{[a]}(r)=n_{[a]r}/n_{[a]}$, $b_{[a]}({r})  = n_{[a]}/n_{[a]r}$, $c_{[a][a']}(r,r') = 0,$
and 
\begin{align*}
\pi_{[a][a']}(r,r')  =
\begin{cases}
\frac{n_{[a]r}n_{[a']r'}}{n_{[a]} n_{[a']}}; \\
\frac{n_{[a]r}n_{[a]r'}}{n_{[a]}(n_{[a]}-1)}; \\
\frac{n_{[a]r}(n_{[a]r}-1)}{n_{[a]}(n_{[a]}-1)}; 
\end{cases}
%\frac{n_{hr}\{n_{h'r'}-I(a=a',r=r')\}}{n_h\{n_{h'}-I(a=a')\}},\\
d_{[a][a']}(r,r')  = 
\begin{cases}
0, &\quad \quad \text{if } a\neq a';\\
\frac{n_{[a]}}{n_{[a]}-1}, &\quad \quad \text{if } a = a', r\neq r';\\
-\frac{n_{[a]}(n_{[a]}-n_{[a]r})}{(n_{[a]}-1)n_{[a]r}}, &\quad \quad \text{if } a = a', r= r'.
\end{cases}
\end{align*} 
\end{proposition}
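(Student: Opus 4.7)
The plan is to invoke Proposition \ref{prop:CR_stratified} to reduce every marginal and joint probability calculation to the combinatorics of a stratified completely randomized experiment, and then substitute into the algebraic definitions \eqref{eq:d_def}, \eqref{eq:c_def} and \eqref{eq:b_def}. First, for the marginal $\pi_{[a]}(r)$, Proposition \ref{prop:CR_stratified}(1) says that within the stratum of units with attribute $a$, $n_{[a]r}$ of the $n_{[a]}$ units receive treatment $r$ and all such assignments are equally likely; symmetry immediately yields $\pi_{[a]}(r)=n_{[a]r}/n_{[a]}$.

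Next, for the joint probability $\pi_{[a][a']}(r,r')$ I would split into three cases. When $a\neq a'$, the two units lie in different strata, and Proposition \ref{prop:CR_stratified}(2) gives independence of their treatments, so $\pi_{[a][a']}(r,r')=\pi_{[a]}(r)\pi_{[a']}(r')=n_{[a]r}n_{[a']r'}/(n_{[a]}n_{[a']})$. When $a=a'$ but $r\neq r'$, a standard two-unit sampling-without-replacement argument within the stratum of attribute $a$ gives $n_{[a]r}n_{[a]r'}/\{n_{[a]}(n_{[a]}-1)\}$; when $a=a'$ and $r=r'$, the same argument gives $n_{[a]r}(n_{[a]r}-1)/\{n_{[a]}(n_{[a]}-1)\}$.

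The formulas for $d_{[a][a']}(r,r')$ then follow by plugging these into \eqref{eq:d_def}. The $a\neq a'$ case is zero by independence. For $a=a'$ and $r\neq r'$, the ratio of the joint to the product of the marginals collapses to $n_{[a]}/(n_{[a]}-1)$, giving $d_{[a][a]}(r,r')=n_{[a]}/(n_{[a]}-1)$. For $a=a'$ and $r=r'$, the key algebraic identity is $n_{[a]}(n_{[a]r}-1)-(n_{[a]}-1)n_{[a]r}=-(n_{[a]}-n_{[a]r})$, which yields the stated formula after dividing by $(n_{[a]}-1)n_{[a]r}$ and multiplying by $n_{[a]}$.

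Finally, $c_{[a][a']}(r,r')$ and $b_{[a]}(r)$ follow by direct substitution into \eqref{eq:c_def} and \eqref{eq:b_def}. In each of the three cases for $c$ the expression telescopes to zero: for $a=a'$ and $r\neq r'$, $(1-n_{[a]}^{-1})\cdot n_{[a]}/(n_{[a]}-1)=1$ cancels the trailing $-1$; for $a=a'$ and $r=r'$, the term $(1-n_{[a]}^{-1})d_{[a][a]}(r,r)=-(n_{[a]}-n_{[a]r})/n_{[a]r}$ exactly cancels $\pi_{[a]}^{-1}(r)-1=(n_{[a]}-n_{[a]r})/n_{[a]r}$. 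Since $c_{[a][a]}(r,r)=0$, the definition of $b_{[a]}(r)$ reduces to $-(1-n_{[a]}^{-1})d_{[a][a]}(r,r)+1=(n_{[a]}-n_{[a]r})/n_{[a]r}+1=n_{[a]}/n_{[a]r}$. No step is deep; the only risk is dropping a factor of $(n_{[a]}-1)$ somewhere in the cancellations, so I would carry the arithmetic out line by line rather than rely on shortcuts.
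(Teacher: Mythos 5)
Your proposal is correct and follows essentially the same route as the paper: the paper also derives these formulas as a direct consequence of Proposition \ref{prop:CR_stratified}, using the standard marginal and joint assignment probabilities for a completely randomized experiment within each attribute stratum, independence across strata for $a\neq a'$, and then direct substitution into the definitions \eqref{eq:d_def}--\eqref{eq:b_def}. Your line-by-line cancellations (including the identity $n_{[a]}(n_{[a]r}-1)-(n_{[a]}-1)n_{[a]r}=-(n_{[a]}-n_{[a]r})$) check out, so no gap remains.
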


The formulas of $\pi_{[a]}(r)$ and $\pi_{[a][a]}(r,r')$ are standard in completely randomized experiments with multiple treatments, the formula of $\pi_{[a][a']}(r,r')$ with $a\neq a'$ follows from the independence between treatments of units with different attributes, and the formulas of $d_{[a][a']}(r,r'), c_{[a][a']}(r,r')$ and $b_{[a]}(r)$ follow from their definitions in \eqref{eq:d_def}--\eqref{eq:b_def}.

\begin{proof}[{\bf Proof of Theorem \ref{thm:clt_comp_rand}}]
We first consider the point and interval estimator for the subgroup average peer effect. 
Under complete randomization, for the $n_{[a]}$ units with attribute $a$, we are essentially conducting a complete randomized experiments with $n_{[a]r}$ units receiving treatment $r$. 
Based on Lemma \ref{lemma:represent_S_product} and the regularity condition (ii) in Condition \ref{cond:fp}, 
the finite population covariance between potential outcomes $S_{[a]}(r,r')$ has a limit. 
From \citet[][Theorem 5]{fcltxlpd2016}, $\hat{\tau}_{[a]}(r,r')$ is asymptotically Normal:
\begin{eqnarray*}
\sqrt{n_{[a]}}\left\{
\hat{\tau}_{[a]}(r,r') - \tau_{[a]}(r,r')
\right\} 
& \converged & 
%\mathcal{N}\left(0, \  \lim_{n\rightarrow \infty}\left\{\frac{n_h}{n_{hr}}S_h^2({r})+ \frac{n_h}{n_{hr'}}S_h^2({r}')  -S_{h}^2({r}\text{-}{r}')\right\}\right)\\& \sim & 
\mathcal{N}\left(
0, \lim_{n\rightarrow \infty} n_{[a]} \Var\{\hat{\tau}_{[a]}(r,r')\}
\right),
\end{eqnarray*}
where $\lim_{n\rightarrow \infty} n_{[a]} \Var\{\hat{\tau}_{[a]}(r,r')\}$ exists due to the convergence of proportions of units receiving different treatments $n_{[a]r}/n_{[a]}$ and the finite population variances of potential outcomes and individual peer effects $S^2_{[a]}(r)$ and $S^2_{[a]}(r\text{-}r').$ 

Moreover, according to \citet[][Proposition 3]{fcltxlpd2016}, the sample variance of observed outcomes in the subgroup consisting of units with attribute $a$ receiving treatment $r$, $s_{[a]}^2(r),$ is consistent for the population analogue $S_{[a]}^2(r)$, in the sense that $s_{[a]}^2(r)-S_{[a]}^2(r)\convergep 0.$ Thus, the variance estimator $\hat{V}_{[a]}(r,r')$ satisfies   
\begin{align*}
n_{[a]} \hat{V}_{[a]}(r,r') - n_{[a]} \Var\{\hat{\tau}_{[a]}(r,r')\} - S_{[a]}^2(r\text{-}r') = 
\frac{n_{[a]}}{n_{[a]r}}\left\{
s_{[a]}^2(r)-S_{[a]}^2(r)
\right\} +
\frac{n_{[a]}}{n_{[a]r'}}\left\{
s_{[a]}^2(r')-S_{[a]}^2(r')
\right\} 
\convergep  0 .
\end{align*}
Therefore, the Wald-type confidence interval for $\tau_{[a]}(r,r')$ is asymptotically conservative. 

Second, we consider the confidence interval for the average peer effect $\tau(r,r').$ Based on Slutsky's theorem, $\hat{\tau}(r,r')$ is asymptotically Normal: 
\begin{eqnarray*}
\sqrt{n}
\left\{\hat{\tau}(r,r') - \tau(r,r') 
\right\}
 =  
\sum_{a=1}^H \sqrt{w_{[a]}}\sqrt{n_{[a]}} \{\hat{\tau}_{[a]}(r,r') - \tau_{[a]}(r,r')\}
 \converged 
%\mathcal{N}\left(0, \ \lim_{n\rightarrow \infty} \sum_{a=1}^H w_{[a]} n_h \Var\{\hat{\tau}^h(r,r')\}\right)\\ \sim 
\mathcal{N}\left(
0, \ 
\lim_{n\rightarrow \infty} n \Var\{\hat{\tau}(r,r')\}
\right). 
\end{eqnarray*}
Moreover, the variance estimator $\hat{V}(r,r')$ satisfies that 
\begin{eqnarray*}
n\hat{V}({r}, {r}') - n \Var\{\hat{\tau}(r,r')\} - \sum_{a=1}^{H}w_{[a]} S_{[a]}^2(r\text{-}r')
 =  \sum_{a=1}^H  w_{[a]} 
\left\{
n_{[a]}\hat{V}_{[a]}({r}, {r}') - n_{[a]} \Var\{\hat{\tau}_{[a]}(r,r')\} - S_{[a]}^2(r\text{-}r')
\right\} \convergep 0. 
\end{eqnarray*}
Therefore, the Wald-type confidence interval for $\tau(r,r')$ is asymptotically conservative. 
\end{proof}

\begin{proof}[{\bf Proof of Theorem \ref{thm:asym_adjust_treat_effect}}]
We prove the three conclusions in Theorem \ref{thm:asym_adjust_treat_effect} as follows.

First, let $|\mathcal{R}|$ dimensional column vectors
$$
Y_i(\mathcal{R}) = (Y_i(r_1), \ldots, Y_i(r_{\mathcal{R}}) )^\top,\quad 
\bar{Y}_{[a]}(\mathcal{R}) = 
(\bar{Y}_{[a]}(r_1), \ldots, \bar{Y}_{[a]}(r_{|\mathcal{R}|}))^\top , \quad 
\hat{Y}_{[a]}(\mathcal{R})=(\hat{Y}_{[a]}(r_1), \ldots, \hat{Y}_{[a]}(r_{|\mathcal{R}|}))^\top 
$$
consist of unit $i$'s all potential outcomes, all subgroup average potential outcomes, and all subgroup average potential outcome estimators, respectively. Then 
$$
\theta_i(\mathcal{R}) = \Gamma Y_i(\mathcal{R}), \quad 
\theta_{[a]}(\mathcal{R}) = \Gamma \bar{Y}_{[a]}(\mathcal{R}), \quad 
\hat{\theta}_{[a]}(\mathcal{R}) =  \Gamma \hat{Y}_{[a]}(\mathcal{R}) .
$$
Based on the equivalence relationship in Proposition \ref{prop:CR_stratified} and the variance formula in \citet[][Theorem 3]{fcltxlpd2016}, the sampling covariance matrix of $\hat{Y}_{[a]}(\mathcal{R})$ under complete randomization is 
\begin{eqnarray*}
\Cov\{\hat{Y}_{[a]}(\mathcal{R})\} = 
\text{diag}\left\{
\frac{S^2_{[a]}(r_1)}{n_{[a]r_1}}, \ldots, 
\frac{S^2_{[a]}(r_{|\mathcal{R}|})}{n_{[a]r_{|\mathcal{R}|}}}
\right\}
- 
\frac{1}{n_{[a]}(n_{[a]}-1)}
\sum_{i:A_i=a}
\{Y_i(\mathcal{R})-\bar{Y}_{[a]}(\mathcal{R})\}
\{Y_i(\mathcal{R})-\bar{Y}_{[a]}(\mathcal{R})\}^\top,
\end{eqnarray*}
which implies the sampling covariance of 
$\hat{\theta}_{[a]}(\mathcal{R}) =  \Gamma \hat{Y}_{[a]}(\mathcal{R})$.

Second, the regularity conditions of Theorem \ref{thm:asym_adjust_treat_effect} and \citet[][Theorem 5]{fcltxlpd2016} immediately imply the asymptotic Normality of
$\sqrt{n_{[a]}}\{\hat{Y}_{[a]}(\mathcal{R})-\bar{Y}_{[a]}(\mathcal{R})\}$, which further implies the asymptotic Normality of 
$\sqrt{n_{[a]}}\{\hat{\theta}_{[a]}(\mathcal{R})-\theta_{[a]}(\mathcal{R})\}$.

Third, according to \citet[][Proposition 3]{fcltxlpd2016}, $s_{[a]}^2(r)$ is consistent for $S_{[a]}^2(r)$. Moreover, the second term in the covariance formula of $\hat{\theta}_{[a]}(\mathcal{R})$ is a positive semi-definite matrix. Therefore, the Wald-type confidence set using variance estimator \eqref{eq:est_var_adj_treat_effect} is asymptotically conservative.

Note that the second term in the covariance formula of $\hat{\theta}_{[a]}(\mathcal{R})$ is actually the finite population covariance matrix of 
$
(\theta_i(r_1), \theta_i(r_2), \ldots, \theta_i(r_{|\mathcal{R}|}))^\top
$
for units with attribute $a$ scaled by $n_{[a]}^{-1}$, and these centered individual potential outcomes $\theta_i(r)$'s can be represented as linear functions of the individual peer effects $\tau_i(r,r')$'s. Thus, when the individual peer effects for units with the same attribute are additive, 
the finite population covariance matrix of 
$
(\theta_i(r_1), \theta_i(r_2), \ldots, \theta_i(r_{|\mathcal{R}|}))^\top
$
for units with attribute $a$ are zero, 
and 
the Wald-type confidence sets for $\theta_{[a]}(\mathcal{R})$ become asymptotically exact. 
\end{proof}

\section{More on random partitioning}\label{app:more_on_rand_part}
In this section, we discuss details of random partitioning. In particular, we give the formulas for $\pi_{[a]}(r)$ and $\pi_{[a][a']}(r,r')$, based on which we can get the formulas for $d_{[a][a']}(r,r'), c_{[a][a']}(r,r')$ and  $b_{[a]}(r)$, 
the unbiased point estimators for peer effects, 
the sampling variances of peer effects estimators,  and the corresponding variance estimators.

Each $r\in\mathcal{R}$ is a set containing $K$ unordered but replicable elements from $\{1,2,\ldots,H\}$. 
Let $r(a)$ be the number of elements in set $r$ that are equal to $a$.
If $a$ itself belongs to $r$, let $r\setminus\{a\}$ be the set containing the remaining $K-1$ elements, by deleting an element $a$ from the set $r.$
\begin{theorem}\label{prop:a_rand_partition}
Under random partitioning, for $1\leq a,a'\leq H$ and $r,r'\in \mathcal{R}$, the probability that a unit $i$ with attribute $A_i=a$ receives treatment $r$ is
\begin{align}\label{eq:a_h_rand_partition}
\pi_{[a]}(r) = \pr(R_i=r)
= 
\frac{\binom{n_{[a]}-1}{r(a)} \prod_{1\leq q\leq H, q\neq a} \binom{n_{[q]}}{r(q)}}{
\binom{n-1}{K}
}, 
\end{align}
and the probability that two different units ($i\neq j$) with attributes $A_i=a$ and $A_j=a'$ receive treatments $r$ and $r'$ is
\begin{align}\label{eq:a_h_hprime_r_rprime_random_partition}
\pi_{[a][a']}(r,r') = \pr(R_i=r, R_j=r')
= \frac{K}{n-1} \cdot \psi_{[a][a']}(r,r') + \frac{n-K-1}{n-1} \cdot \phi_{[a][a']}(r,r'), 
\end{align}
where 
\begin{align}\label{eq:psi}
\psi_{[a][a']}(r,r') & = 
\begin{cases}
		\frac{
		\binom{n_{[a]}-1}{r(a)}\binom{n_{[a']}-1}{r(a')-1}
		\prod_{1\leq q\leq H, q\neq a,a'}\binom{n_{[q]}}{r(q)}}{\binom{n-2}{K-1}}, & \text{if } a'\in r, a\in r', r\setminus\{a'\}=r'\setminus\{a\},  a\neq a',\\
		\frac{
		\binom{n_{[a]}-2}{r(a)-1}
		\prod_{1\leq q\leq H, q\neq a}\binom{n_{[q]}}{r(q)}}{
		\binom{n-2}{K-1}}, & \text{if } a'\in r, a\in r', r\setminus\{a'\}=r'\setminus\{a\}, a= a', \\
		0, & \text{otherwise},
\end{cases}
\end{align}
and
\begin{align}\label{eq:phi}
\phi_{[a][a']}(r,r') & = 
\begin{cases}
		\frac{
			\binom{n_{[a]}-1}{r(a)}\binom{n_{[a]}-1-r(a)}{r'(a)}
			\binom{n_{[a']}-1}{r(a')}\binom{n_{[a']}-1-r(a')}{r'(a')}
			\prod_{1\leq q\leq H, q\neq a,a'}
			\left\{\binom{n_{[q]}}{r(q)}
			\binom{n_{[q]}-r(q)}{r'(q)}\right\}
		}{\binom{n-2}{K}\binom{n-2-K}{K}}, & \text{if } a\neq a',\\
		\frac{
			\binom{n_{[a]}-2}{r(a)}\binom{n_{[a]}-2-r(a)}{r'(a)}
			\prod_{1\leq q \leq H, q\neq a}\left\{\binom{n_{[q]}}{r(q)}
			\binom{n_{[q]}-r(q)}{r'(q)}\right\}}
		{\binom{n-2}{K}\binom{n-2-K}{K}}, & \text{if } a=a'.
\end{cases}
\end{align}
\end{theorem}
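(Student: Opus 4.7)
\bigskip

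\noindent\textbf{Proof proposal.}
The plan is to exploit the symmetry of random partitioning together with a conditioning argument on whether units $i$ and $j$ share a group. Throughout, by the uniform distribution over partitions, for any fixed unit $i$ its $K$ peers are a uniformly random size-$K$ subset of the remaining $n-1$ units, and for $i\neq j$ the two peer sets $Z_i, Z_j$ have a joint law determined by the partition alone.

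For $\pi_{[a]}(r)$, I will directly count configurations. Condition on $A_i=a$; the peers of $i$ are uniform over $\binom{n-1}{K}$ subsets. A subset yields peer attribute set $r$ iff it contains exactly $r(a)$ of the remaining $n_{[a]}-1$ units of attribute $a$ and exactly $r(q)$ of the $n_{[q]}$ units of each attribute $q\neq a$. Multiplying the multinomial-style counts and dividing by $\binom{n-1}{K}$ gives \eqref{eq:a_h_rand_partition}.

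For $\pi_{[a][a']}(r,r')$ with $i\neq j$, I will split via
\[
\pi_{[a][a']}(r,r') = \pr(i,j \text{ same group})\,\psi_{[a][a']}(r,r') + \pr(i,j \text{ different groups})\,\phi_{[a][a']}(r,r'),
\]
where $\psi$ and $\phi$ are the conditional joint probabilities. Since $i$'s peer set is uniform over size-$K$ subsets of the other $n-1$ units, $\pr(j\in Z_i)=K/(n-1)$, giving the weights $K/(n-1)$ and $(n-K-1)/(n-1)$.

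\emph{Same-group case.} If $i$ and $j$ are in the same group, then $j\in Z_i$ and $i\in Z_j$, and the remaining $K-1$ members of their common group are shared; hence $R_i$ must contain $a'=A_j$, $R_j$ must contain $a=A_i$, and $r\setminus\{a'\}=r'\setminus\{a\}$, else $\psi_{[a][a']}(r,r')=0$. When these constraints hold, conditioning on $\{i,j\}$ being in the same group, the remaining $K-1$ groupmates are a uniform size-$(K-1)$ subset of the $n-2$ other units, so I count subsets compatible with the shared attribute multiset $r\setminus\{a'\}$. The count splits according to whether $a=a'$ (one unit of attribute $a$ excluded beyond $i$, namely $j$) or $a\neq a'$ (only $i$ excluded from attribute $a$ and only $j$ excluded from attribute $a'$), yielding the two cases in \eqref{eq:psi}.

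\emph{Different-group case.} Conditioning on $i$ and $j$ being in different groups, by symmetry $(Z_i,Z_j)$ is uniformly distributed over pairs of disjoint size-$K$ subsets of the $n-2$ units other than $i,j$; the number of such ordered pairs is $\binom{n-2}{K}\binom{n-2-K}{K}$. Favorable configurations are those where $Z_i$ has attribute multiset $r$ and $Z_j$ has attribute multiset $r'$, drawn sequentially from the available pools. The pools depend on whether $a=a'$: when $a\neq a'$, attribute $a$ has $n_{[a]}-1$ available (excluding $i$) and attribute $a'$ has $n_{[a']}-1$ (excluding $j$); when $a=a'$, attribute $a$ has $n_{[a]}-2$ available. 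Multiplying the counts for $Z_i$ and then for $Z_j$ (which draws from the residual pool) produces \eqref{eq:phi}.

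The main obstacle will be bookkeeping in the same-group case: verifying that the combinatorial constraint $r\setminus\{a'\}=r'\setminus\{a\}$ is not only necessary but characterizes the support, and keeping the excluded units straight in the two subcases $a=a'$ and $a\neq a'$. Once the cases are set up cleanly, the rest is multinomial counting divided by the appropriate total.
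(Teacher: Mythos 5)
Your proposal is correct and mirrors the paper's own argument: the same uniform-peer-set counting for $\pi_{[a]}(r)$, the same decomposition of $\pi_{[a][a']}(r,r')$ by conditioning on $\{j\in Z_i\}$ with weights $K/(n-1)$ and $(n-K-1)/(n-1)$, and the same case-by-case combinatorial counts for $\psi$ and $\phi$ (including the support condition $a'\in r$, $a\in r'$, $r\setminus\{a'\}=r'\setminus\{a\}$ and the $a=a'$ versus $a\neq a'$ bookkeeping). No substantive differences to flag.
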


We give some intuition to explain the formulas of $\pi_{[a]}(r)$ and $\pi_{[a][a']}(r,r')$ under random partitioning. 
First, in \eqref{eq:a_h_rand_partition}, 
the denominator $\binom{n-1}{K}$ denotes the total number of possible peers for unit $i$, and
the 
numerator denotes the number of possible peers such that unit $i$ receives treatment $r$. 
Second, 
for any two different units $i$ and $j$ with attributes $a$ and $a'$, we consider two cases according to whether units $i$ and $j$ are in the same group or not. The coefficients $K/(n-1)$ and $(n-K-1)/(n-1)$ in \eqref{eq:a_h_hprime_r_rprime_random_partition} are the probabilities that units $i$ and $j$ are in the same group and not in the same group, respectively. 
Correspondingly,  
$\psi_{[a][a']}(r,r')$ and $\phi_{[a][a']}(r,r')$ represent the conditional probabilities that units $i$ and $j$ receive treatments $r$ and $r'$ given that $i$ and $j$ are and are not in the same group. 

When units $i$ and $j$ are in the same group, 
they have $K-1$ common peers, and therefore, 
the treatment $R_i$ of unit $i$ consists of unit $j$'s attribute and the $K-1$ common peers' attributes, and the treatment $R_j$ consists of unit $i$'s attribute and the $K-1$ common peers' attributes. Therefore, $\psi_{[a][a']}(r,r')$ is positive if and only if $a'\in r, a\in r'$ and  $r\setminus\{a'\}=r'\setminus\{a\}$. 
In \eqref{eq:psi}, when $\psi_{[a][a']}(r,r')\neq 0$, 
the denominator $\binom{n-2}{K-1}$  counts the number of possible $K-1$ units in the same group as units $i$ and $j$, and the numerator counts the number of possible $K-1$ units in the same group as units $i$ and $j$ such that units $i$ and $j$ receive treatments $r$ and $r'$. 
In \eqref{eq:phi}, the denominator 
$\binom{n-2}{K}\binom{n-2-K}{K}$ counts the number of possible peers for units $i$ and $j$, and the numerator counts the number of possible peers for units $i$ and $j$ such that units $i$ and $j$ receive treatments $r$ and $r'$.

\begin{proof}[{\bf Proof of Theorem  \ref{prop:a_rand_partition}}]
First, we calculate $\pi_{[a]}(r)$. 
Assume that unit $i$ has attribute $A_i=a.$ 
The total number of possible peers of unit $i$ is $\binom{n-1}{K}$, and the total number of possible peers of unit $i$ such that unit $i$ receives treatment $r$ is 
\begin{align*}
\binom{n_{[a]}-1}{r(a)} \prod_{1\leq q\leq H, q\neq a} \binom{n_{[q]}}{r(q)},
\end{align*}
where $\binom{n_{[a]}-1}{r(a)}$ counts the number of possible choice of the peers of unit $i$ with attribute $a$, and $\binom{n_{[q]}}{r(q)}$ counts the number of possible choice of the peers of unit $i$ with attribute $q\neq a$. 
Under random partitioning, any other $K$ units have the same probability to be in the same group as unit $i.$ 
Therefore, \eqref{eq:a_h_rand_partition} holds.

Second, we calculate $\pi_{[a][a']}(r,r').$ Assume that $i\neq j$ are two units with attributes $A_i=a$ and $A_j=a'$. Under random partitioning, we can decompose the probability $\pr (R_i=r, R_j = r')$ into two parts according to whether units $i$ and $j$ are in the same group:
\begin{align}\label{eq:cal_a_random_partition}
 \pi_{[a][a']}(r,r')  & = \pr (R_i=r, R_j = r')  = \pr (j \in Z_i, R_i=r, R_j = r') + \pr (j \notin Z_i, R_i=r, R_j = r') \nonumber\\
& = 
\pr (j \in Z_i) \pr ( R_i=r, R_j = r' \mid j \in Z_i) + \pr (j \notin Z_i) \pr ( R_i=r, R_j = r' \mid j \notin Z_i).
\end{align}
Because any other $K$ units have the same probability to be the peers of unit $i$,  by symmetry, $\pr (j \in Z_i)=K/(n-1)$ and $\pr (j \notin Z_i)=1-K/(n-1).$
We then consider the two conditional probabilities
$
\psi_{[a][a']}(r,r') \equiv \pr ( R_i=r, R_j = r' \mid j \in Z_i), 
$
and
$
\phi_{[a][a']}(r,r') \equiv 
\pr ( R_i=r, R_j = r' \mid j \notin Z_i).
$

When units $i$ and $j$ are in the same group, units $i$ and $j$ are peers of each other and they have $K-1$ common peers. 
Therefore, 
they have positive probability to receive treatments $r$ and $r'$ if and only if $r$ and $r'$ satisfy $a'\in r$, $a\in r'$, and $r\setminus\{a'\}=r'\setminus\{a\}$. 
When $a'\in r$, $a\in r'$, and $r\setminus\{a'\}=r'\setminus\{a\}$, 
the total number of possible peers of units $i$ and $j$ such that units $i$ and $j$ receive treatments $r$ and $r'$ is 
\begin{align*}
\begin{cases}
\binom{n_{[a]}-1}{{r}(a)}\binom{n_{[a']}-1}{{r}(a')-1}
\prod_{1\leq q\leq H, q\neq a,a'}\binom{n_{[q]}}{{r}(q)}, & \text{if } a\neq a',\\
\binom{n_{[a]}-2}{{r}(a)-1}
\prod_{1\leq q\leq H, q\neq a}\binom{n_{[q]}}{{r}(q)}, & \text{if } a= a'.
\end{cases}
\end{align*}
Note that the total number of possible peers of units $i$ and $j$ is $\binom{n-2}{K-1}$. 
Because any possible peers of units $i$ and $j$ have the same probability, by symmetry, 
\begin{align*}
\psi_{[a][a']}(r,r') & =  \pr ( R_i=r, R_j = r' \mid j \in Z_i)\\
& = 
\begin{cases}
\binom{n-2}{K-1}^{-1}
\binom{n_{[a]}-1}{r(a)}\binom{n_{[a']}-1}{r(a')-1}
\prod_{1\leq q\leq H, q\neq a,a'}\binom{n_{[q]}}{{r}(q)}, & \text{if } a'\in r, a\in r', r\setminus\{a'\}=r'\setminus\{a\}, a\neq a',\\
\binom{n-2}{K-1}^{-1}
\binom{n_{[a]}-2}{r(a)-1}
\prod_{1\leq q\leq H, q\neq a}\binom{n_{[q]}}{{r}(q)}, & \text{if } a'\in r, a\in r', r\setminus\{a'\}=r'\setminus\{a\},  a= a', \\
0, & \text{otherwise.}
\end{cases}
\end{align*}

When units $i$ and $j$ are not in the same group, the total number of their possible peers is 
$
\binom{n-2}{K}\binom{n-2-K}{K},
$
and the total number of their possible peers such that units $i$ and $j$ receive treatments $r$ and $r'$ is
\begin{align*}%\label{eq:num_satisfy_not}
\begin{cases}
\binom{n_{[a]}-1}{r(a)}\binom{n_{[a]}-1-r(a)}{r'(a)}
\binom{n_{[a']}-1}{r(a')}\binom{n_{[a']}-1-r(a')}{r'(a')}
\prod_{1\leq q\leq H, q\neq a,a'}
\left\{\binom{n_{[q]}}{r(q)}
\binom{n_{[q]}-r(q)}{r'(q)}\right\}, & \text{if } a\neq a',\\
\binom{n_{[a]}-2}{r(a)}\binom{n_{[a]}-2-r(a)}{r'(a)}
\prod_{1\leq q \leq H, q\neq a}\left\{\binom{n_{[q]}}{r(q)}
\binom{n_{[q]}-r(q)}{r'(q)}\right\}, & \text{if } a=a'.
\end{cases}
\end{align*}
Because any possible peers of units $i$ and $j$ have the same probability, 
by symmetry, 
\begin{align*}
\phi_{[a][a']}(r,r') & =  \pr ( R_i=r, R_j = r' \mid j \notin Z_i)\\
& = 
\begin{cases}
\frac{
\binom{n_{[a]}-1}{r(a)}\binom{n_{[a]}-1-r(a)}{r'(a)}
\binom{n_{[a']}-1}{r(a')}\binom{n_{[a']}-1-r(a')}{r'(a')}
\prod_{1\leq q\leq H, q\neq a,a'}
\left\{\binom{n_{[q]}}{r(q)}
\binom{n_{[q]}-r(q)}{r'(q)}\right\}
}{\binom{n-2}{K}\binom{n-2-K}{3}}, & \text{if } a\neq a',\\
\frac{
\binom{n_{[a]}-2}{r(a)}\binom{n_{[a]}-2-r(a)}{r'(a)}
\prod_{1\leq q \leq H, q\neq a}\left\{\binom{n_{[q]}}{r(q)}
\binom{n_{[q]}-r(q)}{r'(q)}\right\}}
{\binom{n-2}{K}\binom{n-2-K}{K}}, & \text{if } a=a'.
\end{cases}
\end{align*}
We have computed the four terms in \eqref{eq:cal_a_random_partition}, and Theorem  \ref{prop:a_rand_partition} follows directly. 
\end{proof}

\end{document}